\tikzset{main node/.style={circle,fill=blue!20,draw,minimum size=2cm,inner sep=0pt},}
\tikzset{other node/.style={circle,fill=blue!20,draw,minimum size=1.5cm,inner sep=0pt},}
\tikzset{smaller dot/.style={fill=black,circle,scale=0.5}}
\newtheorem{prop}{Proposition}
\newtheorem{coro}{Corollary}
\newtheorem{defi}{Definition}
\newtheorem{lem}{Lemma}
\newcommand\twoheaduparrow{\mathrel{\rotatebox[origin=c]{90}{$\twoheadrightarrow$}}}
\newcommand\twoheaddownarrow{\mathrel{\rotatebox[origin=c]{270}{$\twoheadrightarrow$}}}
\newcommand\blfootnote[1]{%
  \begingroup
  \renewcommand\thefootnote{}\footnote{#1}%
  \addtocounter{footnote}{-1}%
  \endgroup
}
\title{Computation as uncertainty reduction:\\ a  simplified order-theoretic framework}
\author{Pedro Hack, Daniel A. Braun, Sebastian Gottwald}
\date{ }
\begin{document}

\maketitle

\begin{abstract}
    Although there is a somewhat standard formalization of computability on countable sets given by Turing machines, the same cannot be said about uncountable sets. Among the approaches to define computability in these sets, order-theoretic structures have proven to be useful. Here, we discuss the mathematical  structure needed to define computability using order-theoretic concepts. In particular, we introduce a more general framework and discuss its limitations compared to the previous one in domain theory. We expose four features in which the stronger requirements in the domain-theoretic structure allow to improve upon the more general framework: computable elements, computable functions, model dependence of computability and complexity theory.  Crucially, we show computability of elements in uncountable spaces can be defined in this new setup, and argue why this is not the case for computable functions. Moreover, we show the stronger setup diminishes the dependence of computability on the chosen order-theoretic structure and that, although a suitable complexity theory can be defined in the stronger framework and the more general one posesses a notion of computable elements, there appears to be no proper notion of element complexity in the latter. 
\end{abstract}

\begin{comment}
\blfootnote{Email adresses:\\
Pedro Hack\\
pedro.hack@uni-ulm.de \\
Daniel A. Braun \\
daniel.braun@uni-ulm.de \\
Sebastian Gottwald \\
sebastian.gottwald@uni-ulm.de
}
\end{comment}

%\keywords{uncountable computability, uncertainty reduction, domain theory}

%\subject{03D75, 68R05}

\section{Introduction}

The pioneering work of Turing \cite{turing1937computable,turing1938computable} has laid the foundations to a formal approach to computability on the natural numbers based on an idealized machine: the Turing machine \cite{rogers1987theory}. Several other attempts have been made to formalize computability on natural numbers, like lambda calculus \cite{barendregt1984lambda}, or universal register machines \cite{cutland1980computability}, but they all have been shown to be equivalent to Turing's approach. It is, however, not immediately clear how to translate computability from countable to uncountable spaces such as the real numbers. Several such attempts have been made \cite{weihrauch2012computable,weihrauch2012computability,abramsky1994domain}, but so far no canonical way of approaching computability on uncountable sets has been established. In consequence, for instance, there is no consensual formal definition of algorithm for the real numbers and, thus, it remains unclear what can be computed there. Actually, computability varies significantly from one model to another, in contrast with the situation for countable sets. For example, one can distinguish computability on the real numbers when it is introduced using their decimal (or any other base) representation with the one derived from the so-called \emph{negative-digit representation} \cite{di1996real,wiedmer1980computing,avizienis1961signed,boehm1987constructive}. The difference between them is significant since, in the former, which was introduced by Turing himself \cite{turing1937computable}, a simple operation like multiplication by $3$ is not computable, while it is computable in the latter \cite{di1996real}. Notice this matter is not only theoretical, but practical, as a wide variety of uncountable structures where computability notions are needed are constantly introduced, for example, in physics \cite{pour2017computability,edalat1999computable,edalat1997domains}.

% Turing machines have been used to introduce computability on an arbitrary countable set $X$ using a \emph{numbering}, a surjective partial map $\alpha: \mathbb{N} \rightarrow X$ \cite{ershov1999theory,badaev2000theory}. Even though everyday computations are carried out on uncountable (usually real) spaces, for example to solve differential equations or to calculate integrals, 

% However, as everyday scientific computability is carried out on uncountable spaces, for example to solve a differential equations or to calculate integrals, these approaches are insufficient and a formal notion of computability on uncountable spaces is needed. 

Among the most extended approaches to computability on uncountable spaces are \emph{type-2 theory of effectivity (TTE)}, where a variation of the Truing machine is used to formalize computability \cite{weihrauch2012computability,kreitz1985theory}, and \emph{domain theory}, where order-theoretical notions are used to extend computability from Turing machines to uncountable sets \cite{scott1970outline,ershov1972computable,abramsky1994domain,keimel2017domain}, in particular, to the real numbers \cite{edalat1999domain,di1996real}.

The introduction of the order structure in domain theory is usually done relying on a somewhat intuitive picture consisting of the two notions of \emph{convergence} and \emph{approximation} that are captured by two order structures, $\preceq$ and $\ll$, respectively \cite{scott1970outline,abramsky1994domain,martin2000foundation}. Notice, both of these orderings can be considered as two different degrees of \emph{information} of one element in relation to another. Even though this results in a rich computability framework, dealing with both of these structures at the same time hides the separate effects of each order structure. Here, we explore a simplified and more general approach by only considering the convergence ordering $\preceq$, which we refer to as \emph{uncertainty reduction} or \textit{information}. The resulting computability framework trades order structure for a more limited notion of computability, which at the same time makes the fundamental underpinnings of domain theory more transparent.

% One might wonder whether the same intuition could be captured by a more general approach relying on less rigid assumptions, in particular, whether the introduction of the second order structure $\ll$ is in fact necessary. Here, 
% thereby resulting in a study of the fundamental underpinnings of domain theory.
% and as a result come up with a more general framework along similar intuitive lines. 
% This helps to uncover formal reasons why domain theory requires both order structures, whereas our more general approach trades order structure for a slightly more limited notion of computability. 

More specifically, we begin, in Section \ref{newton}, developing an intuition through a simple algorithm: the bisection method. Right after, in Section \ref{order in compu}, we introduce formally the more general framework. In Section \ref{uniform compu}, we recall the order-theoretical approach in domain theory. We finish, in Section \ref{comparing apporaches}, differentiating the properties of computability in both approaches through four criteria: computable elements, computable functions, model dependence of computability and complexity.
%We finish, in Section \ref{complexity}, introducing complexity notions in domain theory.

\section{The intuitive picture: the bisection method}
\label{newton}

Before formally defining a general order-theoretic framework that allows the introduction of computability on uncountable spaces, we give some intuition through the example of the bisection method. We formally develop, in the following section, the computational framework presented here.

Assume we intend to use the bisection method to calculate the single zero $x$ of, say, a polynomial with rational coefficients $p: \mathbb{R} \to \mathbb{R}$. The algorithm begins with two rational points where $p$ takes a different sign and evaluates the polynomial at their midpoint $m$. If $p(m)=0$, then the algorithm stops and outputs $m$. Otherwise, it takes $m$ and one of the previous rational points, the one where $p$ has a different sign than $p(m)$, and repeats the process. Now, we could say, $x$ is computable  if there exists such a $p$ whose single zero is $x$, since we can apply the bisection method to $p$ and obtain an arbitrarily precise description of $x$.
%such that the algorithm stops eventually.
Note that two features are fundamental for this assertion: first, we can write the instructions for the bisection method using finite space with all the included operations being doable in finite time, and, second, the bisection method is a process of uncertainty reduction regarding $x$,
%This is the case because some uncertainty reduction about the value of $x$ is being performed during the process,
that is, we begin knowing that $x$ belongs to a certain interval with rational endpoints and, at every step of the process, we reduce the length of the interval by half (or, in the best scenario, reduce the uncertainty completely and obtain $x$). Importantly, the degree of uncertainty after every step can be compared to that of another step by simply comparing the length of the intervals where the following step would be performed. 

However, instead of simply considering the zeros of rational polynomials, we would like to have a more general theory of computation on uncountable spaces. Just like in the bisection method, this requires a notion of finite instructions and of uncertainty reduction along the computation. We show formally how this can be done in the following section, where we introduce a new order-theoretic framework to this end. 

% Let us, however, briefly comment now how we cope with both these features.

Let us, however, first reexpress the example regarding the bisection method in a way that can then be easily extended to the general case. The outcome of each computation step in the bisection method results in an interval $I_i = [a_i,b_i]$ with rational endpoints $a_i,b_i\in\mathbb Q$. Thus, here, a computation results in a sequence inside the set $Y$ of all compact intervals on the real line. Actually, by construction, the bisection method will always result in a sequence $(I_i)_i$ inside the smaller subset $B\subset Y$ of intervals with rational endpoints. There are two important properties of these sequences, which we will describe in the following.

The sequences $(I_i)_i\subset B$ corresponding to an application of the bisection method have a \emph{finite description}, both in the sense that each element $I_i$ of the sequence is given by its rational endpoints, which themselves have a finite description, and, moreover, in the sense that the sequence $(I_i)_i$ of such intervals is generated by using finite instructions (the bisection method). The crucial property for generalizing this to arbitrary spaces in the next section is that there exists an \textit{effectively calculable} (c.f. Definition \ref{def:finite map} and the discussion afterwards) one-to-one map $\alpha:\mathbb N\to B$, which, in the case when $B$ consists of intervals with rational endpoints, can be easily constructed from the maps that classically have been used to show that the rational numbers are countable. In general, such a map will not only provide the finite description of the elements of $B$ but also translates the finite description of sequences in $\mathbb N$ (given by Turing machines) to sequences in $B$.  

The other crucial property of the sequences $(I_i)_i$ that allows to decide about computability of a number $x\in\mathbb R$ in the bisection method is that of \textit{convergence}, \emph{information}, or \textit{uncertainty reduction}, which is the property that $I_{i+1}\subseteq I_i$ for all $i$. In particular, any $x\in X \coloneqq \mathbb R$ can be identified with the singleton $[x,x]\in Y$, which can be produced by the bisection method if it is the infimum of some sequence $(I_i)_i$ in $B$, $[x,x] = \inf_i I_i$, in the sense of set inclusion. This means, we are relating the sequences $(I_i)_i$ in $B$ with the set of singletons, and thus with $X$, by means of the partial map $f:Y\to X, [x,x] \mapsto x$ (\textit{partial} means that it is not necessarily defined on all of $Y$). The question of whether a certain element $y \in Y$ can be computed or not, and, hence, whether some $x \in X$ can be computed, is therefore not only related to how rich $B$ is, in the sense that whether $B$ contains a sequence with a finite description that has $y$ as infimum, but also to how rich the ordering is itself in order to produce non-trivial infima.

Now, in order to make the jump to the general case, we essentially need to replace:
\begin{itemize}
\item $X$, $Y$ by general sets where we define computabilty (with requirements to be specified in the next section),
\item the partial map $f$ from $Y$ to $\mathbb R$, here $f\big([x,x]\big)= x$, by a general surjective partial map $\rho:Y\to X$,
\item $B$ by a more general countable subset of $Y$, a so-called (weak) basis, containing sequences $(b_i)_i$ that have a finite discription and to which we can associate elements $y \in Y$ such that $\rho(y)$ is defined.
\item the finite instructions used to generate sequences in $B$ (the bisection method) by Turing machines in order to specify which sequences in $\mathbb N$ can be finitely described, and translate them to $B$ using an effectively calculable one-to-one map $\alpha:\mathbb N\to B$,
\item (inverted) set inclusion $\supseteq$ by a general partial order $\preceq$ on $Y$ that specifies the reduction of uncertainty, and thus determines which points in $Y$ are suprema of sequences in $B$.
\end{itemize}
Actually, in the general case, we are not requiring uncertainty reduction in every computational step, but only that it is eventually reduced, i.e. for all $i,j$ there exists some $k$ such that $b_i\preceq b_k$ and $b_j\preceq b_k$, for a sequence $(b_i)_i\subset B$. Moreover, unlike $f: Y \to \mathbb R$, we allow for the elements of $X$ to be associated to more than one element in $Y$ through $\rho$.

In summary, we say that $x\in X$ is computable if there exists some $y \in Y$ such that $x=\rho(y)$, with $\rho:Y\to X$ being a surjective partial map, and $y$ is the supremum of some $(b_i)_i\subseteq B$ that eventually reduces uncertainty in the sense of a partial order $\preceq$ and has a finite discription, i.e., that $\alpha^{-1}((b_i)_i)$ is the output of some Turing machine.

% Computability in
% \begin{itemize}
% \item $B$ (finite description of endpoints)

% \item $f(X)\subset Y$ () (need computability in $B$)

% \end{itemize}

% we have

% * injective map $f:X\to Y$

% Finite instructions:

% * computable elements in $B\subset Y$ (suffices to have a calculable $\alpha:\mathbb N\to B$)

% * computable subsets/sequences in $B$ (need Turing machines to define computable subsets of $N$ and then $\alpha$ to translate it to $B$)

% TOGETHER with uncertainty reduction:

% * computability in $f(X)\subset Y$ (not nec in $B$): need finite instructions AND uncertainty reduction

% need to relate subsets of $B$ with elements of $f(X)$ (note: $B$ has to be large enough to do this $\rightarrow$ basis for general picture, sequences need to be able to  discriminate elements of $f(X)$) $B$ are weak bases in the general picture

% * transition from sequences in $B$ to elements in $f(X)$: introduce $\preceq$ into $Y \supset B \cup f(X)$ (: set inclusion) in order to define supremum of sequence (Newton: singleton), so $y\in f(X)$ is computable iff $y = \sup_i I_i$ for some $(I_i)_i \subset B$.

% * DCPO: $(Y,\preceq)$ (a general term, here they have to have countable bases), note: scott topology makes (non-weak) bases to the topological bases from topology

\section{Computability via order theory}
\label{order in compu}

Let $X$ be a set where we intend to introduce computability. In order to do so, we will define a structure which carries computability from Turing machines, \emph{directed complete partial orders} with an \emph{effective weak basis}, and then translate computability from a representative $P$ of our structure to $X$ via a surjective partial map $\rho: P \rightarrow X$. We will first define the structure with its computability properties and only return to $\rho$ at the end of this section.

%(DEFINE PARTIAL COMPUTABLE FUNCTIONS)
Before introducing directed complete partial orders, we
include some definitions
from the formal approach to computability on $\mathbb{N}$
based on Turing machines \cite{rogers1987theory,turing1937computable}.
%recall some definitions regarding.
A function $f:\mathbb{N} \rightarrow \mathbb{N}$ is \emph{computable} if there exists a Turing machine which, $\forall n \in \mathbb{N}$, halts on input $n$, that is, finishes after a finite amount of time, and returns $f(n)$. Note what we call a computable function is also referred to as a \emph{total recursive function} to differentiate it from functions where $\text{dom} (f) \subset \mathbb{N}$ holds \cite{rogers1987theory}, which we call \emph{partially computable}.
%Since we only need total recursive functions here, we will call them \emph{computable functions} for simplicity.
A subset $A \subseteq \mathbb{N}$ is said to be \emph{recursively enumerable} if either $A=\emptyset$ or there exists a computable function $f$ such that $A=f(\mathbb{N})$. Recursively enumerable sets are, thus, the subsets of $\mathbb{N}$ whose elements can be produced in finite time, as we can introduce the natural numbers one by one in increasing order in a Turing machine and it will output one by one, each in finite time, all the elements in $A$ (possibly with repetitions). Note there exist subsets of $\mathbb{N}$ which are not recursively enumerable \cite{rogers1987theory}. As we are also interested in computability on the subsets of $\mathbb{N}^2$, we translate the notion of recursively enumerable sets from $\mathbb{N}$ to $\mathbb{N}^2$ using pairing functions. A \emph{pairing function} $\langle \cdot,\cdot\rangle $ is a computable bijective function $\langle \cdot,\cdot\rangle: \mathbb{N} \times \mathbb{N} \rightarrow \mathbb{N}$.\footnote{Note we can define computable functions $f:\mathbb{N} \times \mathbb{N} \rightarrow \mathbb{N}$ via Turing machines, analogously to how we defined computable functions $f:\mathbb{N} \rightarrow \mathbb{N}$.} Since it is a common practice \cite{rogers1987theory}, we fix in the following $\langle n,m \rangle= \frac{1}{2} ( n^2+ 2nm + m^2 + 3n + m)$, the \emph{Cantor pairing function}. Moreover, we denote by $\pi_1$ and $\pi_2$ the computable functions that act as inverses of the Cantor pairing function, that is, $\pi_1(\langle n,m \rangle)=n$ and   $\pi_2(\langle n,m \rangle)=m$. Before continuing, we introduce an important concept for the following, finite maps. 

\begin{figure}[!tb]
\centering
\begin{tikzpicture}
    \node[other node] (1) {$\perp$};
    \node[other node] (2) [above right = 1cm and 1.5cm  of 1]  {$1$};
    \node[other node] (3) [above left = 1cm and 1.5cm  of 1]  {$0$};
    \node[other node] (4) [above left = 1cm  of 3]  {$00$};
    \node[other node] (5) [above right = 1cm  of 3]  {$01$};
    \node[other node] (6) [above left = 1cm  of 2]  {$10$};
    \node[other node] (7) [above right = 1cm  of 2]  {$11$};

   \path[draw,thick,->]
    (1) edge node {} (2)
    (1) edge node {} (3)
    (3) edge node {} (4)
    (3) edge node {} (5)
    (2) edge node {} (6)
    (2) edge node {} (7)
    ;
    \path[draw,thick,dotted]
    (4) edge node {} (-6,7)
    (4) edge node {} (-3,7)
    (5) edge node {} (-3,7)
    (5) edge node {} (0,7)
    (6) edge node {} (0,7)
    (6) edge node {} (3,7)
    (7) edge node {} (3,7)
    (7) edge node {} (6,7)
    ;
    \draw [line width=0.75mm] (-6.0,7.0) -- (6.0,7.0);
    \node [font=\fontsize{15}{15}] at (0,7.5) {$\{0,1\}^*$};
\end{tikzpicture}
\caption{Representation of the Cantor domain with alphabet $\Sigma=\{0,1\}$. Note an arrow from an element $w$ to an element $t$ represents $w \preceq_C t$. The crossing points of the dotted lines starting at an element and the horizontal line delimit the subset of $\{0,1\}^*$ that is above the element regarding $\preceq_C$.}
\label{cantor dom}
\end{figure}
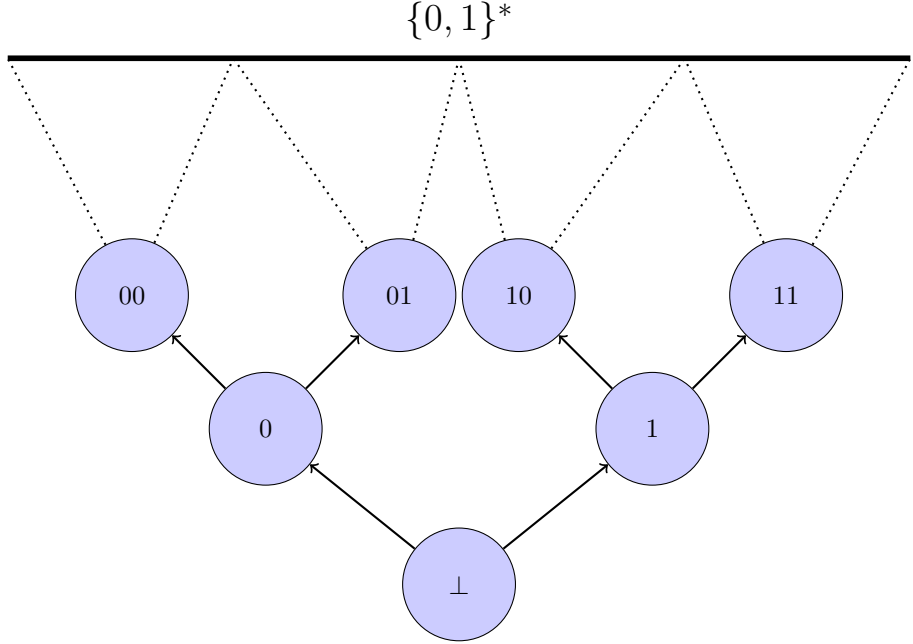

\begin{defi}[Finite map]
\label{def:finite map}
We say a map $\alpha: \text{dom}(\alpha) \to A$, where $\text{dom}(\alpha) \subseteq \mathbb{N}$, is a finite map for $A$ or simply a finite map if $\alpha$ is bijective and both $\alpha$ and $\alpha^{-1}$ are effectively calculable.
\end{defi}

Notice the definition of finite maps relies on the informal notion of \emph{effective calculability}. This is mandatory as no general formal definition for \emph{computable} maps $\alpha: \mathbb{N} \to A$ exists \cite{cutland1980computability}. In fact, the struggle between formal and informal notions of computability, best exemplified by Church's thesis \cite{rogers1987theory}, lies at the core of computability theory. Finite maps
%were pioneered in \cite{godel1931formal}and
are also known as \emph{effective denumerations} \cite{cutland1980computability} or \emph{effective enumerations} \cite{scott1970outline}.
Moreover, finite maps are related to recursively enumerable sets since, given two sets $A,B \subseteq \mathbb{N}$ and a finite map $\alpha:A \to B$, $B$ ($A$) is recursively enumerable whenever $A$ ($B$) is, since, by Curch's thesis \cite{rogers1987theory}, we can simply add a finite description of $\alpha$ ($\alpha^{-1}$) to the Turing machine whose output is $A$ ($B$). Note that this is no longer true if $\alpha$ is not a finite map.

We define now the order structure which we rely on to introduce computability in $X$ via $\rho$ and connect it, right after, with Turing machines. A \emph{partial order} $\preceq$ on a set $P$ is a reflexive ($x \preceq x$ $\forall x \in P$), transitive ($x \preceq y$ and $y \preceq z$ imply $x \preceq z$ $\forall x,y,z \in P$) and antisymmetric ($x \preceq y$ and $y \preceq x$ imply $x = y$ $\forall x,y \in P$) binary relation. We will call a pair $(P,\preceq)$ a \emph{partial order} and denote it simply by $P$. We may think of $P$ as a set of data and of $\preceq$ as a representation of the precision relation between different elements in the set. Given $x,y \in P$, we may read $x \preceq y$ as \emph{$y$ is at least as informative as $x$} or as \emph{$y$ is at least as precise as $x$}. We intend now to introduce the idea of some $x \in P$ being the limit of other elements in $P$, that is, the idea that one can generate some element in $P$ via a process which outputs other elements of $P$. This notion is formalized by the least upper bounds of directed sets \cite{abramsky1994domain}. $A \subseteq P$ is a \emph{directed set} if, given $a,b \in A$, there exist some $c \in A$ such that $a\preceq c$ and $b\preceq c$. If $A \subseteq P$ is a directed set, then $b \in P$ is the \emph{least upper bound of $A$} if $a \preceq b$ $\forall a \in A$ and, given any $c\in P$ such that $a \preceq c$ $\forall a \in A$, then $b \preceq c$ holds. We denote the least upper bound of $A$ by $\sqcup A$ and also refer to it as the \emph{supremum of $A$}. Hence, we can generate some $x \in P$ by generating a directed set $A$ whose upper bound is $x$, $x = \sqcup A$. We have restricted ourselves to directed sets since we can think of them as the output of some computational process augmenting the precision or information given that, for any pair of outputs, there is a third which contains their information and, potentially, more. Directed sets are, thus, a formalization of a computational process having a direction, that is, processes gathering information in a consistent way.
%\footnote{After reading Definition \ref{eff weak basis}, the reader can consult some further discussion regarding directed sets in Appendix \ref{directed justi}}
Of particular importance are \emph{increasing sequences} or \emph{increasing chains}, subsets $A \subseteq P$ where $A=(a_n)_{n\geq0}$ and $a_n \preceq a_{n+1}$ $\forall n \geq 0$. We can interpret increasing sequences as the output of some process where information increases every step.

Any process whose outputs increase information should tend towards some element in $P$, that is, any directed set $A \subseteq P$ should have a supremum $\sqcup A \in P$. A partial order with such a property is called \emph{directed complete} or a \emph{dcpo}. Note that there are partial orders which are not directed complete, such as  $P_0\coloneqq((0,1),\leq)$, since there are computational processes in $P_0$ whose information increases and are headed nowhere in $P_0$, for example, one whose output is $\mathbb{Q} \cap [0,1)$.

Some subsets $B \subseteq P$ are able to generate all the elements in $P$ via the supremum of directed sets contained in $B$. We refer to them as \emph{weak bases}.

\begin{defi}[Weak basis]
\label{def:weak basis}
A subset $B \subseteq P$ of a dcpo $P$ is a weak basis if, for each $x \in P$, there exists a directed set $B_x \subseteq B$ such that $x=\sqcup B_x$.
\end{defi}

We are particularly interested in dcpos where countable weak bases exist, since we intend to inherit computability from Turing machines. In case we have some computational process whose outputs are in $B$ and which is approaching some $x \in P/B$, we would like to be able to provide, after a finite amount of time, the best approximation of $x$ so far. In order to do so, we need to distinguish the outputs we already have in terms of precision. This is possible if the weak basis is \emph{effective}.
\begin{comment}
\begin{defi}[Effective weak basis]
\label{def:eff weak basis}
A countable weak basis $B \subseteq P$ of a dcpo $P$ is effective if there is a finite map which enumerates $B$, $B=(b_n)_{n\geq0}$, and 
\begin{equation*}
\label{eff weak basis}
\{\langle n,m \rangle| b_n \preceq b_m\}
\end{equation*}
is recursively enumerable.
\end{defi}

\begin{defi}[Effective weak basis]
\label{def:eff weak basis}
A countable weak basis $B \subseteq P$ of a dcpo $P$ is effective if there is a finite map for $B$, $B=(b_n)_{n\geq0}$, a computable function $f:\mathbb{N} \to \mathbb{N}$ such that $f(\mathbb{N}) \subseteq \{\langle n,m \rangle| b_n \preceq b_m\}$ and for each $x \in P$ there exists a directed set $B_x \subseteq B$, $\sqcup B_x = x$, such that if $b_n \in B_x$ with $b_n \prec x$, then there exists $b_m \in B_x$ such that $b_n \prec b_m$ and $\langle n,m \rangle \in f(\mathbb{N})$.
\end{defi}
\end{comment}

\begin{defi}[Effective weak basis]
\label{def:eff weak basis}
%\label{def:eff weak basis II}
A countable weak basis $B \subseteq P$ of a dcpo $P$ is effective if there exist both a finite map for $B=(b_n)_{n\geq0}$ and a computable function $f:\mathbb{N} \to \mathbb{N}$ such that $f(\mathbb{N}) \subseteq \{\langle n,m \rangle| b_n \preceq b_m\}$ and, for each $x \in P$, there is a directed set $B_x \subseteq B$ such that $\sqcup B_x = x$ and, if $b_n,b_m \in B_x/\{x\}$,
%fulfill $b_n,b_m \prec x$,
then there exists some $b_p \in B_x$ such that $b_n,b_m \prec b_p$ and $\langle n,p \rangle, \langle m,p \rangle \in f(\mathbb{N})$.
\end{defi}

Note we may show a countable weak basis $B=(b_n)_{n\geq0}$ is effective by proving the stronger property that the set
%there is a countable weak basis $B \subseteq P$ with a finite map which enumerates $B$, $B=(b_n)_{n\geq0}$, and such that
\begin{equation*}
\label{eff weak basis}
\{\langle n,m \rangle| b_n \preceq b_m\}
\end{equation*}
is recursively enumerable (see Proposition \ref{example} for an example). If this stronger condition is satisfied, then, for any finite subset $(b_n)_{n=1}^N \subseteq B$ where all elements are related, we can find some $n_0 \leq N$ such that $b_n \preceq b_{n_0}$ $\forall n \leq N$ and, since $\alpha: \mathbb{N} \to B$ is a finite map, we can determine the best approximation so far, $\alpha(n_0)$. The reason why we take the weaker approach in Definition \ref{def:eff weak basis} will
%is related to previous approaches to computability via order theory and will
become clear after Proposition \ref{weak basis compu and no basis compu}. We define now computable elements for dcpos with an effective weak basis.

\begin{defi}[Computable element]
\label{def:compu ele}
If $P$ is a dcpo, $B\subseteq P$ is an effective weak basis and $\alpha$ is a finite map for $B$, then
an element $x \in P$ is computable if there exists some $B_x \subseteq B$ such that the properties in Definition \ref{def:eff weak basis} are fulfilled and $\alpha^{-1}(B_x) \subseteq \mathbb{N}$ is recursively enumerable.
\end{defi}

Notice, Definition \ref{def:compu ele} generalizes the way in which Turing introduced computability on $P_{inf}(\mathbb{N})$, the set of infinite subsets of $\mathbb{N}$ \cite{turing1937computable}. Take $P=\mathbb{N} \cup P_{inf}(\mathbb{N})$ where $n \preceq m$ if $n \leq m$ $\forall n,m \in \mathbb{N}$, $n\preceq A$ if $n \in A$ $\forall n \in \mathbb{N}, A \in P_{inf}(\mathbb{N})$ and $A \preceq A$ $\forall A \in P_{inf}(\mathbb{N})$. Using the identity as finite map, we have $B=\mathbb{N}$ is an effective weak basis for $P$ and $A \in P_{inf}(\mathbb{N})$ is computable if and only if $B_A = \{n \in \mathbb{N}|n \in A\}$ is recursively enumerable, which is precisely the formal definition using Turing machines. Notice, also, given an effective weak basis, the computable elements are independent of the chosen finite map (see Proposition \ref{indep finite map} in Section \ref{model dep comp}).
%However, the computable elements depend on the choice of the effective weak basis (see Proposition \ref{not indep of weak basis}).
However, the same does not seem to be true for effective weak bases and no general hypothesis under which this property holds seems to exist (see Section \ref{model dep comp} for an informal discussion). We may say, thus, that $x \in P$ is $B$-computable, instead of just computable, where $B \subseteq P$ is a weak basis.
%We will, informally, argue why this is the case in the following lines (we can alternatively consider the example in the formal answer, right?).
 
%This should be further studied!

%between  we could think of relating $B$ and Consider the identity function $id:P \to Q$, which is continuous and $x = \infty \in P$, which is computable. Any sensible way in which we define computable functions in the weak basis case will end up relating $f(B)$ with the elements in $B'$ below them (as does the definition for the basis case in \eqref{compu funct def} using $\ll$). That is, any effective procedure we can derive from $x$ and $f$ to compute $f(x)=\infty \in Q$ will be contained inside $(\mathbb{Q}/\mathbb{N}) \cap [0,\infty]$. Since there is no directed set $D_\infty \subseteq (\mathbb{Q}/\mathbb{N}) \cap [0,\infty]$ such that $\sqcup D_\infty = \infty$, there is no effective procedure to compute $f(x)$ for $x=\infty$. As a result, we need some extra structure (like $\ll$) to define computable functions in a way that captures the original intuitive idea.

To recapitulate, the main features of our picture are $(a)$ a map from the natural numbers to some countable set of \emph{finite} labels $B=(b_n)_{n\geq0}$ and $(b)$ a partial order $\preceq$ which can be somewhat encoded via a Turing machine and which allows us to both associate to some infinite element of interest $x \in P$ a subset of our labels $B_x \subseteq B$ which converges to it and, in some sense, to decide what the best approximation of $x$ in $B_x$ is. As a result, the computability of $x$ reduces to whether some $B' \subseteq B$ such that $\sqcup B'=x$ can be finitely described.

Note that we could ask for a weaker structure, where, instead of a weak basis, we require the existence of some countable subset $B \subseteq P$ such that for each $x \in P$ there exists some $B_x \subseteq B$ where $\sqcup B_x=x$. This could, however, lead to pathological cases where it is impossible, after a finite amount of time, to gather the information acquired during the process in a consistent and finite way, that is, to give an approximate solution which profits from all the computational resources spent. Take, for example, $P=([0,2],\preceq)$, where $x \preceq y$ if either $y=2$ or $x \leq y$ with $x,y \in [0,1]$ or $x,y \in (1,2)$ with $x \leq y$, and take $B_x=(\mathbb{Q} \cap P)/\{1,2\}$. It is clear $\sqcup B_x=2$. However, since there is no $z \in B$ such that $x,y \preceq z$ whenever $x \in [0,1)$ and $y \in (1,2)$, then, after any finite amount of time, there is no finite representation joining the information in both $x$ and $y$, which
%it is not possible to determine what the direction of the computability is, that is, whether we are computing $1$ or $2$.This
would not occur, by definition, if $B_x$ was directed.

The structure of the partial order $P$ is fundamental to address higher type computability. In case $P$ is \emph{trivial}, that is, $x \preceq y$ if and only if $x=y$ $\forall x,y \in X$ \cite{abramsky1994domain}, we cannot extend computability beyond countable sets and we end up considering countable sets with finite maps towards the natural numbers. This situation, thus, reduces our approach to the \emph{theory of numberings} \cite{ershov1999theory,badaev2000theory,badaev2008computability}. There, where the definition of finite maps is less restrictive, the question regarding the most suitable enumeration of a given countable set $A$ is addressed. In order to do so, a preorder, the \emph{reducibility relation}, is introduced on the set of possible finite maps on $A$  \cite{ershov1999theory}. However, in our narrower definition, all finite maps are equivalent in the reducibility relation, eliminating, thus, the ambiguity in their choice. Moreover, in case we take the trivial partial order on the natural numbers and use the identity as $\alpha$, we obtain a trivial computational model which simply tells us every natural number is computable. Note, while the set of computable elements in a dcpo is countable, since the set of recursively enumerable subsets of $\mathbb{N}$ is countable \cite{rogers1987theory}, the cardinality of a dcpo with an effective weak basis is bounded by the cardinality of the continuum $\mathfrak{c}$, as we show in Proposition \ref{cardinality where computability} in the Appendix \ref{proofs}. In fact, it is in the uncountable case where the order structure is of interest, since the theory of numberings is insufficient.

To conclude, a dcpo with an effective weak basis $P$ and a partial surjective map $\rho: P \to X$ can be used introduce computability on a set $X$. We say $x \in X$ is \emph{$\rho$-computable} if there exists some computable element $p \in P$ such that $x=\rho(p)$.
%Given another set $Y$, $Q$ a dcpo with an effective weak basis and a partial sujective map $\rho':Q \to Y$, we say $f:X \rightarrow Y$ is \emph{$(\rho,\rho')$-computable} if there exists a computable function $g:P \rightarrow Q$ such that $g(\text{dom}(\rho)) \subseteq \text{dom}(\rho')$ and for each $x \in \text{dom}(\rho)$ we have $f(\rho(x))=\rho'(g(x))$.
As computability on $X$ relies on the map $\rho$, it is fundamental to understand and classify the influence of using a specific $\rho$. While we do not address this issue here, we include, in Section \ref{uniform compu}, some references which deal with it in domain theory, a stronger order-theoretic approach to computability. Note we can introduce computability in $X$ only if $|X|\leq \mathfrak{c}$, as any dcpo $P$ with a countable weak basis fulfills $|P| \leq \mathfrak{c}$ (see Proposition \ref{cardinality where computability} in the Appendix \ref{proofs}) and $\rho$ is surjective.

\subsection{Example: the Cantor domain}
\label{examples}

To illustrate the abstract approach in this section, we complement Section \ref{newton} with another example of a dcpo with effective weak bases that will be relevant in the following.
Note that more examples with relevant applications can be found in \cite[Section 2.1]{hack2022relation} (see also the references therein).

If $\Sigma$ any finite set of symbols, an \emph{alphabet}, we denote by $\Sigma^*$ the set of finite strings of symbols in $\Sigma$ and by $\Sigma^\omega$ the set of countably infinite sequences of symbols. The union of these last two sets
%$\Sigma^\infty \coloneqq \Sigma^*\cup \Sigma^\omega$
is called the \emph{Cantor domain} or the \emph{Cantor set model} \cite{blanck2008reducibility,martin2000foundation} when we equip it with the prefix order, that is, the Cantor domain is the pair $(\Sigma^{\infty},\preceq_C)$ 
%\begin{gather*}
\begin{equation}
\label{Cantor domain}
\begin{split}
    \Sigma^\infty &\coloneqq \Big\{x\Big|x:\{1,..,n\} \to \Sigma,\text{ }0\leq n\leq \infty\Big\} \\
    x \preceq_C y &\iff |x| \leq |y| \text{ and } x(i)=y(i) \text{ } \forall i \leq |x|,
   \end{split}
%\end{gather*}
\end{equation}
where $|s|$ is the cardinality of the domain of $s \in \Sigma^\infty$ and $|\Sigma|<\infty$.
%We include a representation of the Cantor domain with binary alphabet $\Sigma \coloneqq \{0,1\}$ in Figure \ref{cantor dom} and
We show $\Sigma^*$ is an effective weak basis for $\Sigma^{\infty}$ in Proposition \ref{example} in the Appendix \ref{proofs}. Note, in the simplest case where $\Sigma \coloneqq \{0,1\}$, the Cantor domain reduces to the binary representation of real numbers in the interval $[0,1]$ (see Figure \ref{cantor dom} for a representation of the Cantor domain in this case). In the same scenario, if we consider $\perp$ to be a proper initial interval $[a,b]$ with $a,b \in \mathbb{Q}$ for a family of polynomials with rational coefficients $(p_i)_{i \in I}$ that possess a unique zero, then the real numbers that can be computed using the bisection method on that family of polynomials are also computable in the Cantor domain. We can formalize this through $\rho: \Sigma^\infty \to X$, where $X$ consists of both the real numbers between $a$ and $b$ plus the intervals one may obtain applying the bisection method on $(p_i)_{i \in I}$ starting with $[a,b]$, and $\rho$ sends the finite strings to the intervals one may obtain starting the bisection method at $[a,b]$ and the infinite strings to the real numbers between $a$ and $b$ (for example, associating the zeros in the strings as representing that the bisection method transitions from an interval to its left half and the ones as transitions to its right half). Alternatively, we could take the $X$ as in Section \ref{newton} and use the restriction of $\rho$ to that subset as a map.

\section{Uniform computability via order theory}
\label{uniform compu}
In this section, we recall the more specific order-theoretic approach to computation in domain theory \cite{abramsky1994domain,edalat1999domain,stoltenberg1994mathematical,stoltenberg2001notes,martin2000foundation,scott1982lectures,mislove1998topology,cartwright2016domain}, which we refer to as the \emph{uniform computability} in order to discriminate it from our \emph{non-uniform} approach, where we define computability of an element of the existance of \emph{any} computational path, whereas here, there has to be a \emph{unique} one. 

After doing so, we highlight the difference to the more general approach we introduced before, which will be the main topic of discussion in the following sections.

In Section \ref{order in compu}, we considered an element $x$ in some dcpo $P$ with a countable weak basis $B\subseteq P$ to be computable if there existed some directed set $B_x \subseteq B$ such that $\sqcup B_x =x$ and whose associated subset of the natural numbers $\alpha^{-1}(B_x)$ was recursively enumerable.
%Such an approach is, however, not very practical as showing some element is not computable is not easy.
We take now a stronger approach where we
%Instead, we would like to
associate to each $x \in P$ a unique directed set $B_x \subseteq B$ such that the fact $\alpha^{-1}(B_x)$ is recursively enumerable
is equivalent to $x$ being computable. In order to do so, $B_x$ should be fundamentally related to $x$, that is, the information in every $b \in B_x$ should be gathered by any process which computes $x$. Importantly, while the more general approach only enabled to introduce computable elements, computable functions can also be defined in this stronger framework, which was introduced by Scott in \cite{scott1970outline}.
%In fact, the uniform approach to computability we will expose in this section is known as \emph{domain theory} \cite{abramsky1994domain,edalat1999domain,stoltenberg1994mathematical,stoltenberg2001notes,martin2000foundation,scott1982lectures,mislove1998topology,cartwright2016domain}.

 Before we continue, 
 %deal with this issue,
 we introduce the Scott topology, which will play a major role in the following. If $P$ is a dcpo, we say a set $O \subseteq P$ is \emph{open} in the \emph{Scott topology}
 %are subsets which are
 if it is \emph{upper closed} (if $x \in O$ and there exists some $y \in P$ such that $x \preceq y$, then $y \in O$) and \emph{inaccessible by directed suprema} (if $A\subseteq P$ is a directed set such that $\sqcup A \in O$, then $A \cap O \neq \emptyset$) \cite{abramsky1994domain,scott1970outline}. We denote by $\sigma(P)$ the Scott topology on $P$.
 %Unless otherwise stated, the Scott topology will be assumed for any dcpo in the following. Notice continuous functions between dcpos, the way we defined them above, coincide with continuous functions in the Scott topology.
 Note
 %An important property of
 the Scott topology characterizes the partial order in $P$, that is,
\begin{equation}
\label{charac order by topo}
    x \preceq y \iff  x \in O \text{ implies } y \in O \text{ } \forall O \in \sigma(P)
\end{equation}
\cite[Proposition 2.3.2]{abramsky1994domain}.
Note, also, the Scott topology in $T_0$ since, if $x,y \in P$ and $x \neq y$, then, by antisymmetry, either $\neg(x\preceq y)$ or $\neg(y \preceq x)$ holds. Thus, by \eqref{charac order by topo}, there exists some $O \in \sigma(P)$ such that $x \in O$ and $y \not \in O$ or vice versa. We can think of the Scott topology as the family of properties on the data set $P$ \cite{smyth1983power}. In particular, by definition, if some computational processes has a limit, then any property of the limit is verified in finite time and, since $\sigma(P)$ is $T_0$, these properties are enough to distinguish between the elements of $P$. This feature is particularly important for the introduction of computability in uncountable sets whenever the Scott topology is second countable, since it implies the elements in a (potentially) uncountable data set $P$ can be distinguished through a countable set of properties.

%To serve the purpose of
%To capture this idea of
In order to attach
%attaching to an
to each element in a dcpo a unique subset of $\mathbb{N}$ which is equivalent to it for all computability purposes, we introduce the way-below relation. If $x,y \in P$, we say $x$ is \emph{way-below} $y$ or $y$ is \emph{way-above} $x$ and denote it by $x \ll y$ if, whenever $y \preceq \sqcup A$ for a directed set $A\subseteq P$, then there exists some $a \in A$ such that $x \preceq a$ \cite{abramsky1994domain,scott1972continuous}. The set of element way-below (way-above) some $x$ is denoted by $ \twoheaddownarrow x$ ($\twoheaduparrow x$). Note $x \ll y$ implies $x \preceq y$ and, if $x \preceq y$ and $y \ll z$, then $x \ll z$ $\forall x,y,z\in P$. We think of an element way-below another as containing \emph{essential information} about the latter, since any computational process producing the latter cannot avoid gathering the information in the former. For example, if $A_1 \subseteq \twoheaddownarrow x$ is a directed set such that $\sqcup A_1 =x$, then, if $A_2 \subseteq P$ is a directed set where $\sqcup A_2=x$, there exists $\forall a_1 \in A_1$ some $a_2 \in A_2$ such that $a_1 \preceq a_2$.
%for each element in $A_1$, an element with its information.
%of any element in $A_1$.
We can regard $A_1$, thus, as a canonical computational process that yields $x$ (this will be clarified in Proposition \ref{compu sets}). Note we are not requiring each element in $A_1$ to be in $A_2$, only their information. Actually, we can even have disjoint directed sets $A_1 \cap A_2 = \emptyset$ where $A_1,A_2 \subseteq \twoheaddownarrow x$ and $\sqcup A_1=\sqcup A_2 =x$. Take, for example, the dcpo $P \coloneqq ((0,1],\leq)$ and $A_1 \coloneqq \{q \in (0,1] \cap \mathbb{Q}|q<x\}$, $A_2 \coloneqq \{r \in (0,1]/\mathbb{Q}|r<x\}$ for any $x \in P$.

We introduce now, in a canonical way, computable elements in a dcpo. To do so, we first define effective bases. A subset $B \subseteq P$ is called a \emph{basis} if, for any $x \in P$, there exists a directed set $B_x \subseteq \twoheaddownarrow x \cap B$ such that $\sqcup B_x = x$ \cite{abramsky1994domain}. Note, if $B$ is a basis, then it is a weak basis and $(\twoheaduparrow b)_{b \in B}$ is a topological basis for $\sigma(P)$ \cite{abramsky1994domain}. A dcpo is called \emph{continuous} or a \emph{domain} if bases exist. As in the case of weak bases, we are interested in dcpos with a countable basis or \emph{$\omega$-continuous} dcpos, since computability can be introduced via
%can be given computability notions using
Turing machines there. Note $\Sigma^\infty$ is $\omega$-continuous, as $B=\Sigma^*$ is a countable basis.
We say a basis $B \subseteq P$ is \emph{effective} if there is a finite map which enumerates $B$, $B=(b_n)_{n\geq0}$, there is a bottom element $\perp \in B$, i.e. $\perp \preceq x$ for all $x \in P$, and
\begin{equation*}
\{\langle n,m \rangle| b_n \ll b_m\}
\end{equation*}
is recursively enumerable \cite{edalat1999domain,stoltenberg2008computability}. We assume w.l.o.g. $b_0 = \perp$ in the following. Note $\Sigma^*$ is, in fact, an effective basis for $\Sigma^\infty$. If $P$ is a dcpo with an effective basis $B=(b_n)_{n\geq0}$, we say $x \in P$ is \emph{computable} provided 
\begin{equation*}
\{n \in \mathbb{N}|b_n \ll x\}
\end{equation*}
is recursively enumerable \cite{edalat1999domain}. Notice, as in the case of weak bases, this notion is independent of the chosen finite map (see Proposition \ref{model indep bases} $(i)$) and, under a broad hypothesis, also independent of the chosen basis (see Proposition \ref{model indep bases} $(ii)$).
%Notice, as we associate a unique recursively enumerable subset of the natural numbers to each computable element, we can identify any computable elements with a Gödel number \cite{rogers1987theory}.

%Some more comment on this idea of a single set?? Specially, prove properly the following proposition in order to relate to the previous section!!

Before introducing computable functions, we need some extra terminology.
We call a map $f:P \rightarrow Q$ between dcpos $P,Q$ a \emph{monotone} if $x \preceq y$ implies $f(x) \preceq f(y)$. Furthermore, we call $f$ \emph{continuous} if it is a monotone and, for any directed set $A \subseteq P$, we have $f(\sqcup A) = \sqcup f(A)$ \cite{abramsky1994domain,scott1970outline}. Note this definition is equivalent to the usual topological definition of continuity (see, for example, \cite{kelley2017general}) applied to the Scott topology. Note, also, if $B \subseteq P$ is a weak basis and a $f$ is continuous, then
$f(P)$ is determined by $f(B)$. This is the case since, if $x \in P$, then there exists some directed set $B_x \subseteq B$ such that $\sqcup B_x=x$. Thus, by continuity of $f$,
$f(x)=f(\sqcup B_x)=\sqcup f(B_x)$. We can consider the monotonicity of $f$ in the definition of continuity as a mere technical requirement to make sure $\sqcup f(A)$ exists for any directed set $A \subseteq P$.

We introduce now computable functions. We say a function $f:P \to Q$ between two dcpos $P$ and $Q$ with effective bases $B=(b_n)_{n\geq0}$ and $B'=(b'_n)_{n\geq0}$, respectively, is \emph{computable} if $f$ is continuous and the set
 \begin{equation}
 \label{compu funct def}
\{\langle n,m \rangle| b'_n \ll f(b_m)\}
 \end{equation}
is recursively enumerable \cite{edalat1999domain}. Hence, we can think of continuity in $\sigma(P)$ as a weak form of computability. This notion of computable function is independent of the finite maps considered for both $P$ and $Q$ (see Proposition \ref{model indep bases} $(i)$) and, under some broad hypotheses, also independent of the chosen bases (see Proposition \ref{model indep bases} $(iii)$).
Because of that, we may specify the chosen bases and call an element $x \in P$ \emph{$B$-computable}, instead of just computable, where $B \subseteq P$ is an effective basis. Analogously, we may call a function $f:P \to Q$ \emph{$(B,B')$-computable}, where $B \subseteq P$ and $B' \subseteq Q$ are effective bases. 
Note, for a fixed pair of effective bases, there are a countable number of computable functions between dcpos, as we show in Proposition \ref{count compu f} in the Appendix \ref{proofs}. In Section \ref{diff comp func} we expose the intuition behind computable functions and argue why they were not defined in the non-uniform approach.

%Maybe a little graph with the order used to output $i(n)$?.

These uniform notions of computable elements and functions can be translated to a set $X$, as in Section \ref{order in compu}, via a partial surjective map $\rho: P \rightarrow X$, which is known as a \emph{domain representation} \cite{stoltenberg2008computability,blanck2008reducibility}. We say $x \in X$ is $\rho$-computable if there exists some computable $p \in P$ such that $x=\rho(p)$ \cite{stoltenberg2008computability}.
Note, as in Section \ref{order in compu}, we are restricted to
sets $X$ with, at most, the cardinality of the continuum
%as, in order to introduce computability on $X$, we use a partial surjective map $\rho:P \to X$ where $P$ is a dcpo with a countable basis
(see Proposition \ref{cardinality where computability}).\footnote{Note 
the cardinality restriction in the basis' case is an instance of the more general topological fact that $T_0$ second countable topological space have, at most, the cardinality the continuum, which can be shown following Proposition \ref{cardinality where computability}.}
Furthermore, given another set $Y$, we say $f:X \rightarrow Y$ is $(\rho,\rho')$-computable if there exists a computable function $g:P \rightarrow Q$ such that $g(\text{dom}(\rho)) \subseteq \text{dom}(\rho')$ and for each $x \in \text{dom}(\rho)$ we have $f(\rho(x))=\rho'(g(x))$ \cite{stoltenberg2008computability}. As they are fundamental in the introduction of computability notions in topological spaces, domain representations have been studied and classified in the literature \cite{stoltenberg2008computability,stoltenberg2001notes,blanck2000domain,blanck2008reducibility,hamrin2005admissible}, emphasizing the case where $X$ is a topological space.
%As an important result, notice any $T_0$ topological space can be given a domain representation \cite{blanck2000domain} (careful with countability..).
Special attention has been given to domain representations $\rho$ such that $\text{dom}(\rho)=max(P)$ \cite{martin1998domain,waszkiewicz2003domains}, where $max(P) \coloneqq \{x \in P| \nexists y \in P \text{ s.t. } x \prec y \}$.
%if there is no $y \in P$ such that $x \prec y$.

\section{Comparing the uniform and non-uniform approaches}
\label{comparing apporaches}

The fundamental differences between the non-uniform approach in Section \ref{order in compu} and the uniform one in Section \ref{uniform compu} are the following:

\begin{enumerate}[label=(\roman*)]
\item The substitution of weak bases by bases. More fundamentally, the inclusion of the way-below relation $\ll$ in the uniform approach.
\item The definition of computable elements. If the uniform approach were to follow the non-uniform one using $\ll$, the definition of computable element should require the existence of \emph{some} $B_x \subseteq B \cap \twoheaddownarrow x$ such that $\alpha^{-1}(B_x)$ is recursively enumerable. Instead, it requires $\alpha^{-1}(B \cap \twoheaddownarrow x)$ to be recursively enumerable.
\item The definition of computable functions. Such a notion is absent in the non-uniform approach and present in the uniform one.
\item The inclusion of a bottom element $\perp$. In the uniform case, this is required, while we have avoided it in the non-uniform framework.
\item The definition of effectivity. In the non-uniform approach, we justified the inclusion of effectivity as a tool allowing us to determine what the best approximation to the solution of certain computation is. However, in the uniform case, the requirement is stronger. \end{enumerate}

In this section, we discuss the influence of the differences $(i)$-$(v)$ in terms of three different perspectives: computable elements (Section \ref{diff comp ele}), computable functions (Section \ref{diff comp func}), model dependence of computability (Section \ref{model dep comp}) and complexity (Section \ref{complexity}). 

\subsection{Computable elements}
\label{diff comp ele}
%To finish this section, we show computability in the weak basis' sense is indeed a strictly weaker notion of computability then the one in the basis' sense present in domain theory.

In terms of computable elements, two differences arise between the non-uniform and the uniform approach: the scope, that is, the set of ordered structures where computable elements can be defined in each approach, and the definition of computable elements, where, essentially, the non-uniform approach defines an element as computable if a computation path leading to it can be found, while the uniform apporach defines it as computable if and only if a specif computation path exists. Regarding the scope of these approaches, we show in Proposition \ref{weak basis compu and no basis compu} the non-uniform approach covers a wider range of order structures. Regarding the definition of computable elements, we show in Proposition \ref{compu sets} that, in the uniform approach, the existence of the specific computation path to certain element through which computable elements are defined is equivalent to the existence of any computational path to that element. We conclude, in this section, that the difference in $(i)$ reduces the ordered structures where computation can be introduced, that the difference in $(ii)$ is only apparent and vanishes in the uniform approach and that the inclusion of $(iv)$ and the stronger version of $(v)$ 
in the uniform approach are essential in order for the difference in $(ii)$ to vanish.

We begin, in Proposition \ref{weak basis compu and no basis compu}, showing effective weak basis exist whenever effective bases do. Moreover, we construct a dcpo where bases do not exist and effective weak bases do, which implies the non-uniform approach covers a wider set of ordered structures.

\begin{prop}
\label{weak basis compu and no basis compu}
If $P$ is a dcpo with an effective basis, then it has an effective weak basis. However, there are dcpos with effective weak bases that have no basis.
\end{prop}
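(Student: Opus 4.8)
The statement has two parts. For the first part, I would show that an effective basis is automatically an effective weak basis. Recall that a basis $B$ is already a weak basis (this is noted in the excerpt, since $B_x \subseteq \twoheaddownarrow x \cap B$ is in particular a directed subset of $B$ with supremum $x$). So the content is purely the effectivity upgrade: given that $B=(b_n)_{n\ge 0}$ is an effective basis, meaning $\{\langle n,m\rangle \mid b_n \ll b_m\}$ is recursively enumerable and there is a bottom $\perp = b_0$, I need to produce the computable function $f$ and the directed sets $B_x$ required by Definition \ref{def:eff weak basis}. The natural choice is to take $f$ to enumerate $\{\langle n,m\rangle \mid b_n \ll b_m\}$ and to take $B_x = \twoheaddownarrow x \cap B$ itself. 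Then I must check: (a) $B_x$ is directed with $\sqcup B_x = x$ — this is exactly the basis property, plus the standard interpolation fact that $\twoheaddownarrow x \cap B$ is directed in a continuous dcpo; and (b) for $b_n, b_m \in B_x \setminus \{x\}$ there is $b_p \in B_x$ with $b_n, b_m \prec b_p$ and $\langle n,p\rangle, \langle m,p\rangle$ in the range of $f$. Directedness gives a common upper bound $b_q \in B_x$ with $b_n, b_m \preceq b_q \ll x$; then using interpolation ($b_q \ll x$ implies there is $b_p \in B$ with $b_q \ll b_p \ll x$) one gets $b_n \ll b_p$ and $b_m \ll b_p$, hence $\langle n,p\rangle,\langle m,p\rangle$ are enumerated by $f$. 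The strict inequalities $b_n \prec b_p$ need a small argument: since $b_n \ne x$ and $b_n \ll b_p \ll x$, if $b_n = b_p$ then $b_n \ll b_n$, which combined with $b_n \ll x$... actually one should just pick $b_p$ high enough in the interpolation chain, or note that $b_n \preceq b_p$ together with $b_n \ne b_p$ is forced because otherwise $b_n$ would be way-below itself and one can re-interpolate to strictly increase. I expect this bookkeeping about strictness to be the only fiddly point in part one.

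For the second part, I need an explicit dcpo that has an effective weak basis but no basis at all (so in particular no effective basis). The guiding intuition from the excerpt is that weak bases only require directed sets converging to each point, whereas bases additionally require those directed sets to live below the point in the way-below order, and the way-below relation can be pathologically sparse. The cleanest candidate is something like the unit interval $[0,1]$ with a modified order in which $0$ is treated as a limit that is not way-below anything except itself — e.g. $P = \{0\} \cup ((0,1] \times \mathbb{N})$ or, more simply, the interval $[0,1]$ with its usual order but with the reals "doubled up" near $0$, or the poset used earlier in the excerpt ($P_0$-style constructions). Concretely I would try: let $P = ([0,1], \le)$ with the usual order — no, that is continuous. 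The obstruction to being a domain is that some point $x$ has empty $\twoheaddownarrow x$ apart from itself while still being a non-trivial limit; the standard example is the "interval domain"-like space where a point has no basis elements below it. A workable construction: take $P = \{\perp\} \cup (\mathbb{Q}\cap(0,1)) \cup \{1\}$ will be countable and trivially has a basis, so that won't do; instead take an uncountable antichain's worth of limits. Let $P = (\mathbb{N} \cup \{\infty\})^{\mathbb{N}}$-style? I think the intended example is: $P = \{\,(A,n) : A \subseteq \mathbb{N}\text{ infinite}, n \in \mathbb{N}\,\} \cup \{A : A\subseteq\mathbb N \text{ infinite}\}$ with $(A,n)\preceq(A,m)$ iff $n\le m$, $(A,n)\preceq A$, and no relations across distinct $A$'s. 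Each $A$ is the sup of the directed (indeed chain) set $\{(A,n)\}_n$, giving a weak basis $B = \{(A,n)\}$, which is countable only if... it isn't countable. So one must be more careful to keep $B$ countable.

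The real difficulty in part two is getting all three conditions simultaneously: (1) a *countable* weak basis, (2) *effectivity* of that weak basis, and (3) *nonexistence* of any basis. I would aim for the poset $P = (\Sigma^{\infty}, \preceq_C)$ of the Cantor domain *with the way-below relation deliberately destroyed* — for instance, re-order $\Sigma^\infty$ so that every infinite string sits above all finite strings but the finite strings themselves are only comparable when equal (a "flat below a single top layer" order, but with many tops). Then $\Sigma^*$ is still a countable weak basis (each infinite string is the sup of the directed set of... no, the finite strings form an antichain so that fails directedness). The construction that actually works, and which I believe is the intended one, is a two-level poset: a countable "bottom level" $B$ that is itself directed-complete-ish and sits below an uncountable "top level" $M$, arranged so that (i) each $x \in M$ is $\sqcup$ of a *directed* (hence the bottom level must have enough joins, e.g. be a copy of $\Sigma^*$ under prefix order) countable subset of $B$ — giving the weak basis — but (ii) $\twoheaddownarrow x \cap B = \emptyset$ for every $x \in M$, so $B$ contains no basis for the top points, and one checks no other subset can be a basis either because $M$ is an antichain of non-compact maximal elements each needing basis elements strictly below it in the $\ll$ order, of which there are none. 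Making $\twoheaddownarrow x$ empty is arranged by ensuring that for each $x\in M$ there are *many* directed sets in $B$ with sup $x$ (indeed $x = \sqcup D$ for $D$ and also $x = \sqcup D'$ for a "cofinally disjoint" $D'$), so no single $b\in B$ is below all of them. I would write $B = \{\text{finite binary strings}\}$, $M = \{\text{infinite binary strings}\}$, with $s \preceq t$ for $s,t\in B$ iff $s$ is a prefix of $t$ (so $B$ is $\omega$-continuous on its own), and $s \preceq \xi$ for $\xi\in M$ iff *every* finite prefix of $\xi$ of even length... — i.e. decouple membership so that the directed sets witnessing $\sqcup = \xi$ are not unique and have empty common way-below lower bound. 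Verifying $\twoheaddownarrow \xi = \emptyset$ is then the crux: given any $s \in B$ with $s \preceq \xi$, exhibit a directed set $D \subseteq B$ with $\sqcup D = \xi$ but $s \not\preceq d$ for all $d \in D$, using that $s$ sits on only one "branch" while $\xi$ can be approached along another. I expect this verification — and choosing the order precisely so that $B$ remains a *genuine weak basis* (directedness of the witnessing sets) while simultaneously killing $\ll$ — to be the main obstacle, and I would keep the example as concrete and as close to the Cantor domain as possible to make the recursive-enumerability of $\{\langle n,m\rangle \mid b_n\preceq b_m\}$ transparent (inherited from prefix-order decidability), which by the remark after Definition \ref{def:eff weak basis} immediately gives effectivity of the weak basis.
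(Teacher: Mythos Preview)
For the first part your plan matches the paper's: enumerate $\{\langle n,m\rangle \mid b_n \ll b_m\}$ as $f$, take $B_x = \twoheaddownarrow x \cap B$, and interpolate. Your ``re-interpolate'' idea for strictness does not obviously terminate when $b_n$ is compact (interpolation between $b_n$ and $x$ may simply return $b_n$); the paper closes this with a one-line lemma --- a directed set not containing its supremum has no maximal element --- and the case split $x \in B_x$ (take $b_p = x$, since then $b_n,b_m \prec x \ll x$) versus $x \notin B_x$ (apply the lemma to a common upper bound in $B_x$ to get a strict upper bound $y$, then interpolate once: $y \ll b_p \ll x$).

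For the second part you correctly isolate the mechanism --- a point with empty $\twoheaddownarrow$, forced by its being the supremum of two mutually incomparable directed sets --- but you never produce an example: every candidate you write down is either rejected by you or left with its order unspecified. The paper's example is much simpler than anything you attempt. Take $P = [0,1]$ with $x \preceq y$ if and only if either $x,y \in [0,\tfrac12]$ with $x \le y$, or $x,y \in [\tfrac12,1]$ with $y \le x$; this is just two chains glued at a common top $\tfrac12$. Then $B = \mathbb{Q}\cap[0,1]$ is an effective weak basis (under any standard finite map for the rationals, $\{\langle n,m\rangle \mid b_n \preceq b_m\}$ is recursively enumerable). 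But $\twoheaddownarrow \tfrac12 = \emptyset$: for any $x \in [0,\tfrac12]$ the directed set $\mathbb{Q}\cap(\tfrac12,1]$ has supremum $\tfrac12$ and is elementwise incomparable to $x$, so $\neg(x \ll \tfrac12)$; symmetrically for $x \in (\tfrac12,1]$. Hence $P$ has no basis. A single point approachable from two incomparable chains already destroys continuity; you do not need uncountably many maximal elements or a perturbed Cantor order.
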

\begin{proof}
We begin showing the following straightforward lemma we use in the proof of the first statement.

\begin{lem}
\label{always next}
If $P$ is a dcpo and $A \subseteq P$ is a directed set such that $\sqcup A \notin A$, then for all $a \in A$ there exists some $b \in A$ such that $a \prec b$.
\end{lem}

\begin{proof}
Assume w.l.o.g. $|A| \geq 2$. Take $a_0 \in A$ and assume the result is false. Then $\forall b \in A/\{a_0\}$ either $b \prec a_0$ or $b \bowtie a_0$. If there exists $b$ such that $b \bowtie a_0$ then, since $A$ is directed, there exists $c \in A$ such that $a_0  \preceq c$ and $b \preceq c$. If $c = a_0$ then $b \preceq a_0$, a contradiction. Thus, $a_0 \prec c$, a contradiction. Hence, we must have $a \prec a_0$ $\forall a \in A/\{a_0\}$, which means $\sqcup A = a_0$ since $a \preceq a_0$ $\forall a \in A$ and if $a \preceq y$ $\forall a$ $\in A$ then $a_0 \preceq y$. We get $\sqcup A \in A$, a contradiction.
%In conclusion, $\forall a \in A$ there exists some $b \in A$ such that $a \prec b$.
\end{proof}

We prove now the first statement. Note any basis with a finite map is a weak basis where the same finite map works. Consider the computable function $f$ such that $f(\mathbb{N})=\{\langle n,m \rangle| b_n \ll b_m\}$. Since $x\ll y$ implies $x\preceq y$ $\forall x,y \in P$, we have $f(\mathbb{N}) \subseteq \{\langle n,m \rangle| b_n \preceq b_m\}$. If $x \in P$, then $B_x \coloneqq \twoheaddownarrow x \cap B$ is a directed set such that $\sqcup B_x = x$ \cite[Proposition 2.2.4]{abramsky1994domain}. Take $b_n,b_m \in B_x$ such that $b_n,b_m \prec x$. If $x \in B_x$, then $b_n,b_m \prec x \ll x \eqqcolon b_p$, which means $b_n,b_m \ll b_p$ \cite[Proposition 2.2.2]{abramsky1994domain}. Thus, $\langle n,p \rangle, \langle m,p \rangle \in f(\mathbb{N})$ and we have finished. If $x \not \in B_x$, then there exists some $y \in B_x$ such that $b_n,b_m \prec y$ by Lemma \ref{always next}. Using the interpolation property \cite[Lemma 2.2.15]{abramsky1994domain}, we get some $z \in B_x$ such that $y \ll z \ll x$.
%where  $z \in B_x$ as $\neg(x \ll x)$.
Thus, $b_n,b_m \prec z \eqqcolon b_p$ and $b_n,b_m \ll b_p$, which means $\langle n,p \rangle, \langle m,p \rangle \in f(\mathbb{N})$. This concludes the proof of the first statement.

For the second statement, take $P \coloneqq \big(\big[0,1\big],\preceq\big)$ where
\begin{equation*}
    x \preceq y \iff 
    \begin{cases}
    x \leq y &\text{ if } x,y \in \big[0,\frac{1}{2}\big],\\
    y \leq x & \text{ if } x,y \in \big[\frac{1}{2},1\big].
    \end{cases}
\end{equation*}
Note $B \coloneqq \mathbb{Q} \cap [0,1]$ is a countable weak basis. In fact, $B$ is an effective weak basis. We can easily construct a finite map $\alpha:\mathbb{N} \to \mathbb{Q} \cap [0,1]$ which, aside from $0$ and $1$, orders the rationals in $[0,1]$ lexicographically, considering first the denominators and then the numerators. If $m=0,1$, then $\alpha(m)=m$. If $m>1$, then we
%define $m'=m-2$ and,
start with $t=2$ and increase $t$ by one unit until we find one such that $m <2+\sum_{i=2}^t (i-1)$. We take then
%\begin{equation*}
%\alpha_0(m)=\frac{1+m'-\sum_{i=2}^{t-1} (i-1)}{t}.
%\end{equation*}
\begin{equation*}
\alpha(m)=\frac{m-\big(1+\sum_{i=2}^{t-1} (i-1)\big)}{t}.
\end{equation*}
Thus, $\alpha$ is a finite map. Note we can show similarly that $\{\langle n,m\rangle| \alpha(n) \preceq \alpha(m)\}$ is recursively enumerable.
However, $P$ has no countable basis. This is the case since there is no $x \in P$ such that $x \ll \frac{1}{2}$. If $x \in \big[0,\frac{1}{2}\big]$, then $D_0 \coloneqq \mathbb{Q} \cap \big(\frac{1}{2},1\big]$ is a directed set such that $\sqcup D_0= \frac{1}{2}$ and $d \bowtie x$ $\forall d \in D_0$. Thus, $\neg\big(x \ll \frac{1}{2}\big)$. If $x \in \big(\frac{1}{2},1\big]$, we can argue analogously, taking $D_1 \coloneqq \mathbb{Q} \cap \big[0,\frac{1}{2}\big)$.
\end{proof}

Note Proposition \ref{weak basis compu and no basis compu} clarifies why Definition \ref{def:eff weak basis} cannot be simplified, as it seems likely, although it remains a question, that effective bases such that $\{\langle n,m \rangle| b_n \preceq b_m\}$ is not recursively enumerable exist.
%in  as it there exists countable bases where $\{\langle n,m \rangle| b_n \ll b_m\}$ is recursively enumerable but $\{\langle n,m \rangle| b_n \preceq b_m\}$ is not.

Although the way in which we introduced computable elements through weak bases and bases seems to differ, whenever bases exist, both definitions are equivalent, as we show in Proposition \ref{compu sets}. In particular, we show that the definition of computable elements in the uniform approach is equivalent to the one in the non-uniform framework, provided we have that $B$ in Definition \ref{def:compu ele} is, in particular, an effective basis.  

\begin{prop}
\label{compu sets}
If $P$ is a dcpo with an effective basis $B=(b_n)_{n\geq0}$ and $x \in P$, then the following are equivalent:
\begin{enumerate}[label=(\roman*)]
\item There exists a directed set $B_x \subseteq B$ such that $\sqcup B_x=x$ and $\{n\in\mathbb{N}|b_n \in B_x\}$ is recursively enumerable.
\item $x$ is $B$-computable.
\end{enumerate}
\end{prop}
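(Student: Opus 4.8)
The plan is to prove the two implications separately. The direction (ii) $\Rightarrow$ (i) is essentially immediate: assuming $\{n\in\mathbb N \mid b_n\ll x\}$ is recursively enumerable, I would take $B_x \coloneqq \twoheaddownarrow x \cap B$, which is a directed set with $\sqcup B_x = x$ by \cite[Proposition 2.2.4]{abramsky1994domain} (the same fact already used in the proof of Proposition \ref{weak basis compu and no basis compu}). Since the finite map $\alpha$ is injective, $\{n \mid b_n \in B_x\}$ is literally $\{n \mid b_n \ll x\}$, so it is recursively enumerable by hypothesis, and (i) holds.

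For (i) $\Rightarrow$ (ii), I would first isolate the key structural claim: for a directed $B_x \subseteq B$ with $\sqcup B_x = x$, one has $b_n \ll x$ \emph{if and only if} there exists $b_m \in B_x$ with $b_n \ll b_m$. The ``$\Leftarrow$'' direction is trivial, since $b_m \in B_x$ gives $b_m \preceq \sqcup B_x = x$ and $\ll$ followed by $\preceq$ yields $\ll$. For ``$\Rightarrow$'', I would invoke the interpolation property \cite[Lemma 2.2.15]{abramsky1994domain}: from $b_n \ll x$ obtain $w$ with $b_n \ll w \ll x = \sqcup B_x$; directedness of $B_x$ and the definition of $\ll$ then give some $b_m \in B_x$ with $w \preceq b_m$, hence $b_n \ll w \preceq b_m$, i.e.\ $b_n \ll b_m$. (The degenerate case $B_x = \emptyset$ forces $x$ to be the bottom element, which is handled directly by replacing $B_x$ with $\{\perp\}$.)

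Granting the claim, the proof is a standard manipulation of recursively enumerable sets. Let $f$ be the computable function with $f(\mathbb N) = \{\langle n,m\rangle \mid b_n \ll b_m\}$ supplied by effectivity of the basis $B$, and let $S \coloneqq \{m \mid b_m \in B_x\}$, which is recursively enumerable by hypothesis. Then $\{\langle n,m\rangle \mid m \in S\}$ is recursively enumerable (compose an enumeration of $S$ with the pairing function), its intersection with $f(\mathbb N)$ is recursively enumerable, and by the claim $\{n \mid b_n \ll x\}$ is exactly the first-coordinate projection (via $\pi_1$) of that intersection; projections of recursively enumerable sets are recursively enumerable, so $x$ is $B$-computable.

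The only real content is the forward direction of the claim: without interpolation one would only extract some $b_m \in B_x$ with $b_n \preceq b_m$, and that relation is not the one known to be recursively enumerable for $B$, so it could not be used to enumerate $\twoheaddownarrow x \cap B$. Everything else is bookkeeping with closure properties of recursively enumerable sets under finite intersection, projection, and composition with computable functions through the Cantor pairing function.
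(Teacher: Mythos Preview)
Your proof is correct and follows essentially the same approach as the paper: both directions hinge on the identity $\{n \mid b_n \ll x\} = \{n \mid \exists\, b_m \in B_x,\ b_n \ll b_m\}$, established via interpolation (the paper cites \cite[Corollary 2.2.16]{abramsky1994domain}, which is the same content as your use of \cite[Lemma 2.2.15]{abramsky1994domain} plus the definition of $\ll$). The only presentational difference is that the paper builds the enumerating function explicitly by dovetailing the two enumerations and outputting $0$ (the index of $\perp$) on failed comparisons, whereas you invoke the standard closure of r.e.\ sets under intersection and projection; this hides the role of $\perp$ that the paper later emphasizes, but the argument is the same.
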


\begin{proof}
Note $(ii)$ implies $(i)$ by definition of computable elements in the basis sense, since $B_x \coloneqq \twoheaddownarrow{x} \cap B$ is a directed set with $\sqcup B_x=x$ \cite[Proposition 2.2.4]{abramsky1994domain} and, given that $x$ is computable, $\{n\in\mathbb{N}|b_n \in B_x\}$ is recursively enumerable by definition.

For the converse, note, if we have a directed set $B_x \subseteq B$ such that $\sqcup B_x=x$, then it follows that
\begin{equation*}
    \{n\in\mathbb{N}|b_n \ll b_m, \text{ } b_m \in B_x\} = \{n\in \mathbb{N}|b_n \ll x\}.
\end{equation*}
$(\subseteq)$ holds by \cite[Proposition 2.2.2]{abramsky1994domain}, since
%$b_m \in B_x$ fulfills $b_m \ll x$ and
we have $b_n \ll b_m \preceq x$, hence, $b_n \ll x$. $(\supseteq)$ also holds, since we have $b_n \ll x$ and $\sqcup B_x=x$. Thus, by \cite[Corollary 2.2.16]{abramsky1994domain}, there exists some $b_m \in B_x$ such that $b_n \ll b_m$. To conclude the proof, we only need to show $\{n \in \mathbb{N}| b_n \ll x\}$, or, equivalently, $\{n\in\mathbb{N}|b_n \ll b_m, \text{ } b_m \in B_x\}$, is recursively enumerable using the fact that both $\{n\in\mathbb{N}|b_n \in B_x\}$ and $\{\langle n,m \rangle| b_n \ll b_m\}$ are. In order to do so, we construct a computable function $i: \mathbb{N} \to \mathbb{N}$ (see Figure \ref{fig 2} for a representationof $i$) such that $i(\mathbb{N})=\{n\in\mathbb{N}|b_n \ll b_m, \text{ } b_m \in B_x\}$ from two computable functions, which we know exist, $h,g:\mathbb{N} \to \mathbb{N}$, where $g(\mathbb{N})=\{n\in\mathbb{N}|b_n \in B_x\}$ and $h(\mathbb{N})=\{\langle n,m\rangle| b_n \ll b_m\}$. We build, thus, an algorithm which goes through all the outputs of $h$ and $g$ and checks whether the first component of $h$ equals an output of $g$. In case it does, it outputs the first component of $h$ and, otherwise, it outputs $0$, since $b_0 =\perp \ll x$ $\forall x \in P$. Consider, then, some $n \in \mathbb{N}$. We start with $t=1$ and continue increasing $t$ one unit until we have $n < \sum_{m=1}^t (2m-1)$. $i$ has a different output, then, according to an extra comparison. If $n \leq t+\sum_{m=1}^{t-1} (2m-1)$, then we compare $h(t)$ with $g(t + \sum_{m=1}^{t-1} (2m-1)-n)$, that is, in case $\pi_2 \circ h(t)=g(t + \sum_{m=1}^{t-1} (2m-1)-n)$, then we output $i(n)=\pi_1 \circ h(t)$ and, otherwise, we output $i(n)=0$. Conversely, if $n > t+\sum_{m=1}^{t-1} (2m-1)$, then we compare $g(t)$ with $h(n-(1+t + \sum_{m=1}^{t-1} (2m-1)))$, that is, in case $\pi_2 \circ h(n-(1+t + \sum_{m=1}^{t-1} (2m-1)))=g(t)$, we output $i(n)=\pi_2 \circ h(n-(1+t + \sum_{m=1}^{t-1} (2m-1)))$ and, otherwise, we output $i(n)=0$. Note $i$ is computable, since $g,h,\pi_1 $ and $\pi_2$ are.
\end{proof}

Note the stronger version of effectivity in the uniform approach $(v)$ and the inclusion of a bottom element $(iv)$ are essential for the equivalence in Proposition \ref{compu sets}. The first one, $(v)$, assures the existence of any computational path and that of the specific one required in the uniform approach are equivalent. Note a direct extension of effectivity to the uniform case would fail to have this property, since it is precisely the stronger version used in the uniform approach what allows to connect any computational path in the basis to the specific one used to define computability in that framework. The second one, $(iv)$, allows us to construct a computable function for such a specific computational path whenever such a function exist for any path. This is the case since such a construction is based on comparisons and we need, whenever the comparisons fail to be true, some output which is way below any element in the order structure. Note this is a reason for fixing the bottom element as the first element in the enumeration of any basis.

\begin{comment}
\begin{figure}[!tb]
\centering
\begin{tikzpicture}
    \node[main node] (1) {$g(0)$};
    \node[main node] (2) [right = 2cm  of 1]  {$h(0)$};
    \node[main node] (3) [below = 1cm  of 1] {$g(1)$};
    \node[main node] (4) [right = 2cm  of 3] {$h(1)$};
     \node[main node] (5) [below = 1cm  of 3] {$g(2)$};
    \node[main node] (6) [right = 2cm  of 5] {$h(2)$};

    \path[draw,thick,-]
    (1) edge node {} (2)
    (1) edge node {} (4)
    (3) edge node {} (2)
    (3) edge node {} (4)
    ;
    
    \path[draw,thick,-,dotted]
    (1) edge node {} (6)
    (3) edge node {} (6)
    (5) edge node {} (2)
    (5) edge node {} (4)
    (5) edge node {} (6)
    ;
\end{tikzpicture}
\caption{Representation of $i(n)$ from Proposition \ref{compu sets} for $n=1,..,9$. Each line stands for a comparison done by $i$ between an output of $g$ and one of $h$. The continuous lines represent the comparisons in $i$ for $n\leq4$ and the discontinuous those for $5\leq n \leq 9$.}
\label{fig 2}
\end{figure}
\end{comment}

\begin{figure}[!tb]
\centering
\begin{tikzpicture}
     \node[main node] (2) {$h(0)$};
    \node[main node] (4) [right = 2cm  of 2]  {$h(1)$};
    \node[main node] (6) [right = 2cm  of 4]  {$h(2)$};
    \node[main node] (1) [below = 2cm  of 2] {$g(0)$};
    \node[main node] (3) [right = 2cm  of 1] {$g(1)$};
    \node[main node] (5) [right = 2cm  of 3] {$g(2)$};

   \path[draw,thick,-]
    (1) edge node {} (2)
    (1) edge node {} (4)
    (3) edge node {} (2)
    (3) edge node {} (4)
    ;
    
    \path[draw,thick,-,dotted]
    (1) edge node {} (6)
    (3) edge node {} (6)
    (5) edge node {} (2)
    (5) edge node {} (4)
    (5) edge node {} (6)
    ;
\end{tikzpicture}
\caption{Representation of $i:\mathbb{N} \to \mathbb{N}$ from Proposition \ref{compu sets} for $n=1,..,9$. Each line stands for a comparison done by $i$ between an output of $g$ and one of $h$. The continuous lines represent the comparisons in $i(n)$ for $n\leq4$ and the discontinuous those for $5\leq n \leq 9$.}
\label{fig 2}
\end{figure}
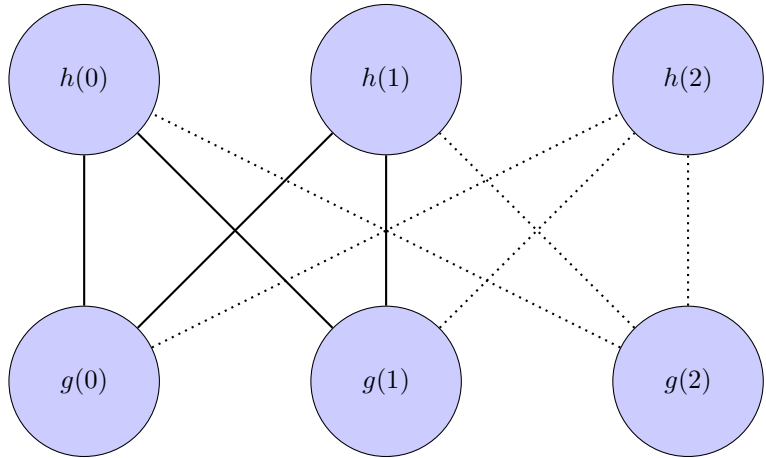

\begin{comment}
We consider two new functions $i,j: \mathbb{N} \to \mathbb{N}$ where $\forall n \in \mathbb{N}$ $i(n) \coloneqq  \langle h(n),n \rangle$ and
\begin{equation}
    j(n) \coloneqq 
    \begin{cases}
    \pi_1 \circ \pi_1(n) &\text{ if } g \circ \pi_2(n) = \pi_2 \circ \pi_1(n),\\
    0 & \text{ otherwise.}
    \end{cases}
\end{equation}
Notice $i$ and $j$ are computable since $g,h,\pi_1$ and $\pi_2$ are. To finish, notice $j \circ i$ is a computable function such that $j \circ i(\mathbb{N})=\beta^{-1}(B'_{f(x)})$. Notice we need the dcpo to be pointed in order for the case $j=0$ to be well-defined.
\end{comment}

\subsection{Computable functions}
\label{diff comp func}
Regarding computable functions, the difference between the uniform and non-uniform approaches is we defined them in the former and did not in the latter. The reason for this lies in the intuition any such definition aims to capture. In particular, computable functions are meant to map
computable inputs to computable ouputs \cite{ko2012complexity,braverman2005complexity}. We show, in Proposition \ref{why << needed}, the definition in the uniform approach fulfills this property and discuss, right after, why it seems that the non-uniform approach fails to do so. We conclude, in this section, that the difference in $(i)$, i.e. the inclusion of $\ll$, and the inclusion of a bottom element in $(iv)$ are fundamental in order to have a definition of computable function that captures the intuition behind it. 

%The intuition behind the definition of computable functions is they should map
We begin, in Proposition \ref{why << needed}, showing the definition of computable functions in the uniform approach fulfills the intuitive notion. % Intuitively, the idea that any formal definition of computable function is meant to capture is that of a map that sends computable inputs to computable ouputs \cite{ko2012complexity,braverman2005complexity}. Indeed, this is the case for our definition, as we show in the following proposition. %Proposition \ref{why << needed} in the Appendix \ref{appen proofs}.

\begin{prop}[\cite{edalat1999domain}]
\label{why << needed}
Let $P$ and $Q$ be dcpos
%and $B=(b_n)_{n\geq0},B'=(b'_n)_{n\geq0}$ be, respectively,
with effective bases $B$ and $B'$, respectively. If $f:P \to Q$ is a $(B,B')$-computable function and $x \in P$ is computable, then $f(x)$ is computable.
\end{prop}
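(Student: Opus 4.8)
The plan is to unwind the definitions of computable element and computable function and produce a recursively enumerable description of $\{n \in \mathbb{N} \mid b'_n \ll f(x)\}$ from the three recursively enumerable sets we have in hand. Concretely, since $x$ is $B$-computable, $G \coloneqq \{m \in \mathbb{N} \mid b_m \ll x\}$ is recursively enumerable; since $f$ is $(B,B')$-computable, $H \coloneqq \{\langle n,m\rangle \mid b'_n \ll f(b_m)\}$ is recursively enumerable; and, as $B$ is an effective basis, $\{\langle n,m\rangle \mid b'_n \ll b'_m\}$ and the analogous set for $B$ are recursively enumerable. The target set to be shown recursively enumerable is $F \coloneqq \{n \in \mathbb{N} \mid b'_n \ll f(x)\}$.

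The key identity I would establish is
\begin{equation*}
\{n \in \mathbb{N} \mid b'_n \ll f(x)\} = \{n \in \mathbb{N} \mid \exists m \in \mathbb{N} \text{ with } b_m \ll x \text{ and } b'_n \ll f(b_m)\},
\end{equation*}
i.e. $F = \{n \mid \exists m\,(m \in G \ \wedge\ \langle n,m\rangle \in H)\}$. The inclusion $\supseteq$ is the easy direction: if $b_m \ll x$ then, by monotonicity of $f$, $f(b_m) \preceq f(x)$, hence $b'_n \ll f(b_m) \preceq f(x)$ gives $b'_n \ll f(x)$ by \cite[Proposition 2.2.2]{abramsky1994domain}. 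The inclusion $\subseteq$ is where continuity of $f$ enters: take the directed set $B_x \coloneqq \twoheaddownarrow x \cap B$, which satisfies $\sqcup B_x = x$ by \cite[Proposition 2.2.4]{abramsky1994domain}; by continuity, $f(x) = f(\sqcup B_x) = \sqcup f(B_x)$, and $f(B_x)$ is directed because $f$ is monotone. So if $b'_n \ll f(x) = \sqcup f(B_x)$, the definition of $\ll$ yields some $b_m \in B_x$ (that is, $b_m \ll x$) with $b'_n \preceq f(b_m)$; to upgrade $\preceq$ to $\ll$, I would interpolate on the $B$ side — using \cite[Corollary 2.2.16]{abramsky1994domain} (as in the proof of Proposition \ref{compu sets}) there is $b_{m'} \in B_x$ with $b_m \ll b_{m'} \ll x$, and then $b'_n \preceq f(b_m) \ll f(b_{m'})$ by applying $f$ across $b_m \ll b_{m'}$ and using $\preceq$ composed with $\ll$; so $\langle n, m'\rangle \in H$ and $m' \in G$. (Alternatively one interpolates directly on the $Q$ side between $b'_n$ and $f(b_m)$, whichever the authors find cleaner; the bottom element $\perp$ guarantees the base/domain is nonempty and supplies a default value, as in Proposition \ref{compu sets}.)

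Once the identity is in place, the conclusion is routine: $F$ is the projection onto the first coordinate of the recursively enumerable set $\{\langle n,m\rangle \mid m \in G \text{ and } \langle n,m\rangle \in H\}$, which is recursively enumerable because $G$ and $H$ are and recursively enumerable sets are closed under intersection and under projection (dovetail through the outputs of the computable functions enumerating $G$ and $H$, exactly the kind of pairing-function bookkeeping carried out explicitly in Proposition \ref{compu sets}). Hence $F$ is recursively enumerable, so $f(x)$ is $B'$-computable. The main obstacle I anticipate is the $\subseteq$ direction of the identity — specifically, making sure one really gets $b'_n \ll f(b_m)$ and not merely $b'_n \preceq f(b_m)$; this is exactly what forces the use of the interpolation property, and it is the one place where the way-below relation $\ll$ (difference $(i)$) is genuinely needed rather than just convenient.
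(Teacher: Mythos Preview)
Your overall plan and the key identity
\[
\{n \mid b'_n \ll f(x)\}\;=\;\{n \mid \exists m\ (b_m \ll x \text{ and } b'_n \ll f(b_m))\}
\]
are correct, and the dovetailing conclusion is exactly the bookkeeping of Proposition~\ref{compu sets}. There is, however, a real gap in your argument for the inclusion~$\subseteq$: from $b_m \ll b_{m'}$ you infer $f(b_m) \ll f(b_{m'})$, but Scott-continuous maps do \emph{not} in general preserve $\ll$ (any constant map with non-compact value is already a counterexample), so the $P$-side interpolation fails. Your $Q$-side alternative is the right instinct but is mis-placed as written: you cannot interpolate ``between $b'_n$ and $f(b_m)$'' because at that stage you only know $b'_n \preceq f(b_m)$. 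The fix is to interpolate in $Q$ \emph{before} passing down to $B_x$: from $b'_n \ll f(x)$ obtain $b'_p \in B'$ with $b'_n \ll b'_p \ll f(x)$; then $b'_p \ll f(x) = \sqcup f(B_x)$ yields $b_m \in B_x$ with $b'_p \preceq f(b_m)$, and now $b'_n \ll b'_p \preceq f(b_m)$ gives $b'_n \ll f(b_m)$ as required.

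The paper takes a slightly different route that sidesteps the inclusion~$\subseteq$ altogether. It works with the same set $B'_{f(x)} \coloneqq \{b'_n \mid \exists\, b_m \ll x,\ b'_n \ll f(b_m)\}$, but rather than identifying it with $\twoheaddownarrow f(x) \cap B'$, it shows directly that $B'_{f(x)}$ is directed (here is where the paper spends its interpolation, in $Q$) with $\sqcup B'_{f(x)}=f(x)$ (continuity of $f$), that it is recursively enumerable (via the dovetailing of Proposition~\ref{compu sets}), and then appeals to the implication (i)$\Rightarrow$(ii) of Proposition~\ref{compu sets}. Your route, once the interpolation is repaired, is marginally more informative in that it actually establishes $B'_{f(x)} = \twoheaddownarrow f(x) \cap B'$; the paper's route is marginally shorter since it needs only one of the two inclusions.
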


\begin{proof}
Let $B=(b_n)_{n\geq0}$ and $B'=(b'_n)_{n\geq0}$ be, respectively, effective bases for $P$ and $Q$. Given some computable $x \in P$, we will show $f(x)$ is also computable, that is, that there exists a directed set $B'_{f(x)} \subseteq B'$ such that $\sqcup B'_{f(x)} = f(x)$ and $\beta^{-1}(B'_{f(x)})$ is recursively enumerable, where
%$B'$ is an effective basis for $Q$ and
$\beta$ is a finite map for $B'$.
%Given $B=(b_n)_{n\geq0}$ an effective basis for $P$, consider
Take $B_x \coloneqq \{b_n \in B| b_n \ll x\}$ and define
%$B'_{f(x)} \coloneqq \{b'_n \in B'| b'_n \ll f(b_m) \text{ for some } b_m \in B_x\}$.
$B'_{f(x)} \coloneqq \{b'_n \in B'| \exists  b_m \in B_x \text{ s.t. } b'_n \ll f(b_m)\}$.
To begin, we show $B'_{f(x)}$ is a directed set.
%Since $Q$ is a dcpo, showing  $B'_{f(x)}$ is directed suffices to prove $\sqcup B'_{f(x)}$ exists.
If $b'_n,b'_m \in B'_{f(x)}$, then there exist $b_n,b_m \in B_x$ such that $b'_n \ll f(b_n)$ and $b'_m \ll f(b_m)$. Since $B_x$ is directed, there exists some $b_p \in B_x$ such that $b_n,b_m\preceq b_p$ and, as $f$ is monotonic, $f(b_n),f(b_p)\preceq f(b_p)$. Thus, by \cite[Proposition 2.2.2]{abramsky1994domain}, we have $b'_n,b'_m \ll f(b_p)$ and, by \cite[Lemma 2.2.15]{abramsky1994domain}, there exist some $b'_p \in B'$ such that $b'_n,b'_m \ll b'_p \ll f(b_p)$. Since $b'_p \in B'_{f(x)}$ by definition, we conclude $B'_{f(x)}$ is directed and, as $Q$ is a dcpo, $\sqcup B'_{f(x)}$ exists.
To conclude, we show $\sqcup B'_{f(x)}=f(x)$.
%If $b_n=b_m$, then there exists $p \in \mathbb{N}$ such that $b'_p \ll f(b_n)$ and $b'_n,b'_m \preceq b'_p$ since $\twoheaddownarrow f(b_n) \cap B'$ is directed by \cite[Proposition 2.2.4]{abramsky1994domain}. If $b_n \neq b_m$, then there exists some $b_p \in B_x$ such that $b_n,b_m \preceq b_p$ (as $B_x$ is directed) and, since $f$ is a monotone, $f(b_n),f(b_m) \preceq f(b_p)$. We get $b'_n,b'_m \ll f(b_p)$ which means there exists $b'_p \ll f(b_p)$ such that $b'_n,b'_m \preceq b'_p$ since $\twoheaddownarrow f(b_p) \cap B'$ is directed, again by \cite[Proposition 2.24]{abramsky1994domain}. Thus, $\sqcup B'_{f(x)}$ exists.
Notice, by definition of $ B'_{f(x)}$, $\sqcup B'_{f(x)} \preceq f(x)$. To prove the converse, notice, since $B'$ is a basis for $Q$, there exists for each $b_n \in B_x$ some $B'_n \subseteq B'_{f(x)}$ such that $\sqcup B'_{n}=f(b_n)$. Thus, we have $f(b_n) \preceq \sqcup B'_{f(x)}$ $\forall b_n \in B_x$. Given that $f$ is continuous, it follows $\sqcup f(B_x) = f(x)$ and, by definition of supremum, that $f(x) \preceq \sqcup B'_{f(x)}$. We obtain $\sqcup B'_{f(x)} = f(x)$ by antisymmetry. Lastly, we can follow Proposition \ref{compu sets} to show $\beta^{-1}(B'_{f(x)})$ is recursively enumerable, taking, in the notation of Proposition \ref{compu sets}, $h,g:\mathbb{N} \to \mathbb{N}$ such that $g(\mathbb{N})=\{n | b_n \ll x\}$ and $h(\mathbb{N})=\{\langle n,m\rangle| b'_n \ll f(b_m)\}$.
\end{proof}

Note our approach in the proof of Proposition \ref{why << needed} is more direct than the one in \cite[Theorem 9]{edalat1999domain}, where the converse of the statement is also addressed.
%was proven can be found in known, as it is the reason behind the definition of computable functions in domain theory. A different proof can be found in \cite[Theorem 9]{edalat1999domain}, where the converse of the statement is also addressed.
Note, also, we have averted using \emph{oracles}, which appear in the literature to conveniently decouple the difficulty of computing the input from that of the output and to avoid limiting the input of computable functions to computable elements \cite{braverman2005complexity}.

%Although there is a notion of computable function in the uniform approach, we did not define them in the non-uniform framework. We discuss in this section why the latter is insufficient to formalize the intuition behind computable functions, which are meant to take computable inputs to computable ouputs \cite{ko2012complexity,braverman2005complexity}.
What the intuition behind computable functions aims to capture is that, if there exists an effective procedure for some input and a computable function $f$, there ought to be an effective procedure to determine the output. In particular, one is able to construct such a procedure for $f(x)$ when given such a procedure for some element $x$. This holds in the uniform approach, as we showed in Proposition \ref{why << needed}. When trying to define computable functions using weak bases, however, this is not necessarily true. To support this claim informally, we introduce an example in the following.

Take two partial orders $P,Q$ such that $P \subseteq Q$ and the identity map between them $f:P \to  Q$. Whatever definition of computable function one chooses, one would like the identity map to be computable.
%, given that, for each $x \in P$ that has a finite description, one could simply output the same description.
The problem rises when one attempts to provide a definition in terms of order properties that respects this fact. Assume, for the moment, that $P$ and $Q$ have effective bases $B$ and $B'$, respectively. Then, according to \eqref{compu funct def}, the computability of $f$ reduces to whether $\{\langle n,m\rangle| b'_n \ll_Q b_m\}$ is recursively enumerable. Note it is reasonable to expect this to be true, given that $\{\langle n,m\rangle| b'_n \ll_Q b'_m\}$ is recursively enumerable since $B'$ is an effective basis and there exists a finite map $\alpha: \mathbb N \to B$. The situation for weak bases is quite different, as the following example points out.

Consider the same situation from the previous paragraph, where we take the identity function $f: P \to Q$. Fix, in particular, $P \coloneqq (\mathbb{N} \cup \{\infty\},\preceq_P)$ and $Q \coloneqq ([0,\infty] \cup \mathbb{Z}^{-},\preceq_Q)$, where we have $x \preceq_P y \iff y=\infty$ or $x \leq y$ with $x,y \in \mathbb{N}$ and
\begin{equation}
\label{def Q 2}
    \begin{rcases}
     y=\infty, \\
    x \leq y &\text{ if } x,y \in \mathbb{Z}^+, \\
    y \leq x &\text{ if } x,y \in \mathbb{Z}^{-}, \text{ or }\\
    x \leq y &\text{ if } x,y \in (n,n+1] \text{ for some } n \in \mathbb{N} 
    \end{rcases}
    \implies x \preceq_Q y.
\end{equation}
The definition of $\preceq_Q$ is then completed by adding the relations implied by transitivity and \eqref{def Q 2}. Note that $B=P\setminus \{x\}$ and $B'=(Q \cap \mathbb{Q})\setminus(\mathbb{Z}^+\setminus\{0\})$ are countable weak bases for $P$ and $Q$, respectively, which can be shown to be effective. Any definition of computable functions that resembles the one in \eqref{compu funct def} using weak bases will end up relating the elements in $f(B)$ with the elements in $B'$ that are below below them according to $\preceq_Q$, as one does via $\ll$ in the basis case \eqref{compu funct def}. Hence, if we fix the computable element $x = \infty \in P$ in our example, the output of any effective procedure that can be derived from $x$ and $f$ to compute $f(x)=\infty \in Q$ will be contained inside $ B' \cap [0,\infty)$. However, given that there is no directed set $A_\infty \subseteq B' \cap [0,\infty)$ such that $\sqcup A_\infty = \infty$, no effective procedure to compute $f(x)$ can be derived from $f$ and $x$ alone. Nonetheless, $f(x)$ is actually computable in $Q$, since it is the upper bound of $\mathbb Z^{-}$. The issue is that any computational path leading to $f(x)$ in $Q$ has no connection via $f$ and $\preceq_Q$ to the paths leading to $x$ in $P$. As a result, in the non-uniform case, there does not seem to be a definition of computable functions that fulfills our intuition and includes the identity function in general. The situation differs from that in the uniform case since the appeal to $\ll$ prevents the appearance of disconnections. In fact, note that the fact $Q$ has no basis is key in our example (this is easy to see given that there is no $y \in Q$ such that $y \ll_Q \infty$).

In summary, we need some structure aside from $\preceq$, like $\ll$, to define computable functions. This is the case since, in the non-uniform approach, the lack of extra structure provided by $\ll$ can result in a disconnection between the computable paths that lead to $x$ and to $f(x)$. In conclusion, the difference in $(i)$ seems to be fundamental for the introduction of computable functions. Note the inclusion of $(iv)$ fulfills the same purpose in Propositions \ref{compu sets} and \ref{why << needed}.

\begin{comment}
\begin{defi}[Attempt to computable function]
\label{def:computable function}
A function $f:P \to Q$ where $P,Q$ are dcpos with effective weak bases, $B=(b_n)_{n\geq0}$ and $B'=(b'_n)_{n\geq0}$ respectively, is computable if it is continuous and the set
 \begin{equation*}
\{\langle n,m \rangle| b'_n \preceq f(b_m)\}
 \end{equation*}
 is recursively enumerable.
\end{defi}
\end{comment}

%\begin{coro}
%If $Q$ is bounded complete, then computable functions in the weak sense take computable elements to computable elements. (this is false, right?)
%\end{coro}

%Find counterexample where no computability using weak order structure!!

\begin{comment}
Notice if $x$ is computable in the basis sense we just introduced, then $x$ is computable in the weak basis sense.

The opposite, however, is not true in some cases (for example, $P \coloneqq [-1,1]$ where we consider the usual order on $[-1,0]$ and the reverse order on $[0,1]$ has a countable weak basis $B \coloneqq \mathbb{Q} \cap [-1,1]$, which one can show to be effective (do?) using the enumeration $B=(b_n)_{n\geq0}=\{0,1,-1,\frac{1}{2},-\frac{1}{2},\frac{1}{3},\frac{2}{3},-\frac{1}{3},-\frac{2}{3},..\}$, but no countable basis as there is no $x \in P$ such that $x \ll 0$).

On the contrary, given an effective basis, it is weaker to ask for effectivity in the basis sense than asking for effectivity in the weak basis sense. The fact effectivity in the basis sense is enough whenver a countable basis exist is a consequence of the interpolation property (explain?).
\end{comment}

\subsection{Model dependence of computability}
\label{model dep comp}

In this section, we study how computability of elements and functions depends on the structure used to define it in both the uniform and the non-uniform approach. We conclude that computable elements are independent of the chosen finite map in both approaches and that, in the non-uniform approach, the same holds true for computable functions. Regarding the dependence on the chosen weak basis/basis, we discuss why computable elements seem to depend on the weak basis in the non-uniform approach and give general conditions under which both computable elements and functions are independent of the chosen basis in the uniform approach. This difference between both approaches supports the inclusion of restrictions $(i)$ and $(iv)$ as well. 

%\subsubsection{Effective weak bases}
%\label{indep weak}
We begin, in Proposition \ref{indep finite map}, showing computable elements are independent of finite maps and, right after, discuss informally why this is not the case for weak bases. 

\begin{prop}
\label{indep finite map}
If $P$ is a dcpo with an effective weak basis, then the set of computable elements is independent of the chosen finite map.
\end{prop}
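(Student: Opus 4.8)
The plan is to exploit that any two finite maps $\alpha,\alpha'$ for the \emph{same} weak basis $B$ differ only by a relabelling of $\mathbb{N}$, that this relabelling is itself computable, and hence that all the recursive enumerability data appearing in Definitions~\ref{def:eff weak basis} and~\ref{def:compu ele} can be transported through it. In particular, a single directed set $B_x\subseteq B$ witnessing computability of $x$ with respect to $\alpha$ will be shown to witness it with respect to $\alpha'$ as well, once its attached auxiliary data are relabelled.

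Concretely, write $b_n\coloneqq\alpha(n)$ and $b'_n\coloneqq\alpha'(n)$, and set $\sigma\coloneqq(\alpha')^{-1}\circ\alpha\colon\mathbb{N}\to\mathbb{N}$, so that $b_n=b'_{\sigma(n)}$ for every $n$ and $\sigma^{-1}=\alpha^{-1}\circ\alpha'$. First I would check that $\sigma$ and $\sigma^{-1}$ are effectively calculable bijections of $\mathbb{N}$: given $n$, one computes the element $\alpha(n)\in B$ and then its $\alpha'$-index via $(\alpha')^{-1}$, both effectively calculable by the definition of a finite map, so by Church's thesis $\sigma$ and $\sigma^{-1}$ are total computable functions. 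I would also record the routine fact that the image of a recursively enumerable set under a total computable function is recursively enumerable (a mild extension of the remark following Definition~\ref{def:finite map}).

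Next, suppose $x\in P$ is computable with respect to $\alpha$, and let $B_x\subseteq B$ be a directed set with $\sqcup B_x=x$ satisfying the properties in Definition~\ref{def:eff weak basis} and with $\alpha^{-1}(B_x)$ recursively enumerable; let $f$ be the corresponding computable function, so that $f(\mathbb{N})\subseteq\{\langle n,p\rangle\mid b_n\preceq b_p\}$ and every pair $b_n,b_m\in B_x\setminus\{x\}$ admits a common strict upper bound $b_p\in B_x$ with $\langle n,p\rangle,\langle m,p\rangle\in f(\mathbb{N})$. I would then verify that the \emph{same} set $B_x$ witnesses computability of $x$ with respect to $\alpha'$: directedness and $\sqcup B_x=x$ are intrinsic to $B_x$; the function $f$ is replaced by the computable function $f'$ enumerating $\{\langle\sigma(n),\sigma(p)\rangle\mid\langle n,p\rangle\in f(\mathbb{N})\}$, which lands inside $\{\langle i,j\rangle\mid b'_i\preceq b'_j\}$ because $b'_{\sigma(n)}=b_n$, and which transfers the upper-bound clause by applying the $\alpha$-clause to the indices $\sigma^{-1}(i),\sigma^{-1}(j)$ of a given pair $b'_i,b'_j\in B_x\setminus\{x\}$ and relabelling the resulting witness back through $\sigma$; and finally $(\alpha')^{-1}(B_x)=\sigma\big(\alpha^{-1}(B_x)\big)$ is recursively enumerable as the image of a recursively enumerable set under $\sigma$. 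Hence $x$ is computable with respect to $\alpha'$, and the same transport incidentally shows that $B$ remains an effective weak basis with respect to $\alpha'$, so the notion is well posed. Exchanging the roles of $\alpha$ and $\alpha'$ yields the reverse inclusion, and the set of computable elements therefore does not depend on the chosen finite map.

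The proof is essentially bookkeeping; the only points that need care are two. First, the witnessing function $f$ of Definition~\ref{def:eff weak basis} is phrased in terms of the chosen enumeration, so it is \emph{not} automatically enumeration-independent and must be transported along $\sigma$ explicitly --- this is the sole step where the statement is not immediate. Second, one must invoke Church's thesis to pass from ``effectively calculable'', which is all the definition of a finite map provides, to ``total computable function'', exactly as in the remark after Definition~\ref{def:finite map}; together with the closure of recursively enumerable sets under computable images, this completes the argument.
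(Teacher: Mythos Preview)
Your argument is correct and follows the same route as the paper: set up the relabelling $\sigma=(\alpha')^{-1}\circ\alpha$, invoke Church's thesis to make it a total computable bijection, and push the recursively enumerable data for $B_x$ through it. The paper's proof is terser and only transports the set $\alpha^{-1}(B_x)$, whereas you additionally transport the witnessing function $f$ from Definition~\ref{def:eff weak basis}; this extra step is arguably warranted by how the definitions are phrased (the clause $\langle n,p\rangle,\langle m,p\rangle\in f(\mathbb N)$ is enumeration-dependent), so your version is the more careful of the two rather than a different approach.
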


\begin{proof}
%We will show, given an effective weak basis $B$, the computable elements are independent of the chosen finite map.
Assume $\alpha$ and $\beta$ are finite maps for an effective weak basis $B$. In order to get the result, it is sufficient to show, whenever $\alpha^{-1}(B_x)$ is recursively enumerable for some directed set $B_x \subseteq B$, then $\beta^{-1}(B_x)$ is also recursively enumerable. Since $\alpha$ and $\beta$ are finite maps, we have $\beta^{-1} \circ \alpha$ is an effectively computable function in the informal sense. Thus, by Church's thesis \cite{cutland1980computability,rogers1987theory}, there exists a partially computable function, in the formal sense, $h: \mathbb{N} \to \mathbb{N}$ such that $h(n)=\beta^{-1} \circ \alpha(n)$ $\forall n \in \mathbb{N}$. On the other hand, since $\alpha^{-1}(B_x)$ is a recursively enumerable set, then, by definition, there exists a computable function $f:\mathbb{N} \to \mathbb{N}$ such that $f(\mathbb{N})=\alpha^{-1}(B_x)$. As a result, $h \circ f$ is a computable function such that $h \circ f(\mathbb{N}) = \beta^{-1}(B_x)$, that is, $\beta^{-1}(B_x)$ is recursively enumerable.
%by defintion.
\end{proof}

We argue now, informally, why computable elements seem to depend on the chosen weak basis. Take $Q \coloneqq ((-\infty,\infty],\preceq_Q)$, where $\preceq_Q$ consists of the relations in \eqref{def Q 2}, plus $x \preceq_Q y$ if $y \leq x$ with $x,y \in [-(n+1),-n)$ for some $n \geq 0$, and the ones derived from them by transitivity. Note that both $B=(Q \cap \mathbb{Q})\setminus\mathbb{Z}^-$ and $B'=(Q \cap\mathbb{Q})\setminus\mathbb{N}$ can be shown to be effective weak bases for $Q$. Moreover, note that any condition we may impose to prove that computability in the $B$ and $B'$ senses are equivalent would rely on some effective procedure to determine the relation $\preceq$ between the elements in $B$ and $B'$. This would allow us, thus, to find an effective procedure for any basis when such a procedure exists for the other. However, as in the case of the definition of computable functions, the disconnection that is allowed by weak bases and prevented by bases renders this false in our example.
%, however, this is not the case.
More specifically, this does not hold true for $x=\infty \in Q$ since, although $B_x = B \cap \mathbb{N}$ is directed and fulfills $\sqcup B_x=x$, 
there is no directed set $B'_x \subseteq B'$ such that $\sqcup B'_x=x$ and whose elements are related to the ones in $B_x$ by $\preceq$.
Nonetheless, $x$ is of course computable in the $B'$ sense, since we have that $B'_x = B' \cap \mathbb{Z}^-$ fulfills $\sqcup B'_x=x$. However, the point is that such a procedure is completely unrelated in terms of $\preceq$ to $B_x$.
Thus, it is not unreasonable to assume there will be some dcpo where such a disconnection between two bases will lead to some element being computable in one scenario and not in the other one. Regarding the situation in the uniform approach, as we will show in Proposition \ref{model indep bases}, this lack of connection does not show up and, hence, computability is independent of the chosen basis provided a certain condition is fulfilled.

\begin{comment}
\begin{prop}
\label{not indep of weak basis}
Computable elements in the weak basis sense depend on the chosen weak basis. 
\end{prop}

\begin{proof}
 Consider the dcpo $P \coloneqq ((0,\infty],\leq)$. Notice $B \coloneqq \mathbb{Q} \cap (0,\infty]$ is an effective weak basis as one can easily define a finite map $\alpha: \mathbb{N} \to B$ and also see $\{ \langle n,m \rangle |b_n \preceq b_m\}$ is recursively enumerable. Since there are only a countable number of Turing-computable functions, there exists some $x \in \mathbb{R}/\mathbb{Q}$ such that $x$ is not $B$-computable. We can also consider the effective weak basis $B'=B \cup \{x\}$, with the finite map $\alpha':\mathbb{N} \to B'$, $0 \mapsto x$ and $n \mapsto \alpha(n)+1$ for $n>0$ and, although $\{ \langle n,m \rangle |b'_n \preceq b'_m\}$ no longer holds, Definition \ref{def:eff weak basis} does. Notice the $B$-computable elementas are also $B'$-computable (if a set is computable then the set resulting of adding one to each element is also computable), but while $x$ is not $B$-computable, it is $B'$-computable.
 
 CAREFUL,  $\alpha'$ IS NOT A FINITE MAP: HOW DO YOU GET $\alpha'(0)$??
\end{proof}
\end{comment}

%\subsubsection{Effective bases}
%\label{indep domain}

We consider now model dependence of computability in the uniform approach. In particular, in Proposition \ref{model indep bases}, we show both computable elements and functions are independent of the finite maps involved and, under broad hypotheses, also independent of the chosen effective bases.

\begin{prop}
%[Model dependence in the uniform approach]
\label{model indep bases}
If $P$ and $Q$ are dcpos with effective bases, then the following statements hold:
\begin{enumerate}[label=(\roman*)]
\item The set of computable elements in $P$ is independent of the chosen finite map. Moreover, the set of computable functions $f:P \to Q$ is independent of the chosen finite maps.
%Computable elements and functions in the basis sense are independent of the chosen finite map.
\item If $B=(b_n)_{n\geq0}$ and $B'=(b'_n)_{n\geq0}$ are effective bases for $P$ such that $\{\langle n,m\rangle| b'_n \ll b_m\} $ is recursively enumerable, then $x \in P$ is $B'$-computable provided it is $B$-computable.
\item Let $B=(b_n)_{n\geq0}$ and $B'=(b'_n)_{n\geq0}$ be effective bases for $P$ and both $C=(c_n)_{n\geq0}$ and $C'=(c'_n)_{n\geq0}$ be effective bases for $Q$. If both $\{ \langle n,m \rangle | b_n \ll b'_m\}$ and $\{ \langle n,m \rangle | c'_n \ll c_m\}$ are recursively enumerable, then any $(B,C)$-computable function $f:P \to Q$ is $(B',C')$-computable.
\end{enumerate}
\end{prop}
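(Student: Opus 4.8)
The plan is to prove the three statements in order, each time reducing the claim about recursive enumerability to a composition of known computable functions, exactly in the spirit of Proposition \ref{indep finite map}.

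For part $(i)$, I would argue as in Proposition \ref{indep finite map}. Given two finite maps $\alpha,\beta$ for the same effective basis $B$, the composition $\beta^{-1}\circ\alpha$ is effectively calculable, hence by Church's thesis there is a (partially) computable $h$ agreeing with it on $\mathbb{N}$; similarly for its inverse. Then if $\{n\mid b_n\ll x\}$ is recursively enumerable via some computable $f$ with respect to $\alpha$, the set with respect to $\beta$ is $h(f(\mathbb{N}))$, still recursively enumerable. For computable functions one does the same bookkeeping on both sides: a change of finite map on $P$ (resp.\ $Q$) is absorbed by pre-composing (resp.\ post-composing) the enumerator of $\{\langle n,m\rangle\mid b'_n\ll f(b_m)\}$ with a computable translation, using the Cantor pairing function $\langle\cdot,\cdot\rangle$ together with $\pi_1,\pi_2$ to act component-wise. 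The only subtlety is that $\beta^{-1}\circ\alpha$ need only be \emph{partially} computable a priori, but since both maps are bijections onto the same countable set it is in fact total, so the composition with $f$ is genuinely a computable (total recursive) function.

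For part $(ii)$, suppose $x$ is $B$-computable, i.e. $\{n\mid b_n\ll x\}$ is recursively enumerable. I want $\{n\mid b'_n\ll x\}$ to be recursively enumerable. The key identity is
\begin{equation*}
\{n\mid b'_n\ll x\}=\{n\mid \exists m\ \text{with}\ b'_n\ll b_m\ \text{and}\ b_m\ll x\}.
\end{equation*}
The inclusion $(\supseteq)$ is immediate from $b'_n\ll b_m\preceq x$ and \cite[Proposition 2.2.2]{abramsky1994domain}. For $(\subseteq)$, since $B_x\coloneqq\twoheaddownarrow x\cap B$ is directed with $\sqcup B_x=x$ \cite[Proposition 2.2.4]{abramsky1994domain}, if $b'_n\ll x$ then by \cite[Corollary 2.2.16]{abramsky1994domain} there is $b_m\in B_x$ with $b'_n\ll b_m$, and $b_m\ll x$ by definition of $B_x$. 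Now both $\{\langle n,m\rangle\mid b'_n\ll b_m\}$ (by hypothesis) and $\{m\mid b_m\ll x\}$ ($B$-computability of $x$) are recursively enumerable, so the right-hand side is obtained by the same dovetailing construction of a computable function $i$ as in the proof of Proposition \ref{compu sets} (searching through the outputs of the two enumerators, checking the second component of the first against an output of the second, and outputting $0=\perp$-index otherwise). This yields recursive enumerability, hence $B'$-computability.

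Part $(iii)$ is then essentially an iteration of the idea in $(ii)$ applied to the definition \eqref{compu funct def}. Given a $(B,C)$-computable $f$, I want $\{\langle n,m\rangle\mid c'_n\ll f(b'_m)\}$ recursively enumerable from the data that $\{\langle n,m\rangle\mid c_n\ll f(b_m)\}$, $\{\langle n,m\rangle\mid b_n\ll b'_m\}$ and $\{\langle n,m\rangle\mid c'_n\ll c_m\}$ are all recursively enumerable. The bridging identity I would establish is
\begin{equation*}
c'_n\ll f(b'_m)\iff \exists\, k,\ell\ \text{with}\ c'_n\ll c_k,\ c_k\ll f(b_\ell),\ b_\ell\ll b'_m,
\end{equation*}
where $(\Leftarrow)$ uses monotonicity of $f$ (so $b_\ell\ll b'_m$ gives $f(b_\ell)\preceq f(b'_m)$) and transitivity of $\ll$ through $\preceq$, while $(\Rightarrow)$ uses continuity of $f$ together with the interpolation property: if $c'_n\ll f(b'_m)=\sqcup f(\twoheaddownarrow b'_m\cap B)$ then some $c_k$ interpolates, $c'_n\ll c_k\ll f(b_\ell)$ for a suitable $b_\ell\ll b'_m$, and then interpolate once more on $b_\ell\ll b'_m$ inside $B$ if needed to land in the basis. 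Once the identity holds, recursive enumerability of the left side follows by dovetailing over three recursively enumerable sets — a routine (if notationally heavy) extension of the $i$-construction in Proposition \ref{compu sets}, again using $\perp$ for the failure branch — so $f$ is $(B',C')$-computable.

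The main obstacle I anticipate is getting the $(\Rightarrow)$ directions of the two bridging identities exactly right: one must invoke the interpolation property (\cite[Lemma 2.2.15]{abramsky1994domain}) and Corollary 2.2.16 of \cite{abramsky1994domain} carefully so that all the intermediate basis elements ($b_m$ in part $(ii)$; $c_k$ and $b_\ell$ in part $(iii)$) genuinely lie in the relevant basis, not merely in $P$ or $Q$. The enumerability bookkeeping — composing and dovetailing computable functions with the help of $\langle\cdot,\cdot\rangle$, $\pi_1$, $\pi_2$ and the bottom-element index $0$ — is mechanical once those order-theoretic identities are in place, so I would state it by explicit reference to the construction already carried out in Proposition \ref{compu sets} rather than repeating it.
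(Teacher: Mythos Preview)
Your proposal is correct and follows essentially the same route as the paper: part $(i)$ reduces to Proposition~\ref{indep finite map} plus component-wise translation via $\langle\cdot,\cdot\rangle,\pi_1,\pi_2$; parts $(ii)$ and $(iii)$ establish exactly the bridging identities you write and then invoke the dovetailing construction of Proposition~\ref{compu sets} (with a three-way comparison in $(iii)$). One small cleanup: in part $(iii)$ the extra ``interpolate once more on $b_\ell\ll b'_m$ inside $B$'' is unnecessary, since $b_\ell$ is already chosen from $\twoheaddownarrow b'_m\cap B$; the two steps you actually need are first to interpolate $c'_n\ll c_k\ll f(b'_m)$ with $c_k\in C$, and then to apply \cite[Corollary~2.2.16]{abramsky1994domain} to $c_k\ll f(b'_m)=\sqcup\{f(b_\ell)\mid b_\ell\in B,\ b_\ell\ll b'_m\}$ to obtain $c_k\ll f(b_\ell)$ for some such $b_\ell$.
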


\begin{proof}
$(i)$ To show the first statement, we can slightly modify the proof of Proposition \ref{indep finite map}. We finish showing the result for computable functions. Assume we
%have dcpos $P,Q$ with effective bases
$B\subseteq P$ and $B'\subseteq Q$ are effective bases, $\alpha,\beta: \mathbb{N} \to B$ and $\alpha',\beta': \mathbb{N} \to B'$ are finite maps  and $f:P \to Q$ is a $(\alpha,\alpha')$-computable function. We will show $f$ is $(\beta,\beta')$-computable. As in the proof of Proposition \ref{indep finite map}, by Church's thesis, there exist partially computable functions $g,h:\subseteq \mathbb{N} \to \mathbb{N}$ such that $g= \beta^{-1} \circ \alpha$ and $h= (\beta')^{-1} \circ \alpha'$. Thus, given a computable function $i:\mathbb{N} \to \mathbb{N}$ such that $i(\mathbb{N})=\{\langle n,m \rangle| \alpha(n) \ll f(\alpha'(m))\}$, we have $j: \mathbb{N} \to \mathbb{N}$, $n \mapsto \langle g \circ \pi_1 \circ i(n), h \circ \pi_2 \circ i(n) \rangle$ is a computable function, since $g,h,\pi_1,\pi_2$ and $i$ are, and, by definition of $g$ and $h$, $j(\mathbb{N})=\{ \langle n,m \rangle| \beta(n) \ll f(\beta'(m)) \}$.

$(ii)$ If $x \in P$ is $B$-computable, then there exists a computable function $f: \mathbb{N} \to \mathbb{N}$ such that $f(\mathbb{N})=\{n | b_n \ll x\}$. Note there is another computable function $j: \mathbb{N} \to \mathbb{N}$ such that $j(\mathbb{N})=\{\langle n,m\rangle| b'_n \ll b_m\}$ by hypothesis and, given that $b'_m \in B'_x = \twoheaddownarrow x \cap B'$, there exists some $b_m \in B_x$ such that $b'_m \ll b_m \ll x$ by the interpolation property \cite{abramsky1994domain}. We can follow the proof of Proposition \ref{compu sets} to construct a computable function $i: \mathbb{N} \to \mathbb{N}$ such that $i(\mathbb{N})=\{n | b'_n \ll x\}$. In particular,  we take, in the notation of Proposition \ref{compu sets}, $g \coloneqq f$ and $h \coloneqq j$.
%by comparing $\pi_2 \circ h(p)$ for all $p \in \mathbb{N}$ with $f(q)$ for all $q \in \mathbb{N}$ and taking $0$ as output in case $\pi_2 \circ h(p) \neq f(q)$ or $\pi_1 \circ h(p)$ otherwise.
Note we have $i(\mathbb{N})=\{n|b'_n \ll x\}$ as a consequence of the interpolation property.

$(iii)$ We will construct a computable function $i: \mathbb{N} \to \mathbb{N}$ such that $i(\mathbb{N})=\{ \langle n,m \rangle | c'_n \ll f(b'_m)\}$. Notice, by assumption, there exist computable functions $g,h,j: \mathbb{N} \to \mathbb{N}$ such that $g(\mathbb{N})=\{ \langle n,m \rangle| c'_n \ll c_m\}$, $h(\mathbb{N})=\{ \langle n,m \rangle| c_n \ll f(b_m)\}$ and $j(\mathbb{N})=\{ \langle n,m \rangle| b_n \ll b'_m\}$. Notice, also, if $c'_n \ll f(b'_m)$ for some $n,m \in \mathbb{N}$, then there exists some $p \in \mathbb{N}$ such that $c'_n \ll c_p \ll f(b'_m)$ by the interpolation property. Since $B=(b_n)_{n\geq0}$ is a basis for $P$ and $f$ is continuous in the Scott topology, we have $\sqcup \{ f(b_n)| b_n \ll b'_m\} = f(b'_m)$ $\forall m \in \mathbb{N}$. Thus, whenever $c_p \ll f(b'_m)$ for some $p \in \mathbb{N}$, there exists some $q \in \mathbb{N}$ such that $c_p \ll f(b_q)$ by \cite[Corollary 2.2.16]{abramsky1994domain}. Because of the previous observations, we can now construct $i$ similarly
%fashion as in the proof of Proposition \ref{why << needed}
to how we constructed it in the proof of Proposition \ref{compu sets}, only, in this case, we need to compare the output of three functions (instead of two) in order to determine the output of $i$. We can fix the comparison procedure in a similar way to how we fixed it in Proposition \ref{compu sets}. Once this is done, we get for each $n \in \mathbb{N}$ a tuple $(r_n,s_n,t_n) \in \mathbb{N}^3$ and will output $i(n) \coloneqq \langle \pi_1 \circ g(r_n),\pi_2 \circ j(t_n) \rangle$ in case we have both $\pi_2 \circ g(r_n) = \pi_1 \circ h(s_n)$ and $\pi_2 \circ h(s_n) = \pi_1 \circ j(t_n)$ and, if any of these is false, $i(n) \coloneqq 0$.
%by comparing all tuples $(r,s,t) \in \mathbb{N}^3$ and taking $0$ as output, in case either $\pi_2 \circ g(r) \neq \pi_1 \circ h(s)$ or $\pi_2 \circ h(s) \neq \pi_1 \circ j(t)$, and $\langle \pi_1 \circ g(r),\pi_2 \circ j(t) \rangle$ otherwise.
\end{proof}

Note that, informally, the hypotheses in Proposition \ref{model indep bases} $(ii)$ and $(iii)$ are likely to be true. This is the case since, as the involved basis elements have a finite representation given, respectively, by the finite maps and $\ll$ is recursively enumerable by hypothesis when considered between elements of the same basis, it is reasonable to expect $\ll$ to be recursively enumerable when comparing pairs of elements in different bases. Note, also, the proof of Proposition \ref{model indep bases} $(ii)$ and $(iii)$ supports restrictions $(i)$ and $(iv)$ as well. The first one, $(i)$, is supported since, as we discussed right above Proposition \ref{model indep bases}, there does not seem to be a hypothesis under which computability is independent of the weak basis in the non-uniform approach. On the contrary, we give a general hypothesis, which is likely to be true, that supports model independence in the non-uniform approach. The second one, $(iv)$, is supported by Proposition \ref{model indep bases} along similar lines than Proposition \ref{why << needed}, since it provides a correct output when some comparisons fail to be true.

\subsection{Complexity}
\label{complexity}

In this section, we study how complexity of elements and functions can be introduced in both the non-uniform and the uniform approaches. We conclude that complexity can be defined for both in the uniform approach and that, although computable elements are defined in the non-uniform framework, there does not seem to be a suitable complexity notion for them. Notice, to our best knowledge, complexity has not been previously addressed in the uniform framework, that is, in domain theory.

%order-theoretic approaches to computability, we introduce it here.We consider the uniform approach coming from domain theory, since we are interested in having a notion of complexity for both elements and functions, which, in the non-uniform case, would not be possible since, as we discussed in Section \ref{diff comp func}, a proper definition of computable functions does not seem possible there.

We follow closely the approach to complexity for real numbers and real-valued functions in \cite{ko1998polynomial,ko2012complexity} to develop complexity for the
uniform order-theoretic approach. There, finite maps $\alpha:\mathbb{N} \to D$ are considered, where $D$ is the set of dyadic rationals, that is, the rational numbers with a finite binary expansion.
%, and a finite surjective map $\alpha:\mathbb{N} \to D$,
Using these, a real number $x \in \mathbb{R}$ is said to be \emph{computable} if there exists a computable function $\phi: \mathbb{N} \rightarrow \mathbb{N}$ such that for all $n \geq 0$ we have $|\alpha (\phi(n))-x| \leq 2^{-n}$. Moreover, the time (space) complexity of $x$ is said to be bounded by $t:\mathbb{N} \rightarrow \mathbb{N}$ if there exists a computable function $\phi: \mathbb{N} \rightarrow \mathbb{N}$ which,
%Turing machine which computes, on
for any $n \in \mathbb{N}$, calculates $\phi(n)$ in $t(n)$ steps (using $t(n)$ memory cells) and $|\alpha(\phi(n))-x|\leq 2^{-n}$.
%Note this notion of computable real numbers $x \in \mathbb{R}$ is equivalent to the one coming from the interval domain \cite[Corollary 19]{edalat1999domain} and also equivalent to the set of rational numbers smaller than $x$ being recursive \cite[Proposition 1.2]{ko1998polynomial}.

This approach to computable real numbers requires a computational procedure to enumerate a subset of the dyadic rationals which converges to a real number \emph{with certain convergence rate}. This is done so in order to be able to extract information about the result of some process by examining its outputs as, if  no restriction on the convergence rate exists, then a finite subset of the dyadic rationals could converge to any real number \cite{di1996real}. Thus, the condition on the convergence rate is needed to obtain information regarding the direction of some computational process by examining a finite subset of its outputs, the same intuitive reason why we introduced effectivity for weak bases. However, the convergence rate is needed to measure complexity since, even assuming we can determine the direction of the process in some other way, the complexity of a real number is supposed to capture the idea of how hard it is to extract information about it and, as such, certain measure of how much information about it the partial outputs of the process have is needed. Because of this, the reason why a distance function appears in the definition of computability and complexity in \cite{ko1998polynomial,ko2012complexity} is
slightly different. In the first one, it is used to assure the process converges to the desired real number $x \in \mathbb{R}$, while, in the second one, it measures how close certain output is from $x$. This is key to define complexity, as it aims to reflect the amount of resources needed to achieve certain level of precise knowledge about $x$. This difference surfaces in the uniform order-theoretic approach, where, although the distance functions is irrelevant to assure convergence and, thus, to define computability (see Definition \ref{def:compu ele}), a closely related notion is indeed needed in order to define complexity (see Definition \ref{def:element complex}).   

The main ingredients used in the definition of computable real numbers and complexity are: (a) for any $x$, the existence of a recursive map $\phi$ whose image under a finite map $\alpha$ converges to $x$ and (b) the existence of a metric $d$, here $d(x,y)=|x-y|$, a way of measuring precision. Note that we specifically use $f(n)=2^{-n}$ as a bound on precision in the definition of computable real numbers, although we could have used any $f:\mathbb{N} \rightarrow \mathbb{R}$ where $\lim_{n \rightarrow \infty} f(n)=0$ holds. Note, also, we can substitute $D$ by $\mathbb{Q}$ \cite{ko1998polynomial}.
%Before adressing (a) and (b) in domain theory, we need to introduce the convergence relationship we will consider for dcpos. Open sets in the \emph{Scott topology} are subsets $O \subseteq P$ which are upper closed, given $x \in O$ and $y$ such that $x \preceq y$ then $y \in O$, and innaccessible by directed suprema, given a directed set $A$ such that $\sqcup A \in O$ then $A \cap O \neq \emptyset$ \cite{abramsky1994domain}. We will denote the Scott topology by $\sigma(P)$. Unless otherwise stated, the Scott topology will be assumed for any dcpo we consider in the following. Notice the continuous functions between dcpos we defined above are actually continuous in the Scott topology. 
%$\{x \in P|b_n \ll x\} \in \sigma(P)$ for all $n$ and

The definition of an element $x$ being computable in the uniform order-theoretic approach takes care of (a), since we require the existence of a computable function $\phi:\mathbb{N} \to \mathbb{N}$ whose output, translated via a finite map $\alpha: \mathbb{N} \to B$, is the set of elements in some countable basis $B$ which are way-below $x$ and, thus, converge to $x$ in the Scott topology. Dealing with (b) is not exactly the same since, if a distance function exists, the Scott topology is Hausdorff and $\preceq$ becomes trivial by \eqref{charac order by topo}, as for any pair $x,y \in P$ $x \neq y$ there would be some $O \in \sigma(P)$ such that $x \in O$ and $y \not \in O$. As we discussed in Section \ref{order in compu}, this prevents us from introducing computability to uncountable spaces, since we end up with in the theory of numberings \cite{ershov1999theory}. Given that we cannot avoid a notion of precision to define of complexity, we use a definition from domain theory to fill in this gap: maps inducing the Scott topology everywhere \cite{martin2000foundation,martin2008technique,waszkiewicz2001distance,waszkiewicz2003quantitative}. A monotone map $\mu:P \rightarrow [0,\infty)^{op}$, where $[0,\infty)^{op}$ denotes the dcpo composed by the non-negative real numbers equipped with \emph{reversed order} $\leq_{op}$, that is, $x \leq_{op} y$ if and only if $x \geq y$ for all $x,y \in [0,\infty)$,  is said to \emph{induce the Scott topology everywhere in $P$} if, for all $O \in \sigma(P)$ and  $x \in P$, there exists some $\varepsilon >0$, whenever $x \in O$, such that $x \in \mu_{\varepsilon}(x) \subseteq O$, where
\begin{equation*}
    \mu_\varepsilon(x) \coloneqq \{ y \in P| y \preceq x \text{ and } \mu(y)< \varepsilon\}.
\end{equation*}

Notice the main idea behind $\mu$ is to have a magnitude that quantifies the amount of information a computational process has gathered about what it is intending to calculate. Since, in our picture, any element is characterized in terms of the Scott open sets it belongs to, such a magnitude is intimately related to them. To follow previous approaches through distance functions, we need, in particular, continuous maps that induce the Scott topology everywhere to define complexity. 
In summary, maps inducing the Scott topology everywhere substitute distance functions in our approach. This is the case since any element can be characterized through all the open sets in the Scott topology it belongs to, given that the Scott topology is $T_0$, and approaching $\mu(x)$ in any computational process leading to $x$ implies the outputs of the process posses more information about $x$.

Now that we have introduced the substitute for distance functions in the order-theoretic setups, we can define complexity of both elements and functions in the uniform approach. Note, after dealing with the uniform approach, we will address complexity in the non-uniform one.

\begin{defi}[Element complexity]
\label{def:element complex}
Take $P$ a dcpo with an effective basis $(b_n)_{n\geq0}$ and $\mu$ a continuous map inducing the Scott topology everywhere in $P$. We say the time (space) complexity of an element $x \in P$ is bounded by $t:\mathbb{N} \rightarrow \mathbb{N}$ if there exists a computable function $\phi:\mathbb{N} \to \mathbb{N}$ which computes $\phi(n)$ in $t(n)$ steps (using $t(n)$ cells)  for all $n \in \mathbb{N}$
%some $m \in \mathbb{N}$ such that
and we have both
%$b_{\phi(n)} \ll x$ and
$\mu(b_{\phi(n)})-\mu(x) < 2^{-n}$ $\forall n \geq 0$ and $\sqcup (b_{\phi(n)})_{n\geq 0}=x$.
\end{defi}

%As an example where complexity can be introduced following Definition \ref{def:element complex}, we can consider the dcpo $P= \mathbb{R} \cup \{+\infty\}$, where $x \preceq y \iff x \leq y$ with  $x,y \in \mathbb{R}$ or $y=+\infty$, and $\mu: P \to [0,\infty)^{op}$, where $+ \infty \mapsto 0$ and $x \mapsto e^{-x}$ for all $x \in \mathbb R$. Note that $P$ introduces a notion of computable real numbers, since we can take $(b_n)_{n\geq0}=\mathbb{Q}$ as effective basis and obtain that $x$ is computable if $\{n \in \mathbb{N}|b_n < x\}$ is recursively enumerable. This is the case since one can directly see that $q \ll x$ if and only if $q < x$ for all $(q,x) \in \mathbb{Q} \times \mathbb{R}$. Furthermore, because of $\mu$, we can also introduce complexity on the real numbers through $P$.  Notice, while we added $+\infty$ to $P$ in order for it to be a dcpo, it is irrelevant regarding the computability of real numbers.

As an example where complexity can be introduced following Definition \ref{def:element complex}, we can consider the dcpo $P= [0,1]$, with the usual ordering in the real numbers $\leq$,
%, where $x \preceq y \iff x \leq y$ with $x,y \in \mathbb{R}$ or $y=+\infty$,
$\mu: P \to [0,\infty)^{op}$ where $x \mapsto 1-x$, and $(b_n)_{n\geq0}$ is the numeration of the rationals in $P\backslash\{1\}$ that starts with $0$ and orders the rest lexicographically, that is, it begins with the ones with lower denominator and orders those with the same denominator starting with those with lower numerator. Note that $P$ introduces a notion of computable real numbers, since $(b_n)_{n\geq0}=\mathbb{Q}$ is an effective basis and we have, thus, that $x$ is computable if $\{n \in \mathbb{N}|b_n < x\}$ is recursively enumerable. This is the case since one can directly see that $q \ll x$ if and only if $q < x$ for all $(q,x) \in \mathbb{Q} \times \mathbb{R}$. Furthermore, because of $\mu$, we can also introduce complexity on the real numbers through $P$. For example, in this setup, the time complexity of $x \coloneqq 1$ is bounded by $t_0:\mathbb{N} \rightarrow \mathbb{N}$, where $0 \mapsto 1$ and, if $n \geq 1$, $n \mapsto 2^{n+1}-2$. To see this, take the map $\phi_0: \mathbb N \to \mathbb N$, where $0 \mapsto 1$ and, if $n \geq 1$, $n \mapsto 1 + \sum^{2^{n+1}-1}_{i=2} (i-1)$. To conclude, note that $\phi_0(n)$ is computed in $t_0(n)$ steps for each $n \in \mathbb N$, that $\mu(b_{\phi_0(n)})-\mu(x)=2^{-(n+1)} < 2^{-n}$ $\forall n \geq 0$, and that $\sqcup (b_{\phi_0(n)})_{n\geq 0}=x$. Hence, the time complexity of $x$ is bounded by $t_0$.

As a more interesting example, we can, analogously, bound the complexity of $x \coloneqq \pi$ using $P \coloneqq [3,4]$ with its usual ordering and the same $(b_n)_{n \geq 0}$ from the last example but adding $3$ to each element ($\mu$ is also changed accordingly). Since giving explicit formulas is somewhat tedious in this case, we simply expose the general idea. The basic construction starts with an increasing sequence of rational numbers that converges to $x$, $(a_n)_{n \geq 0} \subseteq P \cap (b_n)_{n \geq 0}$. For example, we can follow Leibniz's formula and take $a_n \coloneqq 8 \sum_{k=0}^{n} ((4k+1)(4k+3))^{-1}$ for all $n \geq 0$. The next step is to profit from the regularity of $(a_n)_{n \geq 0}$ to find a map $\phi_\pi: \mathbb N \to \mathbb N$ with the properties required in the definition. This may involve iterating over a map $\phi'_\pi: \mathbb N \to \mathbb N$ that exhaustively enumerates the indices associated to $(a_n)_{n \geq 0}$ as members of $(b_n)_{n \geq 0}$ and following the iteration until the desired distance, in terms of $\mu$, is achieved (this can be evaluated by using a rational upper bound of $\pi$). To conclude, we can use $\phi_\pi$ to determine a bound $t_\pi$ on the complexity of $\pi$.

Note that continuous maps inducing the Scott topology everywhere exist for all dcpos with effective bases (see Proposition \ref{existence map inducing} in the Appendix \ref{inducing}). The basic idea behind Definition \ref{def:element complex} is that any $x \in P$ is characterized by the open sets (under the Scott topology) it belongs to. Thus, $\mu$ measures the information about $x$ at each stage of some computational process that converges to $x$.
Note we use Proposition \ref{compu sets} in Definition \ref{def:element complex}, since we ask for certain computational process converging to $x$ instead of limiting ourselves to the computational process through which $x$ is defined in the uniform approach. Note, also, we used again $f(n)=2^{-n}$ to bound precision by analogy with the case of the real numbers above, although we could, alternatively, have used some other function as we mentioned above. Note the continuity of $\mu$ is key in order for $(\mu(b_{\phi(n)}))_{n \geq 0}$ to get as close as desired to $\mu(x)$. In fact, there exists some maps that induce the Scott topology everywhere but fail to be continuous, for example, taking $P \coloneqq ([0,1],\leq_{op})$, $\mu_0:P \to [0,\infty)^{op}$, where $0 \mapsto 0$ and $x \mapsto x + \frac{1}{2}$. We continue defining,  analogously, complexity for functions between effectively given domains $f:D \rightarrow D'$.

\begin{defi}[Function complexity]
\label{def:function complex}
Take $f:D \rightarrow D'$ a computable function between dcpos $D,D'$ with effectively given bases $(b_n)_{n\geq0}$ and $(b'_n)_{n\geq0}$, respectively, and $\mu$ a map inducing the Scott topology everywhere in $D'$. We say the time (space) complexity of $f$ is bounded by $t:\mathbb{N} \rightarrow \mathbb{N}$ if there exists a computable function $\phi:\mathbb{N} \to \mathbb{N}$ 
%Turing machine which computes on each input $n \in \mathbb{N}$ some $m \in \mathbb{N}$ in $t(n)$ steps (using $t(n)$ cells)
which computes $\phi(n)$ in $t(n)$ steps (using $t(n)$ cells) for all $n \in \mathbb{N}$
and we have both $\mu(b'_{\phi(n)})-\mu(f(b_{\pi_1(n)})) < 2^{-\pi_2(n)}$ for all $n \in \mathbb{N}$ and $\sqcup (b'_{\phi(n)})_{n \in \{\langle m,p \rangle| p \in \mathbb{N}\}} = f(b_m)$ for all $m \in \mathbb{
N}$.
\end{defi}
\begin{comment}
\begin{defi}[Polynomial-time computable functions]
A computable function $f:D \rightarrow D'$ is said to be polynomial-time computable if its time complexity is bounded by a polynomial $p:\mathbb{N} \rightarrow \mathbb{N}$.
\end{defi}
\end{comment}

Notice, we can use the same dcpo $P$ we specified after Definition \ref{def:element complex} to introduce complexity for functions between the real numbers according to Definition \ref{def:function complex}. However, for the sake of brevity, we do not include any specific example of function complexity here. 

Definition \ref{def:function complex} follows the same reason behind Definition \ref{def:element complex}, although with a global approach to approximating the image through $f$ of all the elements in some basis $B \subseteq D$. 
Note that we also use Proposition \ref{compu sets} in Definition \ref{def:function complex}. We can do so since we restrict the definition of complexity to computable functions, which preserve the intuition we aim to capture (see Proposition \ref{why << needed}).

Any computational resource of interested other than time and space can be used in Definitions \ref{def:element complex} and \ref{def:function complex}.
%complexity  in there is nothing special about time or space in the definitions in this section, one can substitute them by.
In the particular case of time, the previous definitions allow us to address polynomial computability. 

\begin{defi}[Polynomial-time computable elements and functions]
Either a computable element $x \in P$ or a computable function $f:D \rightarrow D'$  is said to be polynomial-time computable if its time complexity is bounded by a polynomial $p:\mathbb{N} \rightarrow \mathbb{N}$.
\end{defi}

Now that we have introduced the complexity in the uniform approach, we can address it in the non-uniform framework. Since the characterization of elements by their properties is given by the Scott topology in both the uniform and the non-uniform approach, we require the existence of continuous maps inducing the Scott topology everywhere in both of them in order to define the complexity of elements. It is here, however, where we encounter another difference between these approaches. As we show in Proposition \ref{existence map inducing} in the Appendix \ref{inducing}, every dcpo with a countable basis has a continuous map that induces the Scott topology everywhere \cite[Theorem 2.5.1]{martin2000foundation}. There, it is key that the Scott topology is second countable for any dcpo with a countable basis. However, this is no longer true for dcpos where only countable weak bases exist, as we show in Proposition \ref{prop no weak complex}. Moreover, in this scenario, continuous maps that induce the Scott topology everywhere do not exist. 

\begin{prop}
\label{prop no weak complex}
The following statements hold:
\begin{enumerate}[label=(\roman*)]
\item There exist dcpos where effective weak bases exist and the Scott topology is not second countable.
\item Moreover, there exist dcpos where effective weak bases exist and continuous maps that induce the Scott topology everywhere do not.
\end{enumerate}
\end{prop}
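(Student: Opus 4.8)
The plan is to construct a single dcpo $P$ with an effective weak basis whose Scott topology fails to be second countable, and to observe that the same example (or a mild variant) also rules out the existence of any continuous map inducing the Scott topology everywhere, since such a map would force second countability. The natural source of counterexamples is the kind of "disconnected" order already used in Section~\ref{diff comp func} and Section~\ref{model dep comp}: an order where a single point is the supremum of uncountably many mutually incomparable chains, so that this point has an uncountable local basis in the Scott topology that cannot be thinned out.

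For part $(i)$, I would take something like $P \coloneqq \big(([0,1]\times\mathbb{N})\cup\{\top\},\preceq\big)$, where within each "column" $[0,1]\times\{k\}$ the order is the usual $\leq$ on the first coordinate, distinct columns are incomparable, and $\top$ is a top element above everything; or, more economically, a copy of $\bigl((0,1]\times\mathbb{N}\bigr)\cup\{\top\}$ so that each column is already directed-complete below $\top$ and $\top=\sqcup(\text{column }k)$ for every $k$ — but with \emph{uncountably} many columns indexed by an uncountable set $I$ rather than $\mathbb{N}$. Wait — the subtlety is that the basis must be countable, so the indexing set of columns must be such that a countable subset of $P$ is still a weak basis; this is possible if each column is approximated from \emph{within a shared countable set}, e.g.\ if all columns share the same underlying set $[0,1]$ but only the rationals are needed to approximate each element, and $\top$ is approximated by, say, the sequence $(1-1/n,0)_n$ in column $0$. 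So the basis $B$ can be $\bigl((\mathbb{Q}\cap(0,1))\times\mathbb{N}\bigr)$ (countable, effective via a straightforward pairing-and-enumeration argument as in Proposition~\ref{weak basis compu and no basis compu}), while the point whose neighbourhoods resist countability is $\top$: any Scott-open set containing $\top$ must, for \emph{every} column $k$, contain a tail of that column (since $\top$ is the sup of each column), and one checks that the collection of basic opens $\twoheaduparrow$-style sets needed to separate these behaviours at $\top$ is genuinely uncountable — no countable family of opens can form a neighbourhood basis at $\top$. Then $\sigma(P)$ is not second countable, giving $(i)$.

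For part $(ii)$, the key leverage is the standard fact that if a $T_0$ space admits a map inducing its topology "everywhere" in the sense defined above, then — since $\mu$ takes values in $[0,\infty)$ and the sets $\mu_\varepsilon(x)$ for rational $\varepsilon$ suffice — one obtains a countable neighbourhood basis at each point and, combined with the countable basis $B$, a countable basis for the whole topology; equivalently, the existence of such a $\mu$ (continuous or not, but here we want it at least to generate enough opens) forces $\sigma(P)$ to be second countable, cf.\ the discussion preceding Proposition~\ref{existence map inducing}. So the very same dcpo from $(i)$, whose Scott topology is not second countable, cannot carry a continuous map inducing the Scott topology everywhere. The cleanest write-up would therefore be: prove $(i)$ with an explicit $P$ and $B$, verify $B$ is an effective weak basis, show $\sigma(P)$ is not second countable by exhibiting the point $\top$ with no countable local basis, and then deduce $(ii)$ immediately from the implication "(continuous map inducing Scott topology everywhere) $\Rightarrow$ (second countable)" applied to this $P$.

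\textbf{Main obstacle.} The delicate point is not the topology argument but the bookkeeping that makes $B$ simultaneously (a) a genuine weak basis — every element, including $\top$ and every $(q,k)$ with $q$ irrational, must be a directed sup of basis elements — and (b) countable \emph{and} effective in the precise sense of Definition~\ref{def:eff weak basis}, i.e.\ equipped with a finite map $\alpha:\mathbb{N}\to B$ and a computable $f$ witnessing the directedness-with-interpolation clause. Making the column index set large enough to break second countability while keeping $B$ countable requires choosing the example so that the "uncountable part" lives only in the non-basis points; the safest route is to let the columns be indexed by $\mathbb{N}$ after all but instead break second countability at $\top$ through a different mechanism — for instance, attaching to $\top$ an antichain $\{c_i : i\in I\}$ with $I$ uncountable, each $c_i$ below $\top$ and above a shared countable approximating set — and then arguing that Scott-open neighbourhoods of $\top$ must decide, independently, membership of each $c_i$. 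Verifying that this really defeats every countable candidate neighbourhood basis (rather than just the obvious one) is where I would spend the most care, and it is also where I would double-check that $\top$ is still a \emph{directed} sup of countably many basis elements so that Definition~\ref{def:weak basis} is met.
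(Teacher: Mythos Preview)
Your plan for $(i)$ is on the right track and close to the paper's construction, but two details need fixing. First, the columns must have no maximum: with $(0,1]\times\{k\}$ each column has top $(1,k)\prec\top$, so the supremum of any directed subset of your basis $B\subseteq\bigcup_k(0,1)\times\{k\}$ lying in column $k$ is at most $(1,k)$, and $\top$ is never realised as a directed sup from $B$ --- hence $B$ is not a weak basis. Use $[0,1)$-columns, or, as the paper does, simply $I_n\coloneqq\mathbb{N}$ for each $n\in\mathbb{N}$ with $B=\bigcup_n I_n$. Second, you do \emph{not} need any uncountable antichain: countably many pairwise incomparable columns already break second (indeed first) countability at the top by the diagonal argument you sketch. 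Given a putative countable neighbourhood basis $(O_m)_{m\geq0}$ at $p$, each $O_m$ meets every column; letting $b_n^m$ be the least element of $O_m$ in column $n$, the set $O'\coloneqq\{p\}\cup\bigcup_n[b_n^n+1,p]$ is Scott open, contains $p$, but contains no $O_m$. This is exactly the paper's argument.

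For $(ii)$ your proposed route --- deduce it from a general implication ``continuous $\mu$ inducing the Scott topology everywhere $\Rightarrow$ $\sigma(P)$ second countable'' --- has a real gap. The sets $\mu_\varepsilon(x)=\{y\preceq x:\mu(y)<\varepsilon\}$ lie inside $\downarrow x$ and are \emph{not} neighbourhoods of $x$ in general, so they do not give a countable local basis; your appeal to Proposition~\ref{existence map inducing} is misplaced, since that result only proves the converse implication, and its part $(ii)$ in fact exhibits a dcpo (the trivial order on $\mathbb{R}$) carrying such a $\mu$ with $\sigma(P)$ not second countable. The paper therefore argues $(ii)$ directly in the example: using continuity of $\mu$ to find, for each rational $q_n>0$, a minimal $m_n$ in column $n$ with $\mu(m_n)<q_n$, and then diagonalising against any $\mu_\varepsilon(p)$ via the open set $\bigcup_n[m_n+1,p]$. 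Your idea \emph{can} be salvaged cheaply, though, once you observe that at the top element one has $\mu_\varepsilon(p)=\{y\in P:\mu(y)<\varepsilon\}=\mu^{-1}([0,\varepsilon))$, which \emph{is} Scott open by continuity of $\mu$; hence $(\mu_q(p))_{q\in\mathbb{Q}^+}$ would be a genuine countable local basis at $p$, contradicting the diagonalisation from $(i)$. That shortcut is legitimate, but you must state it at $p$ specifically rather than invoke a general implication you have not established.
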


\begin{proof}
$(i)$ Take $P \coloneqq ( (I_n)_{n\geq 0} \cup \{p\}, \preceq)$, where $I_n \coloneqq \mathbb{N}$ for all $n \geq 0$ and we have for all $x,y \in P$
\begin{equation*}
    x \preceq y \iff 
    \begin{cases}
    x \leq y &\text{ if } x,y \in I_n \text{ for some } n\geq 0,\\
    y=p.
    \end{cases}
\end{equation*}
(See Figure \ref{no weak complexity} for a representation of $P$.) Note $B \coloneqq (I_n)_{n\geq 0}$ is an effective weak basis. We conclude showing there are no countable bases of $\sigma(P)$. In order to do so, we first characterize the open sets in $P$. Take, thus, some $O \in \sigma(P)$, $O \neq \emptyset$. Since $O$ is upper closed, we have $p \in O$. Moreover, since any directed set $D \subseteq P$, for which $\sqcup D \in O$ holds, fulfills $D \cap O \neq \emptyset$, there exists some $b_n \in I_n$ such that $b_n \in O$ for all $n \geq 0$. Consider, for all $n \geq 0,$ $b^O_n$ the minimal element in $I_n$ that belongs to $O$ and note $O = \cup_{n \geq 0} [b^O_n,p]$, where $[x,p] \coloneqq \{y \in P| x \preceq y \preceq p\}$ for all $x \in P$. Assume, hence, $(O_m)_{m\geq0}$ is a countable basis of $\sigma(P)$, where, as we just showed, $O_m = \cup_{n \geq 0} [b^m_n,p]$ with $b^m_n \in I_n$ for all $n,m \geq 0$. Define, then, $O' \coloneqq \cup_{n \geq 0} [b'_n,p]$, where $b'_n \in I_n$ and $b^n_n < b'_n$ for all $n \geq 0$. Note $O' \in \sigma(P)$ and $O' \neq \emptyset$. Take, thus, some $y \in O'$ and note $O_m \not \subseteq O'$ for all $m \geq 0$ by construction, since $b^m_m \not \in O'$. This contradicts the fact $(O_m)_{m \geq 0}$ is a countable basis of $\sigma(P)$.

\begin{comment}
$(ii)$
We consider here a variation of the dcpo in $(i)$. In particular, we take $P \coloneqq ( (I_n)_{n\geq 1} \cup \{p\}, \preceq)$, where $I_n \coloneqq (0,\infty)$ for all $n \geq 1$ and we have for all $x,y \in P$
\begin{equation*}
    x \preceq y \iff 
    \begin{cases}
    x \leq y &\text{ if } x,y \in I_n \text{ for } n\geq 0,\\
    x+1 \leq y &\text{ if } x=n \in I_n \text{ and } y \in I_m \text{ for } 1 \leq n <m,\\
    y=p.
    \end{cases}
\end{equation*}
We can directly see $B \coloneqq \cup_{n\geq1} \mathbb{Q} \cap I_n$ is a countable weak basis. Assume, then, there exists a continuous map $\mu$ which induces the Scott topology everywhere and take $O = \cup_{n \geq 1} (b_n,p]$, where $b_n=n \in I_n$ for all $n \geq 1$. Since $p \in O \in \sigma(P)$, there exists some $\varepsilon >0$ such that $p \in \mu_{\varepsilon}(p) \subseteq O$. Fix such an $\varepsilon$. Since $D \coloneqq (b_n)_{n \geq 1}$ is a directed set such that $\sqcup D =p$ and $\mu_{\varepsilon}(p) \in \sigma(P)$, there exists some $n \geq 1$ such that $\mu(b_n)> \varepsilon$. 
\end{comment}

$(ii)$ Take $P$ the dcpo in $(i)$ and assume there exists a continuous map which induces the Scott topology everywhere $\mu$. We show this leads to contradiction. Fix, thus, such a $\mu$ and assume w.l.o.g. $\mu(x)=0$. Consider $(q_n)_{n\geq0}$ a numeration of the rational numbers in the interval $(0,\infty)$. Since $\sqcup I_n=p$ for all $n \geq 0$ and $p \in \mu^{-1}([0,q_n)) \in \sigma(P)$ given that $\mu$ is continuous, there exist some $m_n \in I_n$ such that $\mu(m_n) < q_n$ for all $n \geq 0$. Pick, in particular, $m_n$ to be the minimal element in $I_n$ with this property and consider the open set $O = \cup_{n \geq 0} [m_n +1,p]$. Since $p \in O$ and $\mu$ induces the Scott topology everywhere, there exists some $\varepsilon>0$ such that $p \in \mu_\varepsilon (p) \subseteq O$. In particular, we have $\mu_{q_n} (p) \subseteq \mu_\varepsilon (p) \subseteq O$ for all $n \geq 0$ such that $\mu(x)<q_n<\varepsilon$. However, fixing such a $q_n$, we have $m_n \in \mu_{q_n} (p)/O$. Since we can argue analogously for any $\varepsilon >0$, we have a contradiction and, hence, there is no continuous map that induces the Scott topology everywhere in $P$.
\end{proof}

\begin{figure}[!tb]
\centering
\begin{tikzpicture}
 \node[rounded corners, draw,fill=blue!10, text height = 10cm,yshift=4.2cm, minimum width = 2cm,label={[anchor=south,below=.1cm]270:\textbf{$I_n$}}] (8) {};
    \node[other node] (1) {$0_n$};
    \node[rounded corners, draw,fill=blue!10, text height = 10cm, minimum width = 2cm,right = .75cm of 1, yshift=4.2cm,label={[anchor=south,below=.1cm]270:\textbf{$I_{n+1}$}}] (9) {};
    \node[other node] (2) [ right = 1cm of 1]  {$0_{n+1}$};
     \node[rounded corners, draw,fill=blue!10, text height = 10cm, minimum width = 2cm, yshift=4.2cm,left = 0.75cm of 1,label={[anchor=south,below=.1cm]270:\textbf{$I_{n-1}$}}] (10) {};
    \node[other node] (3) [ left = 1cm  of 1]  {$0_{n-1}$};
    \node[other node] (4) [above  = 1cm  of 1]  {$1_n$};
    \node[other node] (5) [above  = 1cm  of 2]  {$1_{n+1}$};
    \node[other node] (6) [above  = 1cm  of 3]  {$1_{n-1}$};
    
    \node[other node] (11) [above  = 1cm  of 4]  {$n_{n}$};
    \node[other node] (12) [above  = 1cm  of 5]  {$n_{n+1}$};
    \node[other node] (13) [above  = 1cm  of 6]  {$n_{n-1}$};
     \node[other node] (7) [above  = 4cm  of 11]  {$p$};
     
       \node[other node] (14) [above  = 1cm  of 11]  {${n+1}_{n}$};
         \node[other node] (15) [above  = 1cm  of 12]  {${n+1}_{n+1}$};
           \node[other node] (16) [above  = 1cm  of 13]  {${n+1}_{n-1}$};
     
     \node[smaller dot] [above  = 0.25cm  of 4]  {};
     \node[smaller dot] [above  = 0.5cm  of 4]  {};
     \node[smaller dot] [above  = 0.75cm  of 4]   {};
     
      \node[smaller dot] [above  = 0.25cm  of 5]  {};
     \node[smaller dot] [above  = 0.5cm  of 5]  {};
     \node[smaller dot] [above  = 0.75cm  of 5]   {};
     
      \node[smaller dot] [above  = 0.25cm  of 6]  {};
     \node[smaller dot] [above  = 0.5cm  of 6]  {};
     \node[smaller dot] [above  = 0.75cm  of 6]   {};
     
      \node[smaller dot] [above  = 0.25cm  of 14]  {};
     \node[smaller dot] [above  = 0.5cm  of 14]  {};
     \node[smaller dot] [above  = 0.75cm  of 14] (C)  {};
     
      \node[smaller dot] [above  = 0.25cm  of 15]  {};
     \node[smaller dot] [above  = 0.5cm  of 15]  {};
     \node[smaller dot] [above  = 0.75cm  of 15] (A) {};
     
      \node[smaller dot] [above  = 0.25cm  of 16]  {};
     \node[smaller dot] [above  = 0.5cm  of 16]  {};
     \node[smaller dot] [above  = 0.75cm  of 16] (B) {};
     
     \node[smaller dot] [left  = 0.5cm  of 13]  {};
     \node[smaller dot] [left  = 0.75cm  of 13]  {};
     \node[smaller dot] [left  = 1cm  of 13]  {};
     
     \node[smaller dot] [right  = 0.5cm  of 12]  {};
     \node[smaller dot] [right = 0.75cm  of 12]  {};
     \node[smaller dot] [right = 1cm  of 12]  {};

   \path[draw,thick,->]
    (1) edge node {} (4)
    (2) edge node {} (5)
    (3) edge node {} (6)
    (11) edge node {} (14)
    (12) edge node {} (15)
    (13) edge node {} (16)
    %(12) edge node {} (16)
    %(13) edge node {} (16)
    ;
    
    \path[draw,thick,dotted]
    (A) edge node {} (7)
    (B) edge node {} (7)
    (C) edge node {} (7)
    ;
\end{tikzpicture}
\caption{Dcpo introduced in Proposition \ref{prop no weak complex} where, although countable weak bases exist, the Scott topology is not second countable (in fact, not even first countable) and, moreover, there is no continuous map that induces the Scott topology everywhere. In particular, we represent three elements of $I_{n-1}$, $I_n$ and $I_{n+1}$ for some $n>2$ and their relation through arrows. The thick dotted lines represent eiyher the missing elements in $I_{n-1}$, $I_n$ and $I_{n+1}$ or $I_m$ for both $m<n-1$ and $n+1<m$, while the thin dotted ones represent the fact all elements in $I_{n-1}$, $I_n$ and $I_{n+1}$ are below $p$. The arrows unite elements with those that are immediately above them.}
\label{no weak complexity}
\end{figure}
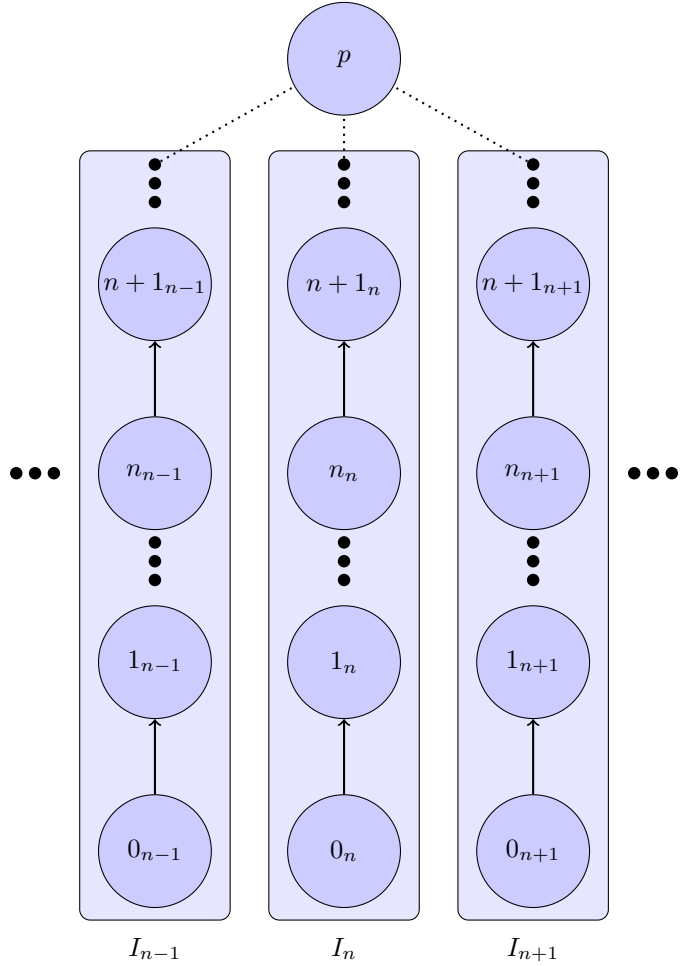

\begin{comment}
\begin{figure}[!tb]
\centering
\begin{tikzpicture}
    \node[other node] (1) {$0_n$};
    \node[other node] (2) [ right = 1cm of 1]  {$0_{n+1}$};
    \node[other node] (3) [ left = 1cm  of 1]  {$0_{n-1}$};
    \node[other node] (4) [above  = 1cm  of 1]  {$1_n$};
    \node[other node] (5) [above  = 1cm  of 2]  {$1_{n+1}$};
    \node[other node] (6) [above  = 1cm  of 3]  {$1_{n-1}$};
    \node[other node] (7) [above  = 1cm  of 4]  {$p$};

   \path[draw,thick,->]
    (1) edge node {} (4)
    (2) edge node {} (5)
    (3) edge node {} (6)
    ;
    \path[draw,thick,dotted]
      (4) edge node {} (7)
    (5) edge node {} (7)
    (6) edge node {} (7)
    ;
\end{tikzpicture}
\caption{Dcpo introduced in Proposition \ref{prop no weak complex} where, although countable weak bases exist, the Scott topology is not second countable. In particular, we represent the first two elements of $I_{n-1}$, $I_n$ and $I_{n+1}$ for some $n>0$ and their relation through arrows. The dotted lines represent the missing elements in $I_{n-1}$, $I_n$ and $I_{n+1}$ plus the fact they are all below $p$.}
\label{no weak complexity}
\end{figure}
\end{comment}

Note $(ii)$ in Proposition \ref{prop no weak complex} implies $(i)$ by Proposition \ref{existence map inducing} $(i)$. Note, also, $P$ in Proposition \ref{prop no weak complex} is an example of a dcpo with an effective weak basis whose Scott topology is not first countable, since any local basis for $p$ is a basis for $\sigma(P)$.

In summary, while there exists a complexity framework for both elements and functions in the uniform approach, there does not seem to be one in the non-uniform framework, since, apparently, a proper notion of precision does not always exist there (as we showed in Proposition \ref{prop no weak complex}).

\section{Conclusion}

We have introduced a more general order-theoretic framework which allows to define computability on uncountable spaces and have compared it with the previous approach in domain theory. We have pointed out differences $(i)-(v)$ in Section \ref{comparing apporaches}
%two categories: mathematical structure and computability definitions
and explored them using four criteria: computable elements (Section \ref{diff comp ele}), computable functions (Section \ref{diff comp func}), model dependence of computability (Section \ref{model dep comp}) and complexity (Section \ref{complexity}).
Our conclusions regarding these differences can be summarized as follows:

\begin{enumerate}[label=(\roman*)]
\item Although the way-below relation $\ll$ reduces the number of ordered spaces where computable elements can be introduced, 
it is essential in order to $(a)$ have a definition of
%is necessary in order to be able to define
computable functions $f$ that preserves the intuition behind them, namely, that computable elements should be mapped to computable elements and, more specifically, that one should be able to construct an effective procedure for $f(x)$ when given one for $x$, $(b)$ to have a notion of computability of elements and functions that is likely to be independent of the chosen computability model, that is, basis and finite map, and $(c)$ to have a notion of complexity for both elements and functions.
\item Computable elements can be defined in the more general framework, which covers a wider family of ordered sets. Even though their definition seems to vary between the non-uniform an the uniform approach, they coincide under the assumptions in the uniform one. 
\item While a definition of computable functions that captures the intuition behind them can be defined in the uniform approach, we argue it seems unlike it can also be done in the more general framework, that is, that some extra structure like $\ll$ is needed.
\item The inclusion of a bottom element $\perp$ is needed in order for certain functions that undergo comparisons to be computable. %computations to terminate.
That is, some functions will perform certain comparisons on some inputs which, if favorable, determine what the output should be. Nonetheless, whenever such comparisons fails, it is useful to know in advance certain output will always be correct for some family of computations. This way, when the comparisons fail, the function can output that correct element and terminate. This is the role $\perp$ and also the reason why $\perp=b_0$ is fixed.
%which allows   where the output will be defined in case these check fulfill certain condition. In case they do not, the algorithm should produce an output $f(n)$ in finite time and such an output should fulfill $b_{f(n)} \ll x$. Thus, whenever the conditions are not fulfilled, we need a natural number that belongs to all computational processes. For this reason, $\perp$ is required and fixed to be $\perp=b_0$, in order to fix the number common to all computational process.The advantage of including $\perp$ is exemplified by the proof of Theorem \ref{compu sets} and that of Proposition \ref{why << needed}.
The inclusion of $\perp$ is relevant to show both definitions of computable elements coincide in the uniform picture, to have a proper definition of computable functions and to reduce the dependence of computability on the model. 
\item The strong version of effectivity in the uniform approach, that is, the fact that $\{ \langle n,m \rangle | b_n \ll b_m\}$ is recursively enumerable, is needed in order to assure that the definition of computable element in both approaches coincide in the uniform framework. This contrasts with the natural extension to the uniform case of the definition of effectivity in the non-uniform approach, which would fail to have this property.
\end{enumerate}

Aside from these differences, we have emphasized the role of finite map, which was not made explicit previously. It was, however, implicit in domain theory, since the elements in a basis are regarded as \emph{finite}, in the sense of having a finite representation \cite{martin2000foundation}.
Finally, we introduced complexity notions in domain theory which, to our best knowledge, were missing. In order to do so, we have translated the main ideas in \cite{ko1998polynomial,ko2012complexity}, where real numbers are addressed, to the framework of domain theory.

\begin{appendix}
\section{Appendix}
\subsection{Proofs}
\label{proofs}
\begin{prop}
\label{cardinality where computability}
If $P$ is a dcpo with a countable weak basis and $\mathfrak{c}$ is the cardinality of the continuum, then $|P| \leq \mathfrak{c}$.
\end{prop}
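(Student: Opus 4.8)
The plan is to embed $P$ injectively into the power set of its countable weak basis. First I would fix a countable weak basis $B \subseteq P$ and define the map $\phi \colon P \to \mathcal{P}(B)$ by $\phi(x) \coloneqq \{b \in B \mid b \preceq x\}$, i.e. $\phi$ sends an element to its ``shadow'' in $B$. Since $B$ is countable, $|\mathcal{P}(B)| \leq |\mathcal{P}(\mathbb{N})| = 2^{\aleph_0} = \mathfrak{c}$, so it suffices to show that $\phi$ is injective, which would give $|P| \leq |\mathcal{P}(B)| \leq \mathfrak{c}$.

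To prove injectivity, suppose $\phi(x) = \phi(y)$ for some $x, y \in P$. By the weak basis property (Definition \ref{def:weak basis}) there is a directed set $B_x \subseteq B$ with $\sqcup B_x = x$. Every $b \in B_x$ satisfies $b \preceq x$, hence $b \in \phi(x) = \phi(y)$, i.e. $b \preceq y$; thus $y$ is an upper bound of $B_x$, and since $x = \sqcup B_x$ is the \emph{least} upper bound, $x \preceq y$ follows. Swapping the roles of $x$ and $y$ gives $y \preceq x$, so $x = y$ by antisymmetry of $\preceq$. This establishes the injectivity of $\phi$ and completes the argument.

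I do not anticipate a serious obstacle here; the only point requiring care is recognizing that one should work with the downward shadow $\phi(x)$ rather than attempting to single out a canonical directed set witnessing $x$ (which need not be unique, as the examples in the paper show), and that antisymmetry is precisely what forces the shadow to determine the element once a weak basis is present. One should also dispatch the trivial edge case where $B$ is finite, in which $\mathcal{P}(B)$ is finite and the bound holds a fortiori.
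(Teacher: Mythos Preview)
Your proposal is correct and follows essentially the same approach as the paper: define $\phi(x)=\{b\in B\mid b\preceq x\}$ (the paper uses $\mathcal{P}(\mathbb{N})$ via an enumeration of $B$, which is cosmetically different), then use the weak basis property together with antisymmetry to show injectivity. If anything, your version is slightly cleaner, since the paper unnecessarily restricts the injectivity argument to $x,y\in P\setminus B$, whereas your argument works uniformly for all $x,y\in P$.
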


\begin{proof}
Take $B=(b_n)_{n\geq0}$ a countable weak basis for $P$ and consider the map $\phi: P \to \mathcal{P}(\mathbb{N})$, $x \mapsto \{n \in \mathbb{N}| b_n \preceq x\}$. If we show $\phi$ is injective, then we have finished, since we get $|P| \leq |\mathcal{P}(\mathbb{N})|= \mathfrak{c}$.
%where $\mathfrak{c}$ is the cardinality of the continuum.
Consider, thus, $x,y \in P/B$ such that $\phi(x)=\phi(y)$. By definition of weak basis, there exist a couple of directed set $B_x \subseteq \phi(x)$, $B_y \subseteq \phi(y)$ such that $\sqcup B_x=x$ and $\sqcup B_y=y$. Since $\phi(x)=\phi(y)$, we have $b \preceq y$ $\forall b \in B_x$, thus, by definition of supremum, $x \preceq y$. Analogously, we have $y \preceq x$. By antisymmetry, we conclude $x=y$, hence, $\phi$ is injective.  
\end{proof}

\begin{prop}
\label{example}
$\Sigma^*$ is an effective weak basis for the Cantor domain $(\Sigma^{\infty},\preceq_C)$.
\end{prop}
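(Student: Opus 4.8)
The plan is to verify, one at a time, the three ingredients packaged into Definition \ref{def:eff weak basis} for $B = \Sigma^*$: that $\Sigma^*$ is a countable weak basis, that it carries a finite map, and that the prefix relation restricted to $\Sigma^*$ is recursively enumerable, so that the stronger criterion recorded just after Definition \ref{def:eff weak basis} applies. A small preliminary is to note that $(\Sigma^\infty,\preceq_C)$ really is a dcpo: by \eqref{Cantor domain} any two elements with a common $\preceq_C$-upper bound are $\preceq_C$-comparable, so every directed subset of $\Sigma^\infty$ is a chain of mutually compatible strings, whose ``union'' is again a string of length $\le \infty$, hence lies in $\Sigma^\infty$ and is its supremum. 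For the weak basis condition (Definition \ref{def:weak basis}), given $x \in \Sigma^\infty$ I would take $B_x \coloneqq \{w \in \Sigma^* \mid w \preceq_C x\}$, the set of finite prefixes of $x$; it is a nonempty chain (it contains the empty string), hence directed, and $\sqcup B_x = x$ follows straight from \eqref{Cantor domain}, since any $y$ with $w \preceq_C y$ for all finite prefixes $w$ of $x$ must satisfy $|y| \ge |x|$ and agree with $x$ on every coordinate $i \le |x|$. Countability of $\Sigma^*$ is immediate from $|\Sigma| < \infty$.

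Next I would exhibit the finite map $\alpha\colon \mathbb{N} \to \Sigma^*$. Fixing a linear order on $\Sigma$, enumerate $\Sigma^*$ in length-lexicographic order: first the empty string, then the $|\Sigma|$ strings of length one in lexicographic order, then the $|\Sigma|^2$ strings of length two, and so on. This is a bijection, and both $\alpha$ and $\alpha^{-1}$ reduce to passing between a natural number and a (shifted) base-$|\Sigma|$ expansion, so both are effectively calculable; thus $\alpha$ is a finite map in the sense of Definition \ref{def:finite map}. Writing $\Sigma^* = (b_n)_{n\ge 0}$ via $b_n = \alpha(n)$, the set $\{\langle n,m\rangle \mid b_n \preceq_C b_m\}$ is in fact decidable: on input $\langle n,m\rangle$ one computes the finite strings $b_n, b_m$ and then checks in finitely many steps the two conditions $|b_n| \le |b_m|$ and $b_n(i) = b_m(i)$ for all $i \le |b_n|$ from \eqref{Cantor domain}. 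In particular it is recursively enumerable, so there is a computable $f\colon \mathbb{N}\to\mathbb{N}$ with $f(\mathbb{N}) = \{\langle n,m\rangle \mid b_n \preceq_C b_m\}$.

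It remains to check the last clause of Definition \ref{def:eff weak basis} for the directed sets $B_x$ above, as anticipated in the remark following that definition. Since $B_x$ is a chain, given $b_n, b_m \in B_x \setminus \{x\}$ one of the two, say $b_n$, satisfies $b_n \preceq_C b_m$; if $x \notin B_x$ (the case $x \in \Sigma^\omega$), Lemma \ref{always next} supplies some $b_p \in B_x$ with $b_m \prec b_p$, while if $x \in B_x$ (the case $x \in \Sigma^*$) we may simply take $b_p = x$. In either case $b_n, b_m \prec b_p$, whence $b_n \preceq_C b_p$ and $b_m \preceq_C b_p$, i.e. $\langle n,p\rangle, \langle m,p\rangle \in f(\mathbb{N})$. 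This establishes all the requirements of Definition \ref{def:eff weak basis}, so $\Sigma^*$ is an effective weak basis for $(\Sigma^\infty,\preceq_C)$. There is no genuinely hard step here; the only thing to be careful about is keeping the three constituent conditions of an effective weak basis separate and treating the $\Sigma^*$ and $\Sigma^\omega$ cases of $x$ uniformly through prefix-closures, together with the routine check that $\Sigma^\infty$ is a dcpo to begin with.
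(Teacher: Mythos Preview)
Your argument is correct and follows essentially the same route as the paper's proof: the length-lexicographic enumeration of $\Sigma^*$ as the finite map, together with the observation that the prefix relation on finite strings is decidable (hence recursively enumerable), which is exactly the stronger criterion invoked after Definition~\ref{def:eff weak basis}. You are slightly more explicit than the paper in two places---you verify that $\Sigma^\infty$ is a dcpo and you spell out the final clause of Definition~\ref{def:eff weak basis} via Lemma~\ref{always next} rather than relying on the remark---but these are refinements of the same argument, not a different approach.
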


\begin{proof}
It is easy to see $\Sigma^*=(b_n)_{n\geq0}$ is a countable weak basis for the Cantor domain, given that $\sqcup \{y \in \Sigma^*| y \preceq_C x\}=x$ for all $x \in \Sigma^\omega$. To conclude, we show $\{\langle n,m \rangle| b_n \preceq_C b_m\}$ is recursively enumerable. We need to construct, thus, a computable function $f:\mathbb{N} \to \mathbb{N}$ such that $f(\mathbb{N})=\{\langle n,m \rangle| b_n \preceq_C b_m\}$. In the following, we assume $\Sigma=\{0,1\}$, although we could consider any other alphabet analogously. We begin constructing a finite map $\alpha: \mathbb{N} \to \Sigma^*$ which orders the strings in $\Sigma^*$ by length and, whenever they have the same length, interprets them as binary representations of the natural numbers and orders them accordingly. That is,
if $n=0,1$, then $\alpha(n)=n$. Otherwise, we begin with $m=2$ and continue in increasing order until we find some $m \geq 2$ such that $n< \sum_{i=1}^m 2^i$. We take then $n'= n-(1+ \sum_{i=1}^{m-1} 2^i)$ and the binary string representation of $n'$ as $\alpha(n)$, adding zeros to the left if needed until we have $m$ digits.
%For the sake of brevity, we associate to each finite strings with symbols in  $\Sigma_0 = \{0,1\}$ a finite string with symbols in $\Sigma_1=\{1,2\}$ simply mapping $0 \mapsto 1$ and $1 \mapsto 2$ and profit from the fact for the latter, also know as \emph{base-2 bijective numeration system} (cite?), there is a finite map $\alpha:\mathbb{N} \to \Sigma_1^*$ where $\Sigma_1^*$ represents the finite strings with symbols in $\Sigma_1=\{1,2\}$.
We construct now $f$ specifying its associated Turing machine $M$. Given input $n \in \mathbb{N}$, $M$ begins computing $m=\pi_1(n)$ and $p=\pi_2(n)$, since $\pi_1$ and $\pi_2$ are computable,
%such that $n=\langle m,p\rangle$ (notice the inverse of our pairing function $\langle \cdot,\cdot\rangle$ is also a finite map (cite?))
and then turns $m$ and $p$ into the finite strings in associated to them $\alpha(m)$ and $\alpha(p)$ following the recipe we exposed before. Once $\alpha(m)$ and $\alpha(p)$ are known, $M$ compares  their digits sequentially from right to left
%.  is of $\alpha(m)$ with the first of $\alpha(p)$ and continues in ascending order
until $\alpha(m)$ has no digits left. If certain digit does not coincide or $\alpha(p)$ has no more digits, $M$ outputs $0$ and terminates. Otherwise, it outputs $n$ and terminates. Note we use $0$ as output for the incorrect cases as $0=\langle 0,0\rangle$ belongs to $f(\mathbb{N})$.
\end{proof}

\begin{prop}
\label{count compu f}
If $P$ and $Q$ are dcpos with effective bases $B$ and $B'$, respectively,
%$B=(b_n)_{n\geq0}$ and $B'=(b'_n)_{n\geq0}$, respectively
then the set of $(B,B')$-computable functions $f:P \to Q$ is countable.
\end{prop}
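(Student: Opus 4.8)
The plan is to exhibit an injection from the set of $(B,B')$-computable functions $f:P\to Q$ into the collection of recursively enumerable subsets of $\mathbb N$, and then to recall that the latter collection is countable. Concretely, to each $(B,B')$-computable $f$ associate the set $S_f\coloneqq\{\langle n,m\rangle\mid b'_n\ll f(b_m)\}$, which is recursively enumerable by the very definition of computable function. If I can show $f\mapsto S_f$ is one-to-one, then the computable functions inject into a countable set and the proposition follows.

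The core of the argument is therefore the injectivity of $f\mapsto S_f$, i.e.\ that a continuous $f$ can be reconstructed from $S_f$. First I would use that $B'=(b'_n)_{n\geq0}$ is a basis of $Q$: for each fixed $m$, the set $\twoheaddownarrow f(b_m)\cap B'$ is directed with supremum $f(b_m)$ by \cite[Proposition 2.2.4]{abramsky1994domain}, and this set is precisely $\{b'_n\mid\langle n,m\rangle\in S_f\}$. Hence $S_f$ determines $f(b_m)$ for every $m$, that is, $S_f$ determines $f|_B$. Second, since $B$ is in particular a weak basis and $f$ is continuous, $f$ is determined by $f|_B$: for any $x\in P$ choose a directed $B_x\subseteq B$ with $\sqcup B_x=x$ and use $f(x)=f(\sqcup B_x)=\sqcup f(B_x)$, exactly as recorded after the definition of continuity in Section \ref{uniform compu}. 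Composing the two observations, if $(B,B')$-computable functions $f,g$ have $S_f=S_g$, then $f(b_m)=g(b_m)$ for all $m$, and therefore $f=g$.

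It then remains to recall that there are only countably many recursively enumerable subsets of $\mathbb N$: each nonempty one is the range of some computable function, there are countably many Turing machines, and adjoining $\emptyset$ leaves the total countable \cite{rogers1987theory}. Together with the injectivity of $f\mapsto S_f$ this concludes the proof. I do not anticipate a genuine obstacle; the only point requiring a word of care is that $f(x)=\sqcup f(B_x)$ does not depend on the chosen directed $B_x$, which is immediate from continuity together with antisymmetry of $\preceq$, and is the same reasoning already used in Section \ref{uniform compu}. No ingredients beyond the basis property of $B'$, continuity of $f$, and the countability of the recursively enumerable sets are needed.
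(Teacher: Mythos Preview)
Your proposal is correct and follows essentially the same route as the paper's proof: assign to each computable $f$ the recursively enumerable set $S_f=\{\langle n,m\rangle\mid b'_n\ll f(b_m)\}$, recover $f(b_m)=\sqcup(\twoheaddownarrow f(b_m)\cap B')$ from $S_f$ via the basis property of $B'$, extend to all of $P$ by continuity of $f$, and conclude using the countability of recursively enumerable sets. The paper phrases the target of the injection as the set of computable functions $\mathbb{N}\to\mathbb{N}$ rather than the r.e.\ subsets of $\mathbb{N}$, but the injectivity argument only uses the range $f_g(\mathbb{N})$, so the two presentations are the same in substance.
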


\begin{proof}
Let $B=(b_n)_{n\geq0}$ and $B'=(b'_n)_{n\geq0}$ be, respectively, enumerations of these bases of $P$ and $Q$. Define the sets $C \coloneqq \{f:P \to Q| f \text{ computable}\}$ and $C_{\mathbb{N}} \coloneqq \{g: \mathbb{N} \to \mathbb{N}| g \text{ computable}\}$ and
consider the map 
 %\begin{equation}
 %\phi: \{f:P \to Q| f \text{ computable}\} \to \{g: \mathbb{N} \to \mathbb{N}| g \text{ computable}\},
 %\end{equation}
 \begin{align*}
\phi: C&\rightarrow C_\mathbb{N}\\
f&\mapsto f_g,
\end{align*}
 %$f \mapsto f_g$
 where $f_g(\mathbb{N}) \coloneqq \{ \langle n,m \rangle| b'_n \ll f(b_m)\}$ $\forall f \in C$. We will show $\phi$ is injective, which implies
 $|C| \leq |C_\mathbb{N}| \leq |\mathbb{N}|$, where the last inequality is a well-known fact about Turing machines \cite{rogers1987theory}.
 Take, thus, $f,f' \in C$ and
 assume we have $f_g=f'_g$. By definition, we have $b'_n \ll f(b_m)$ if and only if $b'_n \ll f'(b_m)$ $\forall m,n \in \mathbb{N}$ and, as a result,
 \begin{equation*}
     f(b_m)= \sqcup (\twoheaddownarrow f(b_m) \cap B') = \sqcup (\twoheaddownarrow f'(b_m) \cap B')= f'(b_m)
 \end{equation*}
 $\forall m \in \mathbb{N}$, where we applied \cite[Proposition 2.2.4]{abramsky1994domain}. Take now some $x \in P/B$. Since we have $x = \sqcup \{b_n| b_n \ll x\}$ and $f$ is continuous in $\sigma(P)$ by definition, we conclude
 \begin{equation*}
     f(x) = \sqcup \{f(b_n) | b_n \ll x\} = \sqcup \{f'(b_n)|b_n \ll x\} = f'(x)  
 \end{equation*}
 $\forall x \in P/B$. As a result, we have $f=f'$ whenever $f_g=f'_g$ and, thus, $\phi$ is injective.
\end{proof}

\subsection{Maps inducing the Scott topology everywhere}
\label{inducing}

\begin{comment}
\begin{lem}
If $\mu:P \rightarrow [0,\infty)^{op}$ induces the Scott topology everywhere then $\mu$ is strictly monotonic.
\end{lem}

\begin{proof}
Consider $x,y \in P$ such that $x \preceq y$. Since $P$ is a partial oder, it is sufficient to show $x \prec y$ implies $\mu(x) > \mu(y)$. $\sigma(P)$ is $T_0$ \cite{abramsky1994domain} and, since for any $O \in \sigma(P)$ $x \in O$ implies $y \in O$, then there exists $O \in \sigma(P)$ such that $y \in O$ and $x \notin O$. Since $\mu$ induces the Scott topology everywhere in $P$ there exists $\varepsilon >0$ such that $\mu_{\varepsilon}(y) \subseteq O$. Thus, $x \notin \mu_{\varepsilon}(y)$ and, since $x \prec y$, $\mu(x) \geq \varepsilon > \mu(y)$. In conclusion, $\mu$ is strictly monotonic. Notice this is a refinement of Lemma 2.2.5 in \cite{martin2000foundation}.
\end{proof}
\end{comment}

\begin{comment}
We denote by $[0,\infty)^{op}$ the non-negative real numbers with the reverse order $x \leq_{op} y$ if and only if $x \geq y$. We consider on $[0,\infty)^{op}$ the Scott topology. Given a map $\mu:P \rightarrow [0,\infty)^{op}$ we define
\begin{equation}
    \mu_\varepsilon(x) \coloneqq \{ y \in P| y \preceq x \text{ and } \mu(y) < \varepsilon\}.
\end{equation}
We say a monotone map $\mu$ \emph{induces the Scott topology everywhere in $P$} if $\forall O \in \sigma(P)$ $\forall x \in P$ if $x \in O$ there exists some $\varepsilon >0$ such that $x \in \mu_{\varepsilon}(x) \subseteq O$ \cite{martin2000foundation}.
\end{comment}

In order to follow the definition of complexity on the real numbers in \cite{ko1998polynomial}, we used a numerical representation of precision in dcpos
%define complexity both for elements and functions between domains
in both Definitions \ref{def:element complex} and \ref{def:function complex}. We took as such maps inducing the Scott topology everywhere. We explore them briefly in this section.

We begin, in Proposition \ref{existence map inducing}, showing such maps exist wherever the uniform order-theoretic approach can be applied,
%via an ordered structure is defined,
that is, for $w$-continuous dcpos.

\begin{prop}
\label{existence map inducing}
The following statements hold:
\begin{enumerate}[label=(\roman*)]
\item \cite{martin2000foundation} If $P$ is a dcpo whose Scott topology is second countable, then $P$ has a continuous map that induces the Scott topology everywhere. In partiuclar, the same conclusion holds for $w$-continuous dcpos.
\item There are continuous dcpos $P$ where continuous maps that induce the Scott topology everywhere exist and countable bases for $\sigma(P)$ do not.
\end{enumerate}
\end{prop}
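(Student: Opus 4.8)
The plan is to dispatch the two parts separately: (i) is essentially a citation, while (ii) requires exhibiting an example. For (i), I would appeal to \cite[Theorem 2.5.1]{martin2000foundation}, which supplies a continuous map inducing the Scott topology everywhere on any dcpo whose Scott topology has a countable basis; the ``in particular'' clause then follows at once because, as noted above, $(\twoheaduparrow b)_{b\in B}$ is a topological basis of $\sigma(P)$ whenever $B$ is a countable basis of $P$, so $\omega$-continuity forces second countability of $\sigma(P)$. For a self-contained argument one can fix a countable basis $(O_n)_{n\geq0}$ of $\sigma(P)$ and set $\mu(x)\coloneqq\sum_{n:\,x\notin O_n}2^{-n}$: monotonicity into $[0,\infty)^{op}$ is immediate from upper-closedness of the $O_n$; Scott-continuity follows by truncating the series at some $N$ with $\sum_{n>N}2^{-n}<\delta$ and using that for a directed $A$ there is $a^*\in A$ lying in every $O_n$ with $n\leq N$ that contains $\sqcup A$, so that $\mu(a^*)<\mu(\sqcup A)+\delta$; and, given $x\in O\in\sigma(P)$, picking $n$ with $x\in O_n\subseteq O$ and taking $\varepsilon\coloneqq\mu(x)+2^{-n}$ yields $x\in\mu_\varepsilon(x)\subseteq O_n\subseteq O$.

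For (ii), I would take the flat domain over an uncountable set. Let $S$ be uncountable and put $P\coloneqq S\cup\{\perp\}$ with $\perp\prec s$ for every $s\in S$ and no other strict relations. Then $x\ll x$ for every $x\in P$, and $\twoheaddownarrow x$ is directed with supremum $x$, so $B\coloneqq P$ is a basis and $P$ is a continuous dcpo. One checks directly that $\sigma(P)=\{P\}\cup\mathcal{P}(S)$, so any topological basis of $\sigma(P)$ must contain the singleton $\{s\}$ for each $s\in S$ and is therefore uncountable; hence $\sigma(P)$ admits no countable basis. Finally, $\mu:P\to[0,\infty)^{op}$ given by $\mu(\perp)\coloneqq1$ and $\mu(s)\coloneqq0$ for $s\in S$ is monotone, is Scott-continuous because the only directed sets are singletons and the two-element chains $\{\perp,s\}$ (on which $\mu(\sqcup\{\perp,s\})=\mu(s)=0=\sqcup\mu(\{\perp,s\})$), and induces the Scott topology everywhere: for $x=\perp$ the only open superset is $P$ and $\varepsilon\coloneqq2$ gives $\perp\in\mu_2(\perp)=\{\perp\}\subseteq P$, while for $x=s$ one has $\{s\}\subseteq O$ for every open $O\ni s$ and $\varepsilon\coloneqq\frac12$ gives $s\in\mu_{1/2}(s)=\{s\}$.

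The one point that needs care is the choice of $\varepsilon$ in the direct proof of (i): the naive guess $\varepsilon=2^{-n}$ forces $\mu_\varepsilon(x)\subseteq O_n$ but need not satisfy $x\in\mu_\varepsilon(x)$, since $\mu(x)$ can be arbitrarily large; letting $\varepsilon$ depend on $x$ via $\varepsilon=\mu(x)+2^{-n}$ simultaneously secures $x\in\mu_\varepsilon(x)$ and confines $\mu_\varepsilon(x)$ to $O_n$, because for $y\preceq x$ with $\mu(y)<\mu(x)+2^{-n}$ the value $2^{-n}$ cannot occur as a summand of $\mu(y)-\mu(x)=\sum_{m:\,y\notin O_m,\,x\in O_m}2^{-m}$, forcing $y\in O_n$. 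Part (ii) presents no genuine obstacle beyond verifying Scott-continuity of $\mu$, which collapses to the trivial computation above once one observes that directedness excludes any subset of $P$ meeting $S$ in two distinct points.
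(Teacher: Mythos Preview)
Your proof is correct. For (i) your self-contained construction is essentially the same as the paper's: the paper works on an $\omega$-continuous dcpo and sets $\mu(x)=1-\sum_{n\geq 0}2^{-(n+1)}\chi_n(x)$ where $\chi_n$ is the indicator of $\twoheaduparrow b_n$, which is just the complementary form of your $\sum_{n:\,x\notin O_n}2^{-n}$ specialised to the basic opens $O_n=\twoheaduparrow b_n$; the verification of monotonicity, continuity, and the $\mu_\varepsilon$-inclusion proceeds along the same lines in both. For (ii) you and the paper pick different witnesses: the paper takes $(\mathbb{R},\preceq)$ with the \emph{trivial} order (so all directed sets are singletons, $\sigma(P)$ is discrete, and $\mu(x)=e^x$ works), whereas you take the flat domain $S_\perp$ over an uncountable $S$ with $\mu(\perp)=1$, $\mu|_S=0$. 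Both examples are equally valid and equally elementary; yours has the minor advantage of carrying a bottom element (and hence looks more like the pointed dcpos used elsewhere in the paper), while the paper's example is marginally simpler in that Scott-continuity of $\mu$ is vacuous rather than requiring the short check on the two-element chains you carry out.
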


\begin{proof}
$(i)$ We show it for $\omega$-continuous dcpos, although the more general case of dcpos with a second countable Scott topology can be shown analogously. Given a countable basis $(b_n)_{n\geq0}$, define for all $n \geq 0$ $\chi_n(x)\coloneqq 1$ if $x \in \twoheaduparrow b_n$ and $\chi_n(x)\coloneqq 0$ otherwise. We will show $\mu(x) \coloneqq 1-\sum_{n\geq 0} 2^{-(n+1)} \chi_n(x)$ is a map inducing the Scott topology everywhere. We note first it is a monotone, as $b_n \ll x \preceq y$ implies $b_n \ll y$ for all $ n \geq 0$. If $x \in O \in \sigma(P)$, then there exists some $m \geq 0$ such that $x \in \twoheaduparrow b_m \subseteq O$, since $(\twoheaduparrow b_n)_{n\geq0}$ is a basis for $\sigma(P)$ \cite{abramsky1994domain}. Take, then, $\varepsilon \coloneqq 1-\sum_{n=0}^{m}2^{-(n+1)} \chi_n(x)$ and some $y \in P$ such that $y \preceq x$ and $\mu(y)<\varepsilon$. We want to prove $y \in O$. By definition, we have
\begin{equation*}
    \sum_{n\geq 0} 2^{-(n+1)} \chi_n(y) > \sum_{n=0}^{m}2^{-(n+1)} \chi_n(x).
    \end{equation*}
Since $y \preceq x$, we have, for all $n \geq 0$, $b_n \ll x$ whenever $b_n \ll y$. There is, thus, some $C \geq 0$ such that
    \begin{equation*}
    \sum_{n\geq m} 2^{-(n+1)} \chi_n(y)  > 2^{-(m+1)} + C.
\end{equation*}
To conclude, assume $y \not \in \twoheaduparrow b_m$. Then, we would have $\sum_{n\geq m+1} 2^{-(n+1)} \chi_n(y)  > 2^{-(m+1)}$, which is a contradiction since $\sum_{n\geq m+1} 2^{-(n+1)} \chi_n(y)  \leq  \sum_{n\geq m+1} 2^{-(n+1)}$ $=2^{-(m+1)}$. Thus, $y \in \twoheaduparrow b_m \subseteq O$. We obtain for each $x \in O \in \sigma(P)$ there exists some $\varepsilon >0$ such that $\mu_{\varepsilon}(x) \subseteq O$, that is, $\mu$ induces the Scott topology everywhere. To conclude, we ought to see $\mu$ is continuous. Since it is a monotone, we only need to show, given a directed set $D$, we have $\sqcup \mu(D)=\mu(\sqcup D)$. Assume, thus, there exists some $z \in [0,\infty)$ such that $\mu(d) \leq_{op} z <_{op} \mu(\sqcup D)$ for all $d \in D$. Then, there exists some $N\geq 0$ such that $z <_{op}1-\sum_{n=0}^N 2^{-(n+1)} \chi_n(x)$. Since there exits some $d_n \in D$ such that $d_n \in \twoheaduparrow b_n$ whenever $0 \leq n \leq N$, $D$ is directed and $\twoheaduparrow b_n$ is upper closed,
we can
follow \cite{hack2022relation} and
find some $d' \in D$ such that $d' \in \twoheaduparrow b_n$ for all $n$ such that $0 \leq n \leq N$. Hence, $z <_{op} \mu(d')$, contradicting the definition of $z$. As a results, if there exists some $z \in [0,\infty)$ such that $\mu(d) \leq_{op} z$ for all $d \in D$, then $z \leq_{op} \mu(\sqcup D)$ holds. Thus, $\mu$ is continuous.

$(ii)$ For the second statement, we simply take $P \coloneqq (\mathbb{R},\preceq)$, where $\preceq$ is the trivial partial order. Note $P$ is continuous, since $K(P)=P$ and $\sigma(P)$ is not second countable since $\{x\} \in \sigma(P)$ for all $x \in P$ and, thus, any basis of $\sigma(P)$ is uncountable. To conclude, consider the map $\mu: P \to [0,\infty)^{op}$, $x \mapsto e^x$ and note $\mu$ induces the Scott topology everywhere, since any map of $P$ is a monotone and, given some $x \in O \in \sigma(P)$, we can take any $\varepsilon$ such that $\varepsilon > \mu(x)$ and get $\mu_\varepsilon(x)=\{x\} \subseteq O$. Note, also, one can directly see $\mu$ is continuous. 
\end{proof}

Note the first statement in Proposition \ref{existence map inducing} is already in \cite{martin2000foundation}. We have, however, showed it in a direct fashion while, in order to do so there, either some extra machinery was needed \cite[Theorem 2.5.1]{martin2000foundation} or  composition of maps inducing the Scott topology for two different spaces was used \cite[Example 2.5.1 and Example 2.5.3] {martin2000foundation}.

A map inducing the Scott topology is simply a method which allows us to measure how many of the properties that define some $x$ are known at certain stage of a computation leading to $x$. We can see this in the proof of Proposition \ref{existence map inducing}, where $\mu(y)$ is simply a weighted sum of the presence of $y$ in a  countable family of sets that differentiate the elements in the dcpo. 
While we know, then, any space where we introduce uniform computation has functions which allow us to numerically measure precision, the one constructed in Proposition \ref{existence map inducing} is impractical as it relies on an infinite series. Given a specific $w$-continuous dcpo, finding out whether a convenient, that is, simple to evaluate, map induces the Scott topology everywhere is regarded as a difficult task \cite{martin2008technique}. To conclude, we show, in Proposition \ref{upper comp and measurement}, such a difficulty is absent in conditionally connected dcpos, where maps inducing the Scott topology everywhere coincide with strict monotones. Note a partial order $P$ is conditionally connected if, for all $x,y,z \in P$ such that $x,y \preceq z$, we either have $x \preceq y$ or $y \preceq x$
\cite{hack2022relation}
and a monotone $v:P \rightarrow \mathbb{R}$ is a \emph{strict monotone} if $x \prec y$ implies $v(x) < v(y)$ \cite{alcantud2016richter}. Moreover, we say $v$ is \emph{lower semicontinuous} provided $v((t,\infty))^{-1} \in \sigma(P)$ for all $t \in \mathbb{R}$. Note the Cantor domain is conditionally connected. 

\begin{prop}
\label{upper comp and measurement}
If $P$ is a conditionally connected dcpo, then there exists a (continuous) map $\mu:P \rightarrow [0,\infty)^{op}$ which induces the Scott topology everywhere in $P$ if and only if there exists a (lower semicontinuous) strict monotone $v:P \rightarrow \mathbb{R}$.
\end{prop}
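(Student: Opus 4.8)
The plan is to split the biconditional into three pieces: a direction that holds in every dcpo, a direction that is the real content (and the only one using conditional connectedness), and a purely order-theoretic transfer between $\mathbb{R}$-valued and $[0,\infty)^{op}$-valued maps. First I would record the easy half: a monotone $\mu:P\to[0,\infty)^{op}$ inducing the Scott topology everywhere automatically satisfies $x\prec y\Rightarrow\mu(x)>\mu(y)$. Indeed, given $x\prec y$, antisymmetry together with the order--topology correspondence \eqref{charac order by topo} yields a Scott-open $O$ with $y\in O$ and $x\notin O$; picking $\varepsilon>0$ with $y\in\mu_\varepsilon(y)\subseteq O$ gives $\mu(y)<\varepsilon$, while $x\preceq y$ and $x\notin\mu_\varepsilon(y)$ force $\mu(x)\ge\varepsilon>\mu(y)$. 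This uses nothing about conditional connectedness and is essentially the known fact that a map inducing the Scott topology is strict.

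The crux is the converse inside a conditionally connected dcpo: if $\mu:P\to[0,\infty)^{op}$ is monotone with $x\prec y\Rightarrow\mu(x)>\mu(y)$, then it induces the Scott topology everywhere. Fix $x\in O\in\sigma(P)$; since any $\varepsilon>\mu(x)$ already puts $x$ into $\mu_\varepsilon(x)$, it suffices to find such an $\varepsilon$ with $\mu_\varepsilon(x)\subseteq O$. Suppose no such $\varepsilon$ exists; then for $\varepsilon_n\coloneqq\mu(x)+1/n$ one can choose $y_n\preceq x$ with $\mu(y_n)<\varepsilon_n$ and $y_n\notin O$. Here conditional connectedness enters: any two elements below $x$ are comparable, so $D\coloneqq\{y_n\mid n\ge1\}$ is a chain, hence directed, and $s\coloneqq\sqcup D$ exists because $P$ is a dcpo. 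Monotonicity sandwiches $\mu(x)\le\mu(s)\le\mu(y_n)<\mu(x)+1/n$ for every $n$, so $\mu(s)=\mu(x)$; then strictness together with $s\preceq x$ forces $s=x$. But then $x=\sqcup D\in O$ with $D$ directed contradicts $D\cap O=\emptyset$ (Scott-openness), so the required $\varepsilon$ exists and $x\in\mu_\varepsilon(x)\subseteq O$.

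Finally I would assemble the two refinements at once by fixing the order-reversing maps $\psi:\mathbb{R}\to[0,\infty)$, $\psi(s)=e^{-s}$, and $\phi:[0,\infty)\to\mathbb{R}$, $\phi(t)=-t$. If $v:P\to\mathbb{R}$ is a strict monotone, then $\mu\coloneqq\psi\circ v$ is monotone into $[0,\infty)^{op}$ with $x\prec y\Rightarrow\mu(x)>\mu(y)$, so it induces the Scott topology everywhere by the previous paragraph; and if $v$ is moreover lower semicontinuous, then for $a>0$ we have $\mu^{-1}([0,a))=v^{-1}\big((-\ln a,\infty)\big)\in\sigma(P)$ (and $\mu^{-1}([0,a))=\emptyset$ for $a\le0$), so $\mu$ is Scott continuous. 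Conversely, if $\mu$ induces the Scott topology everywhere, the first paragraph gives that $v\coloneqq\phi\circ\mu=-\mu$ is a strict monotone, and if $\mu$ is Scott continuous then $v^{-1}\big((t,\infty)\big)=\mu^{-1}\big([0,-t)\big)$ (or $\emptyset$) is Scott-open, i.e.\ $v$ is lower semicontinuous. This yields both the unparenthesised and the parenthesised biconditional.

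I expect the main obstacle to be the second paragraph; everything else is either the short topological observation of the first paragraph or routine composition with a decreasing bijection. The delicate point there is to avoid tacitly assuming $\mu$ is continuous: the sandwich $\mu(x)\le\mu(s)\le\mu(y_n)$ uses only monotonicity, and conditional connectedness is exactly what turns the approximating family $\{y_n\}$ into a directed set whose supremum the dcpo structure then supplies — which is why this equivalence genuinely fails without the conditional-connectedness hypothesis.
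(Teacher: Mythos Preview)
Your proof is correct, and your argument for the key converse direction (a strict monotone $\mu$ on a conditionally connected dcpo induces the Scott topology everywhere) is genuinely different from the paper's.

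The paper proceeds constructively: it first invokes external structural results about conditionally connected dcpos from \cite{hack2022relation} (that such a dcpo automatically has a basis, and that its compact elements lie in $I(P)\cup\min(P)$), and then does a case split on whether $x$ is compact. In the non-compact case it picks a single $y\in O$ with $y\prec x$, sets $\varepsilon$ strictly between $\mu(x)$ and $\mu(y)$, and uses conditional connectedness to compare any $a\in\mu_\varepsilon(x)$ with $y$, forcing $y\prec a$ and hence $a\in O$. Your proof, by contrast, is a short argument by contradiction that uses nothing beyond the dcpo axioms, conditional connectedness, and Scott-openness of $O$: the failed $\varepsilon$'s yield a chain below $x$ whose supremum must equal $x$ by strictness, contradicting $D\cap O=\emptyset$. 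This is more self-contained --- it avoids importing the basis and $K(P)=I(P)\cup\min(P)$ facts --- at the cost of being non-constructive (it does not exhibit the $\varepsilon$ explicitly). The easy half and the $\mathbb{R}\leftrightarrow[0,\infty)^{op}$ transfer are essentially the same in both proofs; you simply pick concrete decreasing maps where the paper leaves them abstract.
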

\begin{proof}
We begin showing any monotone map $\mu:P \rightarrow [0,\infty)^{op}$ induces the Scott topology everywhere in $P$ if and only if $x \prec y$ implies $\mu(x) <_{op} \mu(y)$. Note the fact 
that maps inducing the Scott topology everywhere $\mu:P \rightarrow [0,\infty)^{op}$ fulfill that $x \prec y$ implies $\mu(x) <_{op} \mu(y)$ is already known \cite[Lemma 2.2.5]{martin2000foundation}. For the converse, take some $x \in P$ and $O \in \sigma(P)$ such that $x \in O$. Since $P$ has a basis,
by \cite[Proposition 14]{hack2022relation},
there exists some $y \in O$ such that $y \ll x$. If $x \in K(P)$, then $x \in I(P) \cup min(P)$
by \cite[Proposition 12]{hack2022relation}.
In case $x \in min(P)$, then we take some $\varepsilon > \mu(x) $ and we have finished, since for all $y \in P$ we have $y \preceq x$ if and only if $x=y$. If $x \in I(P)$, then there exists some $v_x \in P$, $v_x \prec x$, such that, whenever $y \prec x$ holds, then $y \preceq v_x$. Since $\mu$ is strictly monotonic, take some $\varepsilon >0$ such that $\mu(x) < \varepsilon < \mu(v_x)$. By construction, $\mu_{\varepsilon}(x) = \{x\} \subseteq O$, since any $y \prec x$ fulfills $y \preceq v_x$ and, by monotonicity, $\varepsilon<\mu(v_x) \leq \mu(y)$. If $x \notin K(P)$, then there exists some $y \in O$ such that $y \prec x$, since $P$ has a basis. Take, thus, some $\varepsilon >0$ such that $\mu(x) < \varepsilon < \mu(y)$ and some $a \in P$ such that $a \preceq x$ and $\mu(a) < \varepsilon$. By upper comparability, we have $\neg(a \bowtie y)$. If $a \preceq y$, then $\mu(a) \geq \mu(y)$ by monotonicity which leads to contradiction as $\mu(a) < \varepsilon < \mu(y)$. Thus, $y \prec a$ and $a \in O$, since $O$ is upper closed by definition. In summary, $\mu$ induces the Scott topology everywhere in $P$.

To conclude, we need to take care of continuity and translate strict monotonicity from $\mu$ to some $v:P \to \mathbb{R}$. First, note on $[0,\infty)^{op}$ the Scott topology consists on sets of the form $[0,a)$ with $a>0$. Given some $v:P \rightarrow \mathbb{R}$, we can take some strictly decreasing function $f:\mathbb{R} \rightarrow [0, \infty)$ and consider $\mu:=f \circ v$, where $\mu^{-1}\big([0,a)\big)=v^{-1}\big((f^{-1}(a),\infty)\big) \in \sigma(P)$, implying $\mu$ is continuous. Since $f$ is strictly decreasing, the monotonicity properties of $v$ for $\leq$ are carried to $\mu$ for $\leq_{op}$. Analogously, given $\mu:P \rightarrow [0,\infty)^{op}$, we can find some strictly decreasing function $g:[0,\infty) \rightarrow \mathbb{R}$  and take $v:= g \circ \mu$, where $v^{-1}\big((t,\infty)\big) = \mu^{-1}\big([0, g^{-1}(t)\big) \in \sigma(P)$, implying $v$ is lower semicontinuous. Since $g$ is strictly decreasing, the monotonicity properties of $\mu$ for $\leq_{op}$ are carried to $v$ for $\leq$. 
\end{proof}

As an example where this can be applied, we return to the Cantor domain. We can consider the following map for $\Sigma^\infty$ \cite{martin2000foundation}
\begin{align*}
\ell:\Sigma^\infty&\rightarrow [0, \infty)^{op}\\
s&\mapsto 2^{-|s|}
\end{align*}
where $|\cdot|: \Sigma^\infty \rightarrow \mathbb{N} \cup \{ \infty\}$ assigns to every string its length. Since $\ell$ is clearly strictly monotonic and $\Sigma^\infty$ is conditionally connected, $\ell$ induces the Scott topology everywhere. Note we have avoided showing for each  $s \in \Sigma^\infty$ and $O \in \sigma(\Sigma^\infty)$ such that $s \in O$ there exists some $\varepsilon >0$ such that $\ell_\varepsilon(s) \subseteq O$. While in this particular instance it is easy to do it, in other instances it is not \cite{martin2008technique}.

\begin{comment}
\begin{coro}
\label{implying function inducsin scott topo}
Given a dcpo $P$, there exists a continuous map $\mu:P \rightarrow [0, \infty)^{op}$ which induces the Scott topology everywhere in $P$ if $(ii)$ $P$ is continuous, has a finite lower semicontinuous Richter-Peleg representation and $K(P)$ is countable.
\end{coro}

\begin{proof}
From \cite[Theorem 2.5.1]{martin2000foundation} we know a $w$-continuous dcpo has a continuous map $\mu:P \rightarrow [0, \infty)^{op}$ which induces the Scott topology. Under $(i)$ or $(ii)$ $P$ is $w$-continuous by Theorem \ref{thm 2} and Proposition \ref{3 condi for equiv} respectively.
\end{proof}
\end{comment}

\end{appendix}

\newpage
\bibliographystyle{plain}
\bibliography{main}

\end{document}